\documentclass[12pt,reqno]{amsart}
\usepackage{amsmath}
\usepackage{amssymb}
\usepackage{verbatim}
\usepackage{mathrsfs}
\usepackage{graphicx} 
\usepackage[usenames,dvipsnames]{pstricks}
\usepackage{pst-plot,pstricks-add}
\usepackage{wavelet}
\usepackage{subcaption}

\usepackage[top=0.6in,bottom=0.5in,left=0.7in,right=0.7in]{geometry}

\newtheorem{lemma}{Lemma}
\newtheorem{theorem}[lemma]{Theorem}
\newtheorem{cor}[lemma]{Corollary}
\newtheorem{example}{Example}

\numberwithin{equation}{section}

\begin{document}

\title[Compactly Supported Multivariate Quasi-tight Framelets]{Quasi-tight Framelets
with Directionality or High Vanishing Moments Derived from Arbitrary Refinable Functions }

\author{Chenzhe Diao and Bin Han}

\address{Department of Mathematical and Statistical Sciences,
University of Alberta, Edmonton,\quad Alberta, Canada T6G 2G1.
\quad {\tt diao@ualberta.ca},\quad {\tt bhan@ualberta.ca}
}

\thanks{
Research was supported in part by
Natural Sciences and Engineering Research Council of Canada (NSERC Canada) under Grant RGP 228051.
}

\makeatletter \@addtoreset{equation}{section} \makeatother
\begin{abstract}
Construction of multivariate tight framelets is known to be a challenging problem because it is linked to the difficult problem on sum of squares of multivariate polynomials in real algebraic geometry.
Multivariate dual framelets with vanishing moments generalize tight framelets and are not easy to be constructed either, since their construction is related to
syzygy modules and factorization of multivariate polynomials. On the other hand, compactly supported multivariate framelets with directionality or high vanishing moments are of interest and importance in both theory and applications. In this paper we introduce the notion of a quasi-tight framelet, which is a dual framelet, but behaves almost like a tight framelet. Let $\phi\in \dLp{2}$ be an arbitrary compactly supported $\dm$-refinable function with a general dilation matrix $\dm$ and $\wh{\phi}(0)=1$ such that its underlying low-pass filter satisfies the basic sum rule.
We first constructively prove by a step-by-step algorithm that we can always easily derive from the arbitrary $\dm$-refinable function $\phi$ a directional compactly supported quasi-tight $\dm$-framelet in $\dLp{2}$ associated with a directional quasi-tight $\dm$-framelet filter bank, each of whose high-pass filters has only two nonzero coefficients with opposite signs. If in addition all the coefficients of its low-pass filter are nonnegative, such a quasi-tight $\dm$-framelet becomes a directional tight $\dm$-framelet in $\dLp{2}$.
Furthermore, we show by a constructive algorithm that we can always derive from the arbitrary $\dm$-refinable function $\phi$ a compactly supported quasi-tight $\dm$-framelet in $\dLp{2}$ with the highest possible order of vanishing moments. We shall also present a result on quasi-tight framelets whose associated high-pass filters are purely differencing filters with the highest order of vanishing moments.
Several examples will be provided to illustrate our main theoretical results and algorithms in this paper.
\end{abstract}

\keywords{Directional framelets, quasi-tight framelets, tight framelets, quasi-tight framelet filter banks, vanishing moments, sum rules, dilation matrix, sum of Hermitian squares, generalized matrix spectral factorization}

\subjclass[2010]{42C40, 42C15, 41A15, 65D07} \maketitle

\pagenumbering{arabic}


\section{Introduction and Motivations}

Separable multivariate wavelets and framelets with a diagonal dilation matrix can be trivially constructed through tensor product from one-dimensional wavelets and framelets. However, such separable wavelets and framelets are known to give preference to the axis coordinate directions (e.g., see \cite{kv92,sbk05}) and they are only a very special family of multivariate wavelets and framelets. It is important but often much challenging to study nonseparable/general multivariate wavelets and framelets in both theory and applications. Currently, there is a growing interest in wavelet analysis on studying and constructing (nonseparable) multivariate wavelets and framelets. There exist a huge amount of literature on wavelets, framelets and their many impressive applications, to only mention a tiny portion of them (in particular, multivariate framelets that are closely related to this paper), e.g., see
\cite{cpss13,cpss15,ch01,ehl07,eh08,fjs16,han03,han14,hjsz18,hlz17,hz14,hz15,hr08,js15,kps16book,ls06,rs98,sz16,sko09} and many references therein.
However, construction of multivariate wavelets and framelets are widely known as a challenging problem in the literature. In this paper, we mainly concentrate on multivariate framelets derived from an arbitrarily given refinable function. To explain our motivations of this paper, let us first recall some basic definitions and concepts.

For a function $f: \dR\rightarrow \C$ and a $d\times d$ real-valued matrix $U$, throughout this paper we shall adopt the following notation:
\be \label{f:ds}
f_{U;k}(x):=|\det(U)|^{1/2} f(Ux-k),\qquad x,k\in \dR.
\ee
A $d\times d$ matrix $\dm$ is called \emph{a dilation matrix} if it is an integer matrix such that all its eigenvalues are greater than one in modulus.
For $\phi,\psi^1,\ldots,\psi^s\in \dLp{2}$, we say that $\{\phi;\psi^1,\ldots,\psi^s\}$ is \emph{an $\dm$-framelet} in $\dLp{2}$ if there exist positive constants $C_1$ and $C_2$ such that
\be \label{framelet}
C_1 \|f\|_{\dLp{2}}^2\le
\sum_{k\in \dZ} |\la f, \phi(\cdot-k)\ra|^2+
\sum_{j=0}^\infty \sum_{\ell=1}^s \sum_{k\in \dZ} |\la f, \psi^\ell_{\dm^j;k}\ra|^2\le C_2 \|f\|_{\dLp{2}}^2,\qquad
\forall\, f\in \dLp{2}.
\ee
If \eqref{framelet} holds with $C_1=C_2=1$, then
$\{\phi;\psi^1,\ldots,\psi^s\}$ is called \emph{a (normalized) tight $\dm$-framelet} in $\dLp{2}$.
Let $\tilde{\phi},\tilde{\psi}^1,\ldots,\tilde{\psi}^s\in \dLp{2}$. We say that $(\{\tilde{\phi};\tilde{\psi}^1,\ldots,\tilde{\psi}^s\},\{\phi;\psi^1,\ldots,\psi^s\})$
is \emph{a dual $\dm$-framelet} in $\dLp{2}$ if both $\{\tilde{\phi};\tilde{\psi}^1,\ldots,\tilde{\psi}^s\}$ and $\{\phi;\psi^1,\ldots,\psi^s\}$ are $\dm$-framelets in $\dLp{2}$ such that
\be \label{df}
\la f,g\ra=\sum_{k\in \dZ} \la f, \tilde{\phi}(\cdot-k)\ra\la \phi(\cdot-k),g\ra+
\sum_{j=0}^\infty \sum_{\ell=1}^s \sum_{k\in \dZ}
\la f, \tilde{\psi}^\ell_{\dm^j;k}\ra \la \psi^\ell_{\dm^j;k},g\ra,\qquad \forall\, f,g\in \dLp{2}
\ee
with the above series converging absolutely. It follows directly from \eqref{df} that every function $f\in \dLp{2}$ has the following multiscale framelet representation:
\be \label{df:repr}
f=\sum_{k\in \dZ} \la f, \tilde{\phi}(\cdot-k)\ra \phi(\cdot-k)+
\sum_{j=0}^\infty \sum_{\ell=1}^s \sum_{k\in \dZ}
\la f, \tilde{\psi}^\ell_{\dm^j;k}\ra \psi^\ell_{\dm^j;k}
\ee
with the series converging unconditionally in $\dLp{2}$.

For an integrable function $f\in \dLp{1}$, its Fourier transform $\wh{f}$ in this paper is defined to be $\wh{f}(\xi):=\int_{\dR} f(x) e^{-ix\cdot \xi} dx$ for $\xi\in \dR$.
The Fourier transform can be naturally extended to square integrable functions in $\dLp{2}$ and tempered distributions.
A dual $\dm$-framelet $(\{\tilde{\phi};\tilde{\psi}^1,\ldots,\tilde{\psi}^s\},\{\phi;\psi^1,\ldots,\psi^s\})$ in $\dLp{2}$ is often constructed from some special $\dm$-refinable functions $\phi$ and $\tilde{\phi}$ through the refinable structure
\be \label{refstr}
\wh{\phi}(\dm^\tp \xi)=\wh{a}(\xi)\wh{\phi}(\xi),\quad
\wh{\psi^\ell}(\dm^\tp\xi)=\wh{b_\ell}(\xi)\wh{\phi}(\xi),\qquad a.e.\, \xi\in \dR, \ell=1,\ldots,s
\ee
and
\be \label{refstr:tilde}
\wh{\tilde{\phi}}(\dm^\tp \xi)=\wh{\tilde{a}}(\xi)\wh{\tilde{\phi}}(\xi),\quad
\wh{\tilde{\psi}^\ell}(\dm^\tp\xi)=\wh{\tilde{b}_\ell}(\xi)
\wh{\tilde{\phi}}(\xi),\qquad a.e.\, \xi\in \dR, \ell=1,\ldots,s
\ee
for some $2\pi\dZ$-periodic measurable functions $\wh{a},\wh{b_1},\ldots,\wh{b_s},\wh{\tilde{a}},
\wh{\tilde{b}_1},\ldots, \wh{\tilde{b}_s}$ on $\dR$.
A function $\phi$ satisfying the first identity in \eqref{refstr}, i.e., $\wh{\phi}(\dm^\tp\xi)=\wh{a}(\xi)\wh{\phi}(\xi)$, is called \emph{an $\dm$-refinable function} with the (refinement) filter/mask $a$.
In fact, as shown in \cite[Theorem~4.5.4]{hanbook}, all tight framelets $\{\phi;\psi^1,\ldots,\psi^s\}$ must be derived from a refinable function $\phi$ through the refinable structure in \eqref{refstr}.

By $\dlp{0}$ we denote the set of all finitely supported sequences/filters $a=\{a(k)\}_{k\in \dZ}: \dZ\rightarrow \C$ on $\dZ$. For a filter $a\in \dlp{0}$,
its support is $\supp(a):=\{k\in \dZ \setsp a(k)\ne 0\}$ and
its Fourier series (or symbol) is defined to be $\wh{a}(\xi):=\sum_{k\in \dZ} a(k)e^{-ik\cdot\xi}$ for $\xi\in \dR$, which is a $2\pi\dZ$-periodic trigonometric polynomial in $d$ variables. In particular, by $\td$ we denote \emph{the Dirac sequence} such that $\td(0)=1$ and $\td(k)=0$ for all $\dZ\bs\{0\}$. For $\gamma\in \dZ$, we also use the notation $\td_\gamma$ to stand for the sequence $\td(\cdot-\gamma)$, i.e., $\td_\gamma(\gamma)=1$ and $\td_\gamma(k)=0$ for all $k\in \dZ\bs \{\gamma\}$.
Note that $\wh{\td_\gamma}(\xi)=e^{-i\gamma\cdot \xi}$.

Let $a,b_1,\ldots,b_s,\tilde{a},\tilde{b}_1,\ldots,\tilde{b}_s\in \dlp{0}$ such that $\wh{a}(0)=\wh{\tilde{a}}(0)=1$. Define
\be \label{reffunc}
\wh{\phi}(\xi):=\prod_{j=1}^\infty \wh{a}((\dm^\tp)^{-j}\xi),\qquad
\xi\in \dR
\ee
and
similarly define $\wh{\tilde\phi}(\xi):=\prod_{j=1}^\infty \wh{\tilde{a}}((\dm^\tp)^{-j}\xi)$ for $\xi\in \dR$.
Then it is trivial to observe that the first identity in \eqref{refstr} is satisfied and both $\phi$ and $\tilde{\phi}$ are $\dm$-refinable functions/distributions. Define $\psi^1,\ldots,\psi^s$ as in \eqref{refstr} and $\tilde{\psi}^1,\ldots,\tilde{\psi}^s$ as in \eqref{refstr:tilde}. It is known (e.g., see
\cite[Theorems~7.1.6 and 7.1.7]{hanbook}, \cite[Theorem~2.3]{han03}, and \cite{han97,rs97}) that
$(\{\tilde{\phi};\tilde{\psi}^1,\ldots,\tilde{\psi}^s\},\{\phi;\psi^1,\ldots,\psi^s\})$
is a dual $\dm$-framelet in $\dLp{2}$ if and only if $\phi,\tilde{\phi}\in \dLp{2}$,
\be \label{vm:b}
\wh{b_1}(0)=\cdots=\wh{b_s}(0)=\wh{\tilde{b}_1}(0)=\cdots=\wh{\tilde{b}_s}(0)=0,
\ee
and $(\{\tilde{a};\tilde{b}_1,\ldots,\tilde{b}_s\},\{a;b_1,\ldots,b_s\})$ is \emph{a dual $\dm$-framelet filter bank}, i.e.,
\be \label{dffb:0}
\wh{\tilde{a}}(\xi)\ol{\wh{a}(\xi+2\pi \omega)}+
\wh{\tilde{b}_1}(\xi)\ol{\wh{b}_1(\xi+2\pi \omega)}
+\cdots+\wh{\tilde{b}_s}(\xi)\ol{\wh{b}_s(\xi+2\pi \omega)}=\td(\omega),\qquad \forall\, \omega\in \Omega_\dm, \xi\in \dR,
\ee
where $\Omega_\dm:=[(\dm^\tp)^{-1}\dZ]\cap [0,1)^d$.
A filter bank $\{a;b_1,\ldots,b_s\}$ is called \emph{a tight $\dm$-framelet filter bank} if $(\{a;b_1,\ldots,b_s\},\{a;b_1,\ldots,b_s\})$ is a dual $\dm$-framelet filter bank.
It is known in \cite[Theorem~7.1.8]{hanbook} that $\{\phi;\psi^1,\ldots,\psi^s\}$ is a tight $\dm$-framelet in $\dLp{2}$ if and only if $\{a;b_1,\ldots,b_s\}$ is a tight $\dm$-framelet filter bank.  Hence one does not need to check the necessary conditions $\phi\in \dLp{2}$ and $\wh{b_1}(0)=\cdots=\wh{b_s}(0)=0$ in advance for tight framelets.
Consequently, the construction of dual/tight framelets boils down to the construction of dual/tight framelet filter banks.

Define $\NN:=\N\cup\{0\}$.
For $\mu=(\mu_1,\ldots,\mu_d)^\tp\in \dNN$ and $x=(x_1,\ldots,x_d)^\tp \in \dR$, we define $|\mu|:=\mu_1+\cdots+\mu_d$,
$x^\mu:=x_1^{\mu_1}\cdots x_d^{\mu_d}$ and partial derivative operators $\partial^\mu:=\partial_1^{\mu_1}\cdots\partial_d^{\mu_d}$.
For $m\in \NN$ and smooth functions $f$ and $g$ on $\dR$, we shall simply use the big $\bo$ notation $f(\xi)=g(\xi)+\bo(\|\xi\|^m)$ as $\xi \to 0$ to stand for $\partial^\mu f(0)=\partial^\mu g(0)$ for all $\mu\in \dNN$ with $|\mu|<m$.
For a compactly supported function $\psi\in \dLp{2}$, we say that $\psi$ has \emph{order $m$ vanishing moments} if
\be \label{vm}
\int_{\dR} \psi(x) x^\mu dx=0,\quad \forall\, \mu\in \dNN, |\mu|<m
\quad \mbox{or equivalently},\quad
\wh{\psi}(\xi)=\bo(\|\xi\|^m),\quad \xi \to 0.
\ee
In particular, we define $\vmo(\psi):=m$ for the largest possible integer $m$ in \eqref{vm}. For a filter $b\in \dlp{0}$, we similarly define $\vmo(b)$ to be the largest possible integer $n\in \NN$ such that $\wh{b}(\xi)=\bo(\|\xi\|^n)$ as $\xi \to 0$. That is, the filter $b$ has the order $\vmo(b)$ vanishing moments.
If $\psi$ is derived from a function $\phi$ with $\wh{\phi}(0)\ne 0$ through $\wh{\psi}(\dm^\tp \xi)=\wh{b}(\xi)\wh{\phi}(\xi)$ for some $b\in \dlp{0}$, then it is trivial to see that $\vmo(\psi)=\vmo(b)$.
Note that the condition in \eqref{vm:b} is equivalent to saying that all the filters $b_1,\ldots,b_s,\tilde{b}_1,\ldots,\tilde{b}_s$ have order one vanishing moment (i.e., the basic vanishing moment).
The notion of vanishing moments plays the key role for the sparsity of a framelet representation in \eqref{df:repr}. Therefore, vanishing moments are one of the most desirable properties of
wavelets and framelets (see \cite{daubook}).
To effectively capture singularities such as edges in images in high-dimensional data and functions, another key desirable feature of multivariate framelets is directionality. Framelets with directionality are crucial in many applications such as image/video processing (see \cite{cd04,gkl06,hz14,hz15,shearlet,lcsht16,sbk05} and many references therein).

In the following, let us discuss the difficulties and challenges in constructing multivariate tight framelets and multivariate dual framelets with vanishing moments. To do so, let us first rewrite the equations in \eqref{dffb:0} for a dual $\dm$-framelet filter bank $(\{\tilde{a};\tilde{b}_1,\ldots,\tilde{b}_s\},\{a;b_1,\ldots,b_s\})$. For a matrix or a matrix function $A(\xi)$, we define $A^\star(\xi):=\ol{A(\xi)}^\tp$, the transpose of the complex conjugate of $A(\xi)$. Define
\be \label{ddm}
\ddm:=|\det(\dm)| \quad \mbox{and}\quad \{\omega_1,\ldots,\omega_{\ddm}\}:=\Omega_\dm:=
[(\dm^\tp)^{-1}\dZ]\cap [0,1)^d.
\ee
Define $b_0:=a$ and $\tilde{b}_0:=\tilde{a}$.
Then it is not difficult to observe that  \eqref{dffb:0} is equivalent to
\be \label{dffb1}
\sum_{\ell=0}^s \Big[\wh{\tilde{b}_\ell}(\xi+2\pi \omega_1),\ldots, \wh{\tilde{b}_\ell}(\xi+2\pi \omega_{\ddm})\Big]^\star
\Big[\wh{b_\ell}(\xi+2\pi \omega_1),\ldots, \wh{b_\ell}(\xi+2\pi \omega_{\ddm})\Big]
=I_{\ddm},
\ee
where $I_{\ddm}$ is the $\ddm\times \ddm$ identity matrix.
For $u\in \dlp{0}$ and $\gamma\in \dZ$, \emph{its $\gamma$-coset sequence $u^{[\gamma]}$} is defined to be $u^{[\gamma]}:=\{u(\gamma+\dm k)\}_{k\in \dZ}$. Define
\be \label{Gamma}
\{\gamma_1,\ldots,\gamma_{\ddm}\}:=\Gamma_{\dm}:=
[\dm[0,1)^d]\cap \dZ.
\ee
Then
$\wh{u}(\xi)=\sum_{\gamma\in \Gamma_\dm} e^{-i\gamma \cdot \xi}\wh{u^{[\gamma]}}(\dm^\tp \xi)$ and we have
\be \label{coset:U}
[\wh{u}(\xi+2\pi \omega_1),\ldots,\wh{u}(\xi+2\pi \omega_{\ddm})]=\Big[\wh{u^{[\gamma_1]}}(\dm^\tp\xi),
\ldots,\wh{u^{[\gamma_{\ddm}]}}(\dm^\tp \xi)\Big] E(\xi) U,
\ee
where
\be \label{EU}
E(\xi):=\diag(e^{-i\gamma_1 \cdot \xi},\ldots,e^{-i\gamma_{\ddm} \cdot \xi})\quad \mbox{and}\quad
U:=( e^{-i\gamma_j\cdot2\pi \omega_k})_{1\le j,k\le \ddm}.
\ee
Because $E(\xi) E^\star(\xi)=I_{\ddm}$ and $U U^\star=\ddm I_{\ddm}$ for all $\xi\in \dR$, it is now straightforward to
deduce that \eqref{dffb1} is equivalent to
\be \label{dffb}
\sum_{\ell=1}^s \Big[\wh{\tilde{b}^{[\gamma_1]}_\ell}(\xi),\ldots, \wh{\tilde{b}^{[\gamma_{\ddm}]}_\ell}(\xi)\Big]^\star
\Big[\wh{b^{[\gamma_1]}_\ell}(\xi),\ldots, \wh{b^{[\gamma_{\ddm}]}_\ell}(\xi)\Big]
=\ddm^{-1}\cN_{\tilde{a},a}(\xi)
\ee
with
\be \label{Nata}
\cN_{\tilde{a},a}(\xi):= I_{\ddm}-\ddm\Big[\wh{\tilde{a}^{[\gamma_1]}}(\xi),\ldots, \wh{\tilde{a}^{[\gamma_{\ddm}]}}(\xi)\Big]^\star
\Big[\wh{a^{[\gamma_1]}}(\xi),\ldots, \wh{a^{[\gamma_{\ddm}]}}(\xi)\Big].
\ee
In particular, a filter bank $\{a;b_1,\ldots,b_s\}$ is a tight $\dm$-framelet filter bank if and only if
\be \label{tffb}
\sum_{\ell=1}^s \Big[\wh{b^{[\gamma_1]}_\ell}(\xi),\ldots, \wh{b^{[\gamma_{\ddm}]}_\ell}(\xi)\Big]^\star
\Big[\wh{b^{[\gamma_1]}_\ell}(\xi),\ldots, \wh{b^{[\gamma_{\ddm}]}_\ell}(\xi)\Big]
=\ddm^{-1}\cN_{a}(\xi)\quad \mbox{with}\quad \cN_a(\xi):=\cN_{a,a}(\xi).
\ee
Using the simple fact that all the nonzero eigenvalues of $AB$ and $BA$ are the same, one concludes that $\cN_a$ has the eigenvalue
$1-\ddm \sum_{j=1}^{\ddm} |\wh{a^{[\gamma_j]}}(\xi)|^2$ and all its other eigenvalues are $1$. Hence, we have
$\det(\cN_a(\xi))=1-\ddm \sum_{j=1}^{\ddm} |\wh{a^{[\gamma_j]}}(\xi)|^2$. Taking the determinants on both sides of \eqref{tffb} and applying the Cauchy-Binet formula to the left-hand side of \eqref{tffb}, Charina et al. in \cite{cpss13} observed that
\be \label{sos}
\cA(\xi):=\det(\cN_a(\xi))=
1-\ddm \sum_{j=1}^{\ddm} |\wh{a^{[\gamma_j]}}(\xi)|^2=
\sum_{\ell=1}^{s_a} |\wh{u_\ell}(\xi)|^2,\qquad \forall\, \xi\in \dR,
\ee
for some finitely supported sequences $u_1,\ldots, u_{s_a}\in \dlp{0}$ with $s_a=\binom{s}{\ddm}$.
That is, the nonnegative $2\pi\dZ$-periodic trigonometric polynomial $\cA$ in \eqref{sos} can be written as a sum of Hermitian squares of $2\pi\dZ$-periodic trigonometric polynomials.
By \eqref{coset:U}, one can easily observe that $\cA(\dm^\tp \xi)=1-\sum_{\omega\in \Omega_\dm}|\wh{a}(\xi+2\pi \omega)|^2$.
Conversely, suppose that there exist $u_1, \ldots, u_{s_a}\in \dlp{0}$ for some integer $s_a$ such that $\cA$ can be written as a sum of Hermitian squares as in \eqref{sos}. Then it is known in Lai and St\"ockler \cite{ls06} that one can always construct a tight $\dm$-framelet filter bank $\{a;b_1,\ldots,b_s\}$ with $s:=s_a+\ddm$.
In dimension one, due to the Fej\'er-Riesz lemma, for a nonnegative $2\pi$-periodic trigonometric polynomial $\cA$ in \eqref{sos}, there always exists $u_1\in \lp{0}$ such that \eqref{sos} is satisfied with $s_a=1$. However,
as discussed in \cite{cpss13}, the problem on sums of Hermitian squares in \eqref{sos} is much more complicated in dimension higher than one and is known to be a challenging problem in real algebraic geometry. See \cite{cpss13,cpss15,ls06} for a detailed discussion on \eqref{sos} and its applications to the construction of multivariate tight framelet filter banks.
To the best of our knowledge, there is currently no known algorithm to efficiently compute $u_1,\ldots, u_{s_a}\in \dlp{0}$ in \eqref{sos}.
So far, all current known constructions of multivariate nonseparable tight framelets are developed for special low-pass filters,
for example, see \cite{cpss13,cpss15,ch01,fjs16,han03,han14,hjsz18,kps16book,ls06,rs98,sz16,sko09} and many references therein.
Despite recent progresses and enormous effort, construction of multivariate tight framelets still remains as a challenging problem and most constructed multivariate tight framelets in the literature lack some desirable properties such as directionality, vanishing moments and/or symmetry.

Though dual framelet filter banks offer flexibility over tight framelet filter banks, it is not easy to construct them either. To construct a dual $\dm$-framelet filter bank, one has to factorize the matrix $\cN_{\tilde{a},a}$ in \eqref{dffb} so that all the high-pass filters satisfy the basic vanishing moment in \eqref{vm:b}.
For a one-dimensional $2\pi$-periodic trigonometric polynomial $\cA(\xi)$, due to the fundamental theorem of algebra, if $\cA(\xi)$ has a zero with multiplicity $m\in \N$ at $\xi=0$, then one can always write
$\cA(\xi)=(1-e^{-i\xi})^m\cB(\xi)$ for some $2\pi$-periodic trigonometric polynomial $\cB$ with $\cB(0)\ne 0$. This factorization technique for separating out the special factor $(1-e^{-i\xi})^m$ is the key for constructing one-dimensional dual framelet filter banks with high vanishing moments (e.g., see \cite{dh04,han15} for details).
However, such a factorization technique is not available for dimensions higher than one; there are also no special multivariate trigonometric polynomials playing the role of $(1-e^{-i\xi})^m$ for us to generalize the construction of one-dimensional dual framelet filter banks to multiple dimensions.
In fact, most (or generic) multivariate trigonometric polynomials cannot be factorized into products of two nontrivial trigonometric polynomials for dimensions higher than one.
Consequently, from any given pair of low-pass filters, it is often difficult to construct a dual $\dm$-framelet filter bank satisfying the basic vanishing moment condition in \eqref{vm:b}.
For example, the constructions in \cite{ehl07,eh08} are linked to syzygy modules of multivariate Laurent polynomials in algebra and the constructed high-pass filters also have much larger supports than that of their associated low-pass filters.

To reduce the above mentioned difficulties for multivariate tight or dual framelets,
now let us introduce the notion of quasi-tight framelets and explain our motivations. For a given dual $\dm$-framelet $(\{\tilde{\phi};\tilde{\psi}^1,\ldots,\tilde{\psi}^s\},
\{\phi;\psi^1,\ldots,\psi^s\})$ in $\dLp{2}$, every function $f\in \dLp{2}$ has the representation in \eqref{df:repr}.
Each $\tilde{\psi}^\ell$ models some desired feature capturing certain key singularities such as edges or textures in images. For example, $\tilde{\psi}^\ell$ may behave like an edge or texture. If $f$ contains such (scaled and shifted) feature $\tilde{\psi}^\ell_{\dm^j;k}$ for some $1\le \ell\le s, j\in \NN$ and $k\in \dZ$, then the coefficient $\la f, \tilde{\psi}^\ell_{\dm^j;k}\ra$ has a large significant magnitude. Therefore, we can capture such desired feature in $f$ by observing a significant coefficient $\la f, \tilde{\psi}^\ell_{\dm^j;k}\ra$. However, we reconstruct $f$ through \eqref{df:repr} by using $\la f, \tilde{\psi}^\ell_{\dm^j;k}\ra\psi^\ell_{\dm^j;k}$. If $\psi^\ell$ is very similar/close to $\tilde{\psi}^\ell$, then we indeed are able to capture the desired feature $\tilde{\psi}^\ell_{\dm^j;k}$ in $f$. However, if this is not the case, then the representation in \eqref{df:repr}, which can exactly reconstruct $f$, do not make much sense to extract and represent features in $f$. This is probably the main reason that only dual framelets $(\{\tilde{\phi};\tilde{\psi}^1,\ldots,\tilde{\psi}^s\},
\{\phi;\psi^1,\ldots,\psi^s\})$ in $\dLp{2}$, with $\tilde{\psi}^\ell$ being similar to $\psi^\ell$ for all $\ell=1,\ldots,s$, are useful in applications. The ideal case that $\tilde{\psi}^\ell=\psi^\ell$ for all $\ell=1,\ldots,s$ leads to tight framelets. However, as we discussed before, construction of multivariate tight framelets is a challenging problem, not even to mention that most known constructed multivariate tight framelets lack directionality, which is a very much desired property of framelets for many applications such as image processing (e.g., see \cite{cd04,gkl06,han12,hz14,hz15,shearlet,sbk05} and many references therein).

Instead of requiring $\tilde{\psi}^\ell=\psi^\ell$ as in a tight framelet, as observed in \cite[Example~3.2.2]{hanbook},
we can easily achieve our objective by naturally considering $\tilde{\psi}^\ell=\eps_\ell \psi^\ell$ with $\eps_\ell\in \{-1,1\}$. This motivates us to introduce the notion of quasi-tight framelets. For $\phi,\psi^1,\ldots,\psi^s\in \dLp{2}$ and $\eps_1,\ldots,\eps_s\in \{-1,1\}$, we say that
$\{\phi;\psi^1,\ldots,\psi^s\}_{(\eps_1,\ldots,\eps_s)}$ is \emph{a quasi-tight $\dm$-framelet} in $\dLp{2}$ if
$(\{\phi;\eps_1 \psi^1,\ldots \eps_s\psi^s\},\{\phi;\psi^1,\ldots,\psi^s\})$ is a dual $\dm$-framelet in $\dLp{2}$.
Equivalently speaking, $\{\phi;\psi^1,\ldots,\psi^s\}_{(\eps_1,\ldots,\eps_s)}$ is a quasi-tight $\dm$-framelet in $\dLp{2}$
if and only if $\{\phi;\psi^1,\ldots,\psi^s\}$ is an $\dm$-framelet in $\dLp{2}$ satisfying \eqref{framelet} and
every function $f\in \dLp{2}$ has the following representation:
\be \label{qtf:repr}
f=\sum_{k\in \dZ} \la f, \phi(\cdot-k)\ra \phi(\cdot-k)+
\sum_{j=0}^\infty \sum_{\ell=1}^s \sum_{k\in \dZ}
\eps_\ell \la f, \psi^\ell_{\dm^j;k}\ra \psi^\ell_{\dm^j;k}
\ee
with the series converging unconditionally in $\dLp{2}$.
When $\eps_1=\cdots=\eps_s=1$, a quasi-tight $\dm$-framelet becomes a tight $\dm$-framelet. A quasi-tight framelet is often obtained from a quasi-tight framelet filter bank. For $a,b_1,\ldots,b_s\in \dlp{0}$ and $\eps_1,\ldots,\eps_s\in \{-1,1\}$, we say that $\{a;b_1,\ldots,b_s\}_{(\eps_1,\ldots,\eps_s)}$ is \emph{a quasi-tight $\dm$-framelet filter bank} if
%
\be \label{qtffb}
\ol{\wh{a}(\xi)}\wh{a}(\xi+2\pi \omega)+
\eps_1 \ol{\wh{b_1}(\xi)}\wh{b}_1(\xi+2\pi \omega)
+\cdots+\eps_s \ol{\wh{b_s}(\xi)}\wh{b}_s(\xi+2\pi \omega)=\td(\omega),\qquad \forall\, \omega\in \Omega_\dm, \xi\in \dR.
\ee
%
\cite[Example~3.2.2]{hanbook}
probably is the first  to observe an example of a quasi-tight $2$-framelet $\{\phi;\psi^1,\psi^2\}_{(-1,1)}$ and a quasi-tight $2$-framelet filter bank
$\{a;b_1,b_2\}_{(-1,1)}$, where
\[
a=\{-\tfrac{1}{16},\tfrac{1}{4},\tfrac{5}{8},\tfrac{1}{4},-\tfrac{1}{16}\}_{[-2,2]},
\quad
b_1=\{-\tfrac{1}{16},\tfrac{1}{4},-\tfrac{3}{8},\tfrac{1}{4},-\tfrac{1}{16}\}_{[-2,2]},
\quad
b_2=\{-\tfrac{\sqrt{2}}{4},\tfrac{\sqrt{2}}{2},-\tfrac{\sqrt{2}}{4}\}_{[0,2]},
\]
and $\wh{\phi}(\xi):=\prod_{j=1}^\infty \wh{a}(2^{-j}\xi)$, $\wh{\psi^1}(\xi):=\wh{b_1}(\xi/2)\wh{\phi}(\xi/2)$, and
$\wh{\psi^2}(\xi):=\wh{b_2}(\xi/2)\wh{\phi}(\xi/2)$
with $\phi,\psi^1,\psi^2\in \Lp{2}$.
The above example in \cite[Example~3.2.2]{hanbook}
was accidentally obtained by applying the general algorithm developed in \cite{han15} for constructing dual framelet filter banks to the above low-pass filter $a$.

Let $\{a;b_1,\ldots, b_s\}_{(\eps_1,\ldots,\eps_s)}$ be a quasi-tight $\dm$-framelet filter bank with $\wh{a}(0)=1$.
Let $m$ denote the smallest order of the vanishing moments among the high-pass filters, i.e., $m:=\min(\vmo(b_1),\ldots,\vmo(b_s))$.
For $\omega\in \Omega_{\dm}\bs\{0\}$, one can easily deduce from \eqref{qtffb} that
$\ol{\wh{a}(\xi)}\wh{a}(\xi+2\pi \omega)+\bo(\|\xi\|^{m})=0$ as $\xi \to 0$. Consequently, by $\wh{a}(0)=1$, the filter $a$ must satisfy \emph{order $m$ sum rules with respect to $\dm$} (e.g., see \cite{jia98} for discussion on sum rules), i.e.,
\be \label{sr}
\wh{a}(\xi+2\pi \omega)=\bo(\|\xi\|^m),\quad \xi \to 0, \forall\, \omega\in \Omega_\dm\bs\{0\}.
\ee
We define $\sr(a,\dm)$ to be the largest nonnegative integer $m$ satisfying \eqref{sr}. Since $\wh{a}(0)=1$, we now trivially see that \eqref{vm:b} implies $\sr(a,\dm)\ge 1$ and consequently, the filter $a$ must satisfy
\be \label{sr:one}
\sum_{k\in \dZ} a(\gamma+\dm k)=|\det(\dm)|^{-1},\qquad \forall\; \gamma\in \Gamma_\dm:=[\dm[0,1)^d]\cap \dZ.
\ee
One can also trivially deduce from \eqref{qtffb} with $\omega=0$ that $1-|\wh{a}(\xi)|^2=\bo(\|\xi\|^{2m})$ as $\xi \to 0$.
Consequently,
for a quasi-tight $\dm$-framelet filter bank $\{a;b_1,\ldots, b_s\}_{(\eps_1,\ldots,\eps_s)}$, we always have
\be \label{qtffb:vm}
\min(\vmo(b_1),\ldots,\vmo(b_s))\le
\min(\sr(a,\dm), \tfrac{1}{2}\vmo(u_a)) \quad \mbox{with}\quad \wh{u_a}(\xi):=1-|\wh{a}(\xi)|^2.
\ee
%

Though the idea of quasi-tight framelets is simple, we have the following two main results of this paper on multivariate quasi-tight framelets with directionality or high vanishing moments.

\begin{theorem}\label{thm:qtf}
Let $\dm$ be a $d\times d$ dilation matrix and let $a\in \dlp{0}$ be a finitely supported real-valued sequence on $\dZ$ satisfying the basic sum rule condition in \eqref{sr:one}.
Then there always exist finitely supported real-valued high-pass filters $b_1,\ldots,b_s\in \dlp{0}$ and $\eps_1,\ldots,\eps_s\in \{-1,1\}$ such that $\{a;b_1,\ldots, b_s\}_{(\eps_1,\ldots,\eps_s)}$ is a quasi-tight $\dm$-framelet filter bank and every high-pass filter $b_\ell$ takes the form $c_\ell(\td_{\alpha_\ell}-\td_{\beta_\ell})$
for some $c_\ell\in \R$ and $\alpha_\ell,\beta_\ell\in \supp(a)$
(hence $b_\ell$ naturally has directionality, basic vanishing moments and symmetry property) for all $\ell=1,\ldots,s$.
Moreover,
\begin{enumerate}
\item[(1)] if in addition $\phi\in \dLp{2}$,
then $\{\phi; \psi^1,\ldots,\psi^s\}_{(\eps_1,\ldots,\eps_s)}$ is a (directional) compactly supported quasi-tight $\dm$-framelet in $\dLp{2}$, where 
\begin{equation}\label{phi:psi}
\wh{\phi}(\xi):=\prod_{j=1}^\infty \wh{a}((\dm^\tp)^{-j}\xi)
\quad \mbox{and}\quad
\wh{\psi^\ell}(\dm^\tp \xi):=\wh{b_\ell}(\xi)\wh{\phi}(\xi),\qquad \xi\in \dR, \ell=1,\ldots,s;
\end{equation}

\item[(2)] if in addition the following condition holds:
\be \label{qtf:tf:cond}
\mbox{all the coefficients in}\;\; \ol{\wh{a^{[\gamma_j]}}(\xi)}\wh{a^{[\gamma_k]}}(\xi)\;
\;
\mbox{are nonnegative for all}\; j,k=1,\ldots, \ddm,
\ee
(for example, the above condition in \eqref{qtf:tf:cond} is satisfied if the filter $a$ has nonnegative coefficients.)
then we can take $\eps_1=\cdots=\eps_s=1$ and therefore, $\{a;b_1,\ldots,b_s\}$ is a (directional) tight $\dm$-framelet filter bank and $\{\phi;\psi^1,\ldots,\psi^s\}$ is a (directional) tight $\dm$-framelet in $\dLp{2}$.
\end{enumerate}
\end{theorem}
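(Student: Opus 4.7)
The plan is to convert the quasi-tight condition \eqref{qtffb} into the equivalent matrix identity \eqref{dffb} with $\tilde a=a$ and $\tilde b_\ell=\eps_\ell b_\ell$, which reads
\begin{equation*}
\sum_{\ell=1}^s \eps_\ell\, v_\ell(\xi)^\star v_\ell(\xi)\;=\;\ddm^{-1}\cN_a(\xi)\;=\;\ddm^{-1}I_{\ddm}-A(\xi)^\star A(\xi),
\end{equation*}
where $A(\xi):=[\wh{a^{[\gamma_1]}}(\xi),\ldots,\wh{a^{[\gamma_{\ddm}]}}(\xi)]$ and $v_\ell(\xi):=[\wh{b_\ell^{[\gamma_1]}}(\xi),\ldots,\wh{b_\ell^{[\gamma_{\ddm}]}}(\xi)]$. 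The basic sum rule \eqref{sr:one} yields $\wh{a^{[\gamma_j]}}(0)=\ddm^{-1}$ for every $j$, which is the single algebraic fact that will make the diagonal entries balance.

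The construction I propose is a direct enumeration. For each unordered pair $\{\alpha,\beta\}$ of distinct points of $\supp(a)$ with $a(\alpha)a(\beta)\neq 0$, introduce one high-pass filter
\begin{equation*}
b_\ell:=\sqrt{|a(\alpha)a(\beta)|}\,(\td_\alpha-\td_\beta),\qquad \eps_\ell:=\begin{cases}+1,&a(\alpha)a(\beta)>0,\\-1,&a(\alpha)a(\beta)<0.\end{cases}
\end{equation*}
Each $b_\ell$ is real with exactly two nonzero coefficients of opposite signs sitting at points of $\supp(a)$, so the structural claim of the theorem holds by construction, and only the matrix identity above needs verification.

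Writing $\alpha=\gamma_{[\alpha]}+\dm k_\alpha$, a cross-coset pair with $[\alpha]=j,\ [\beta]=k,\ j\neq k$ contributes $-a(\alpha)a(\beta)e^{i(k_\alpha-k_\beta)\xi}$ to the $(j,k)$-entry of $\sum_\ell \eps_\ell v_\ell^\star v_\ell$ and $a(\alpha)a(\beta)$ to each of the diagonal entries $(j,j),(k,k)$, while a within-coset pair contributes $a(\alpha)a(\beta)(2-e^{i(k_\alpha-k_\beta)\xi}-e^{-i(k_\alpha-k_\beta)\xi})$ to $(j,j)$ alone. Expanding $\overline{\wh{a^{[\gamma_j]}}(\xi)}\wh{a^{[\gamma_k]}}(\xi)=\sum_{\alpha,\beta}a(\alpha)a(\beta)e^{i(k_\alpha-k_\beta)\xi}$ identifies the off-diagonal sum with $-\overline{\wh{a^{[\gamma_j]}}}\wh{a^{[\gamma_k]}}$, and the sum rule collapses the cross-coset diagonal contribution to the constant $(\ddm-1)\ddm^{-2}$ and the within-coset contribution to $\ddm^{-2}-|\wh{a^{[\gamma_j]}}(\xi)|^2$, the two totalling the required $\ddm^{-1}-|\wh{a^{[\gamma_j]}}(\xi)|^2$. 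Part~(1) then follows directly from the filter-bank-to-framelet implication recalled above, once $\phi\in\dLp{2}$ is assumed.

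The main obstacle is part~(2): condition \eqref{qtf:tf:cond} is strictly weaker than $a(\alpha)a(\beta)\ge 0$ for every individual pair, so the pairwise enumeration above may still produce some $\eps_\ell=-1$. To force all $\eps_\ell=+1$ I would group pairs by shift: for each triple $(j,k,m)$ set $C^{(j,k)}_m:=\sum_{[\alpha]=j,\,[\beta]=k,\,k_\alpha-k_\beta=m}a(\alpha)a(\beta)\ge 0$, pick any representative pair $(\alpha^*,\beta^*)\in\supp(a)^2$ realizing shift $m$, and use the single filter $\sqrt{C^{(j,k)}_m}(\td_{\alpha^*}-\td_{\beta^*})$ with $\eps=+1$. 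The symmetry $C^{(j,j)}_m=C^{(j,j)}_{-m}$ together with the identity $\sum_m C^{(j,k)}_m=\ddm^{-2}$ shows that the same matrix identity continues to hold with every sign positive, yielding the tight framelet filter bank and hence the tight framelet in $\dLp{2}$ claimed in part~(2).
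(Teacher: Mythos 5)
Your proposal is correct and follows essentially the same route as the paper: both decompose $\ddm^{-1}\cN_a$ into signed rank-one terms generated by two-tap difference filters supported on $\supp(a)$ and use the basic sum rule \eqref{sr:one} to account for the residual constant diagonal, your per-pair enumeration with explicit entrywise verification being a reorganization of the paper's greedy term-by-term elimination (which aggregates all pairs $(\alpha,\beta)$ sharing a common shift into a single filter and closes with a row-sum argument showing the leftover diagonal vanishes). Your regrouping by shift in part~(2) --- needed because \eqref{qtf:tf:cond} constrains only the aggregated coefficients $C^{(j,k)}_m$, not the individual products $a(\alpha)a(\beta)$ --- is precisely the paper's construction, under which every sign is automatically $+1$.
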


Due to the special structure/construction, all the directional high-pass filters in Theorem~\ref{thm:qtf} have only order one vanishing moment. Due to the importance of vanishing moments, it is natural and important to ask whether one can construct a quasi-tight $\dm$-framelet filter bank achieving the highest possible order $\min(\sr(a,\dm),\frac{1}{2}\vmo(u_a))$ vanishing moments in \eqref{qtffb:vm} with $\wh{u_a}(\xi):=1-|\wh{a}(\xi)|^2$. This question is satisfactorily answered by the following result, for which we shall prove through a constructive algorithm.

\begin{theorem}\label{thm:qtf:vm}
Let $\dm$ be a $d\times d$ dilation matrix and let $a\in \dlp{0}$ be a finitely supported real-valued sequence on $\dZ$ satisfying the basic sum rule condition in \eqref{sr:one}.
Then there always exist finitely supported real-valued high-pass filters $b_1,\ldots,b_s\in \dlp{0}$ and $\eps_1,\ldots,\eps_s\in \{-1,1\}$ such that $\{a;b_1,\ldots, b_s\}_{(\eps_1,\ldots,\eps_s)}$ is a quasi-tight $\dm$-framelet filter bank and all the high-pass filters $b_1,\ldots,b_s$ have
order $m$ vanishing moments with $m:=\min(\sr(a,\dm),\frac{1}{2}\vmo(u_a))\ge 1$ and $\wh{u_a}(\xi):=1-|\wh{a}(\xi)|^2$. Define $\phi,\psi^1,\ldots,\psi^s$ as in \eqref{phi:psi}.
If $\phi\in \dLp{2}$, then $\{\phi; \psi^1,\ldots,\psi^s\}_{(\eps_1,\ldots,\eps_s)}$ is a compactly supported quasi-tight $\dm$-framelet in $\dLp{2}$ such that all $\psi^1,\ldots,\psi^s$ have at least order
$m$ vanishing moments.
\end{theorem}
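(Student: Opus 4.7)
The plan is to reduce Theorem~\ref{thm:qtf:vm} to a generalized matrix spectral factorization of the Hermitian $\ddm\times\ddm$ matrix $\cN_a$ of \eqref{Nata}, performed over the real-coefficient Laurent polynomial ring and carried out so that the vanishing-moment content encoded in the sum rule on $a$ is handed cleanly to every extracted high-pass filter. Setting $\tilde{a}=a$ and $\tilde{b}_\ell=\eps_\ell b_\ell$ in \eqref{dffb} recasts the quasi-tight filter bank condition \eqref{qtffb} as the problem of producing a signed Hermitian-square decomposition
\begin{equation*}
\sum_{\ell=1}^s \eps_\ell\,\vec{b}_\ell(\xi)^\star\vec{b}_\ell(\xi)=\ddm^{-1}\cN_a(\xi),\qquad \vec{b}_\ell(\xi):=\bigl[\wh{b_\ell^{[\gamma_1]}}(\xi),\ldots,\wh{b_\ell^{[\gamma_{\ddm}]}}(\xi)\bigr]
\end{equation*}
(each $\vec{b}_\ell$ viewed as a $1\times\ddm$ row) subject to $\wh{b_\ell}(\xi)=\bo(\|\xi\|^m)$ for every $\ell$. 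Because $\cN_a$ is Hermitian but generically indefinite, allowing signs $\eps_\ell\in\{-1,1\}$ is exactly what lets us bypass the positive sum-of-Hermitian-squares obstruction recalled around \eqref{sos} and discussed in \cite{cpss13,ls06}.

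The first step is to encode the moment requirement as an algebraic one. Using the coset identity $\wh{b_\ell}(\xi)=\vec{b}_\ell(\dm^\tp\xi)\vec{e}(\xi)$ with $\vec{e}(\xi):=(e^{-i\gamma_1\cdot\xi},\ldots,e^{-i\gamma_{\ddm}\cdot\xi})^\tp$, I would apply a unimodular change of basis $\vec{b}_\ell(\eta)\mapsto\vec{c}_\ell(\eta):=\vec{b}_\ell(\eta)\,T(\eta)$ with $T$ a real-coefficient trigonometric polynomial matrix whose first column matches the Taylor expansion of $\vec{e}((\dm^\tp)^{-1}\eta)$ to order $m-1$ at the origin. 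After this change the moment constraint becomes simply that the first coordinate of $\vec{c}_\ell$ vanishes to order $m$ at $\eta=0$, while the equation transforms to $\sum_\ell\eps_\ell\,\vec{c}_\ell^\star\vec{c}_\ell=\ddm^{-1}\widetilde{\cN}_a$ with $\widetilde{\cN}_a:=T^\star\cN_a T$. The first row and first column of $\widetilde{\cN}_a$ then vanish to order $m$ and its $(1,1)$-entry to order $2m$ at the origin; the row/column statement comes from combining the sum rule \eqref{sr} with the change-of-frame identity \eqref{coset:U}, while the $(1,1)$-entry statement follows from $\det\cN_a(\eta)=1-\ddm\sum_j|\wh{a^{[\gamma_j]}}(\eta)|^2$ in \eqref{sos} together with $\wh{u_a}(\xi)=\bo(\|\xi\|^{2m})$. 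These two structural features are also what pin down the sharp ceiling $m=\min(\sr(a,\dm),\tfrac{1}{2}\vmo(u_a))$ in \eqref{qtffb:vm}.

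With $\widetilde{\cN}_a$ prepared in this form, I would run a recursive Schur-complement-style pivoting algorithm over the real Laurent polynomial ring---the generalized matrix spectral factorization named in the keywords. At each stage one selects a diagonal entry that has become a nonzero real constant (after suitable earlier row/column manipulations) to serve as a pivot of sign $\eps_\ell$, peels off a single signed rank-one piece $\eps_\ell\,\vec{u}_\ell^\star\vec{u}_\ell$ whose ``anchor'' is the corresponding column of $\widetilde{\cN}_a$, and iterates on the Hermitian Schur complement of one smaller size. Because the first row and column of $\widetilde{\cN}_a$ carry order-$m$ vanishing, every $\vec{u}_\ell$ extracted by the recursion has first coordinate of order $m$ at the origin---precisely the vanishing-moment condition on $\vec{c}_\ell$; transforming back by $T^{-1}$ then yields the coset rows of $b_1,\ldots,b_s$, all with order $m$ vanishing moments.

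The main obstacle is the pivoting step: to stay inside the real Laurent polynomial ring every pivot must be a nonzero real constant, and engineering $\widetilde{\cN}_a$ and its successive Schur complements to expose such pivots is what forces both the preparatory change of basis $T$ and the appearance of the sharp $m=\min(\sr(a,\dm),\tfrac{1}{2}\vmo(u_a))$; insufficient sum rules would block the row/column peeling of $\widetilde{\cN}_a$, while insufficient $\vmo(u_a)$ would block the terminal scalar step (a real-valued trigonometric polynomial of vanishing order at least $2m$ that has to be written as a signed sum of two Hermitian squares of filters with $m$ vanishing moments). Once the filter bank $\{a;b_1,\ldots,b_s\}_{(\eps_1,\ldots,\eps_s)}$ has been produced, the $\dLp{2}$-framelet conclusion is immediate: since $\vmo(b_\ell)=m\ge 1$ forces \eqref{vm:b}, the hypothesis $\phi\in\dLp{2}$ combined with \cite[Theorems~7.1.6--7.1.8]{hanbook} yields a compactly supported quasi-tight $\dm$-framelet $\{\phi;\psi^1,\ldots,\psi^s\}_{(\eps_1,\ldots,\eps_s)}$ in $\dLp{2}$, with $\vmo(\psi^\ell)=\vmo(b_\ell)=m$ because $\wh{\phi}(0)=1\neq 0$.
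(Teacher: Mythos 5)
Your reduction of the quasi-tight condition to a signed Hermitian-square decomposition of $\ddm^{-1}\cN_a$, and your closing paragraph deducing the $\dLp{2}$ statement from $\phi\in\dLp{2}$ and the characterization of dual framelets, are both correct and match the paper. The core of your argument, however, has a genuine gap. You propose to factorize $\widetilde{\cN}_a:=T^\star\cN_a T$ by a recursive Schur-complement pivoting over the real Laurent polynomial ring, where ``at each stage one selects a diagonal entry that has become a nonzero real constant (after suitable earlier row/column manipulations).'' This is precisely the step that fails, or at least is nowhere justified: a Hermitian matrix of multivariate trigonometric polynomials generically has no diagonal entry that is a unit of the ring, Schur complementation requires dividing by the pivot to stay inside the ring, and you give no procedure for the ``suitable earlier manipulations'' that would manufacture constant pivots while preserving the signed rank-one structure and the order-$m$ vanishing of the first row and column. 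You correctly identify this as the main obstacle, but identifying an obstacle is not the same as overcoming it. You also invoke, as a ``terminal scalar step,'' the fact that a real-coefficient trigonometric polynomial vanishing to order $2m$ at the origin can be written as a signed sum of Hermitian squares of trigonometric polynomials each vanishing to order $m$; this is exactly Lemma~\ref{lem:vm} of the paper, it requires its own nontrivial proof via the difference-operator decomposition of Lemma~\ref{lem:nabla}, and your count of ``two'' squares there is not correct in general. Finally, the unimodularity of your change of basis $T$ (invertibility over the Laurent polynomial ring with a prescribed first column) is asserted, not established.

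The paper avoids the matrix pivoting problem entirely. It reduces to the \emph{scalar} problem of writing $\cA(\xi)=1-\ddm\sum_{j}|\wh{a^{[\gamma_j]}}(\xi)|^2=\bo(\|\xi\|^{2m})$ as a signed sum $\sum_{\ell}\eps_\ell|\wh{u_\ell}(\xi)|^2$ with $\vmo(u_\ell)\ge m$ (Lemma~\ref{lem:vm}, proved constructively from Lemma~\ref{lem:nabla}), and then writes down the high-pass filters explicitly in the Lai--St\"ockler form \eqref{qtffb:sos:1}--\eqref{qtffb:sos:2}: $t$ filters $\wh{b_\ell}(\xi)=\wh{a}(\xi)\wh{u_\ell}(\dm^\tp\xi)$ plus $\ddm$ filters $\wh{b_{t+j}}(\xi)=\ddm^{-1/2}e^{-i\gamma_j\cdot\xi}-\ddm^{1/2}\wh{a}(\xi)\ol{\wh{a^{[\gamma_j]}}(\dm^\tp\xi)}$, whose vanishing moments follow from the sum rule identity \eqref{eq:sr:coset} and from $1-|\wh{a}(\xi)|^2=\bo(\|\xi\|^{2m})$. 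The verification that these satisfy \eqref{qtffb:2} is then a direct computation. If you want to salvage your route, you would need to supply the missing factorization mechanism; the paper's own matrix-level tool (Theorem~\ref{thm:qsos}) eliminates terms one at a time with rank-one pieces supported on two entries, which always works but destroys vanishing moments, which is exactly why the paper takes the scalar detour for this theorem.
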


The structure of the paper is as follows. In Section~2 we shall prove Theorem~\ref{thm:qtf} by a step-by-step algorithm and obtain
a general result on factorizing a Hermite matrix of $2\pi\dZ$-periodic trigonometric polynomials.
Then we shall provide a few examples of directional quasi-tight or tight framelets to illustrate the algorithm developed in the proof of Theorem~\ref{thm:qtf}.
In Section~3, we prove several auxiliary results first.
Then we shall prove Theorem~\ref{thm:qtf:vm} by a constructive algorithm.
We shall also present a result on quasi-tight framelets whose associated high-pass filters are purely differencing filters with the highest order of vanishing moments.
Finally, employing our proposed algorithms in this paper, we shall present in Section~4 several examples of quasi-tight framelets and quasi-tight framelet filter banks with high vanishing moments.
At the end of this paper, we shall further discuss several topics on quasi-tight framelets to be studied in the future.

\section{Proof of Theorem~\ref{thm:qtf} and Examples for Directional Quasi-tight Framelets}

In this section we will prove Theorem~\ref{thm:qtf} by a step-by-step algorithm. Then we establish a result on spectral factorization of a Hermite matrix of $2\pi\dZ$-periodic trigonometric polynomials. Finally we shall provide a few examples of directional quasi-tight or tight framelets to illustrate Theorem~\ref{thm:qtf}.

Recall that for a real number $c$,
the sign function is defined to be $\mbox{sgn}(c):=1$ for $c>0$, $\mbox{sgn}(0):=0$, and $\mbox{sgn}(c):=-1$ for $c<0$.
%
%
We first prove Theorem~\ref{thm:qtf} as follows.

\begin{proof}[Proof of Theorem~\ref{thm:qtf}]
By definition, we notice that $\{a;b_1,\ldots,b_s\}_{(\eps_1,\ldots,\eps_s)}$ is a quasi-tight $\dm$-framelet filter bank if and only if
\be \label{qtffb:2}
\sum_{\ell=1}^s \eps_\ell \ddm \Big[\wh{b^{[\gamma_1]}_\ell}(\xi),\ldots, \wh{b^{[\gamma_{\ddm}]}_\ell}(\xi)\Big]^\star
\Big[\wh{b^{[\gamma_1]}_\ell}(\xi),\ldots, \wh{b^{[\gamma_{\ddm}]}_\ell}(\xi)\Big]
=\cN_{a}(\xi)
\ee
with
\be \label{Na}
\cN_a(\xi):=
I_{\ddm}-
\ddm \Big[\wh{a^{[\gamma_1]}}(\xi),\ldots, \wh{a^{[\gamma_{\ddm}]}}(\xi)\Big]^\star
\Big[\wh{a^{[\gamma_1]}}(\xi),\ldots, \wh{a^{[\gamma_{\ddm}]}}(\xi)\Big].
\ee
We now construct the desired high-pass filters $b_1,\ldots,b_s$ by a recursive algorithm. Let $\cN=\cN_a$.
The main idea of the following proof has three steps:
(1) Eliminate the nonzero terms in the off-diagonal entries of $\cN$ one by one so that all the off-diagonal entries in the updated $\cN$ are identically zero. (2) Eliminate the nonzero nonconstant terms in the diagonal entries of the updated $\cN$ one by one so that the final updated $\cN$ is a constant diagonal matrix. (3) Prove that the constant diagonal matrix $\cN$ is the zero matrix.

Let $\ell:=1$. Suppose that some of the off-diagonal entries $[\cN]_{j,k}$ are not identically zero for some $1\le j<k\le \ddm$. Then $[\cN]_{j,k}$ has a nonzero term $c e^{-i\gamma\cdot\xi}$ with $c\ne 0$ and $\gamma\in \dZ$. By the definition of $\cN$, we observe that the term $c e^{-i\gamma\cdot\xi}$ must appear as one of the terms in $-\ol{\wh{a^{[\gamma_j]}}(\xi)}
\wh{a^{[\gamma_k]}}(\xi)$.
Therefore, there must exist $\alpha,\beta\in \dZ$ such that $a^{[\gamma_j]}(\alpha) a^{[\gamma_k]}(\beta)\ne 0$ (i.e., $\{\gamma_j+\dm \alpha, \gamma_k+\dm \beta\}\subset \supp(a)$) and $\beta-\alpha=\gamma$. Define
\be \label{b:ell}
b_\ell:=\sqrt{|c|/\ddm} (\td_{\gamma_j+\dm \alpha}-\td_{\gamma_k+\dm \beta}),
\qquad \eps_\ell:=-\mbox{sgn}(c).
\ee
If the additional condition in \eqref{qtf:tf:cond} holds, then $c$ must be a negative number and hence, $\eps_\ell=1$.
It follows directly from the definition of $b_\ell$ in \eqref{b:ell} that
\[
\wh{b_\ell^{[\gamma_j]}}(\xi)=\sqrt{|c|/\ddm} e^{-i\alpha\cdot \xi},
\quad
\wh{b_\ell^{[\gamma_k]}}(\xi)=-\sqrt{|c|/\ddm} e^{-i\beta\cdot \xi}
\quad \mbox{and}\quad
\wh{b_\ell^{[\gamma_p]}}(\xi)=0\qquad \forall\, p\in \{1,\ldots,\ddm\}\bs\{j,k\}.
\]
Consequently, the $\ddm\times \ddm$ matrix
\be \label{B:ell}
B_\ell(\xi):=\eps_\ell \ddm \Big[\wh{b^{[\gamma_1]}_\ell}(\xi),\ldots, \wh{b^{[\gamma_{\ddm}]}_\ell}(\xi)\Big]^\star
\Big[\wh{b^{[\gamma_1]}_\ell}(\xi),\ldots, \wh{b^{[\gamma_{\ddm}]}_\ell}(\xi)\Big]
\ee
has only four nonzero entries with
\be \label{B:ell:jk}
\begin{split}
[B_\ell(\xi)]_{j,j}&=\eps_\ell \ddm |\wh{b_\ell^{[\gamma_j]}}(\xi)|^2
=\eps_\ell |c|=-c,\\
[B_\ell(\xi)]_{k,k}&=\eps_\ell \ddm |\wh{b_\ell^{[\gamma_k]}}(\xi)|^2=
\eps_\ell |c|=-c,\\
[B_\ell(\xi)]_{j,k}&=\eps_\ell \ddm \ol{\wh{b_\ell^{[\gamma_j]}}(\xi)}
\wh{b_\ell^{[\gamma_k]}}(\xi)
=-\eps_\ell |c|e^{-i(\beta-\alpha)\cdot\xi}=
c e^{-i\gamma\cdot\xi},\\
[B_\ell(\xi)]_{k,j}&=\eps_\ell \ddm \ol{\wh{b_\ell^{[\gamma_k]}}(\xi)}
\wh{b_\ell^{[\gamma_j]}}(\xi)
=-\eps_\ell |c| e^{-i(\alpha-\beta)\cdot\xi}=
c e^{i\gamma\cdot\xi}.
\end{split}
\ee
Now replace/update $\cN$ by $\cN-B_\ell$ and replace $\ell$ by $\ell+1$ (i.e., increase $\ell$ by one). Because $\cN^\star(\xi)=\cN(\xi)$, by the above four identities, we conclude that the term $c e^{-i\gamma\cdot\xi}$ does not appear in the $(j,k)$-entry of the updated $\cN$ and only the constant terms in the diagonal entries of the previous $\cN$ are modified.
Hence, we can repeat this procedure until all the off-diagonal entries in $\cN$ are identically zero.

Now we deal with the diagonal matrix $\cN$.
Suppose that some of the diagonal entries $[\cN]_{j,j}$ are not constant for some $1\le j\le \ddm$. Then $[\cN]_{j,j}$ has a nonzero nonconstant term $c e^{-i\gamma\cdot\xi}$ with $c\ne 0$ and $\gamma\in \dZ\bs \{0\}$.
By the definition and construction of $\cN$, we observe that the term $c e^{-i\gamma\cdot\xi}$ must appear as one of the terms in $-\ol{\wh{a^{[\gamma_j]}}(\xi)}\wh{a^{[\gamma_j]}}(\xi)$.
Therefore, there must exist $\alpha,\beta\in \dZ$ such that $a^{[\gamma_j]}(\alpha) a^{[\gamma_j]}(\beta)\ne 0$ (i.e., $\{\gamma_j+\dm \alpha, \gamma_j+\dm \beta\}\subset \supp(a)$) and $\beta-\alpha=\gamma\ne 0$. Define
\be \label{b:ell:diagonal}
b_\ell:=\sqrt{|c|/\ddm} (\td_{\gamma_j+\dm \alpha}-\td_{\gamma_j+\dm \beta}),
\qquad \eps_\ell:=-\mbox{sgn}(c).
\ee
If the additional condition in \eqref{qtf:tf:cond} holds, then $c$ must be a negative number and hence, $\eps_\ell=1$.
It follows directly from the definition of $b_\ell$ in \eqref{b:ell:diagonal} that
\[
\wh{b_\ell^{[\gamma_j]}}(\xi)=\sqrt{|c|/\ddm} (e^{-i\alpha\cdot \xi}-e^{-i\beta\cdot\xi})
\quad \mbox{and}\quad
\wh{b_\ell^{[\gamma_p]}}(\xi)=0\qquad \forall\, p\in \{1,\ldots,\ddm\}\bs\{j\}.
\]
Consequently, the $\ddm\times \ddm$ matrix
$B_\ell(\xi)$ defined in \eqref{B:ell}
has only one nonzero entry at
\be \label{B:ell:jj}
[B_\ell(\xi)]_{j,j}=\eps_\ell \ddm |\wh{b_\ell^{[\gamma_j]}}(\xi)|^2=
\eps_\ell |c|(2-e^{-i(\alpha-\beta)\cdot\xi}-e^{i(\alpha-\beta)\cdot\xi})
=c(e^{-i\gamma\cdot\xi}+e^{i\gamma\cdot\xi}-2).
\ee
Now replace/update $\cN$ by $\cN-B_\ell$ and replace $\ell$ by $\ell+1$ (i.e., increase $\ell$ by one). Because $\cN^\star(\xi)=\cN(\xi)$, by the above identity, we conclude that neither the term $c e^{-i\gamma\cdot\xi}$ nor
$c e^{i\gamma\cdot\xi}$ appears in the $(j,j)$-entry of the updated $\cN$ and only the constant term in the $(j,j)$-entry of the previous $\cN$ is modified. We can repeat this procedure until all the nonzero nonconstant terms in the updated $\cN$ are identically zero. We set $s:=\ell-1$.

Therefore, we end up with
\be \label{qtf:2}
\cN_a(\xi)=\cN+\sum_{\ell=1}^s B_\ell(\xi)
\ee
such that $\cN$ is a diagonal matrix of constants. We now prove that $\cN=0$.
Note that both \eqref{B:ell:jk} and \eqref{B:ell:jj} trivially imply that the sum of every row of $B_\ell(0)$ must be zero. Since the filter $a$ satisfies the basic sum rule condition in \eqref{sr:one}, we have $\wh{a^{[\gamma_1]}}(0)=\cdots=
\wh{a^{[\gamma_{\ddm}]}}(0)=\ddm^{-1}$.
Now we trivially deduce from the definition of $\cN_a$ in \eqref{Na} that
all the diagonal entries of $\cN_a(0)$ are $1-\ddm^{-1}$ and all the off-diagonal entries of $\cN_a(0)$ are $-\ddm^{-1}$. Consequently the sum of every row of $\cN_a(0)$ is $(1-\ddm^{-1})+(\ddm-1)(-\ddm^{-1})=0$. Therefore, we conclude from \eqref{qtf:2} that the sum of every row of $\cN$ must be zero. However, $\cN$ is a diagonal matrix of constants and thus, we must have $\cN=0$. Since $\cN=0$, by our definition of $B_\ell$ in \eqref{B:ell} and using \eqref{qtf:2}, we conclude that
\eqref{qtffb:2} is satisfied and $\{a;b_1,\ldots,b_s\}_{(\eps_1,\ldots,\eps_s)}$ is a quasi-tight $\dm$-framelet filter bank.

If the additional condition in \eqref{qtf:tf:cond} is satisfied, by our above construction we have $\eps_1=\cdots=\eps_s=1$ and hence $\{a;b_1,\ldots,b_s\}$ is a tight $\dm$-framelet filter bank.
\end{proof}

For $u\in \dlp{0}$, by $N(\wh{u})$ we denote the total number of nonzero terms in the $2\pi\dZ$-periodic trigonometric polynomial $\wh{u}$. That is, $N(\wh{u})=\#\mbox{supp}(u)$, the cardinality of the support of the filter $u$.
From the above proof of Theorem~\ref{thm:qtf}, it is not difficult to conclude that the number $s$ of high-pass filters in Theorem~\ref{thm:qtf} is given by
\[
s=\sum_{1\le j<k\le \ddm} N\left(\ol{\wh{a^{[\gamma_j]}}(\xi)}
\wh{a^{[\gamma_k]}}(\xi)\right)+
\frac{1}{2} \sum_{j=1}^{\ddm}\left( N\left(|\wh{a^{[\gamma_j]}}(\xi)|^2\right)-1\right).
\]
Moreover, by the special structure of the high-pass filters in Theorem~\ref{thm:qtf}, we also have $s\le \binom{\#\supp(a)}{2}$.

Suppose that $\{a;b_1,\ldots,b_s\}_{(\eps_1,\ldots,\eps_s)}$ is a quasi-tight $\dm$-framelet filter bank. Then it is trivial to observe that
$\{a;b_1(\cdot+\dm k_1),\ldots,b_s(\cdot+\dm k_s)\}_{(\eps_1,\ldots,\eps_s)}$ are quasi-tight $\dm$-framelet filter banks for all $k_1,\ldots,k_s\in \dZ$.
If $b_1=c b_2$ for some $c\in \R$, then $\{a; \sqrt{|\eps_1 c^2+\eps_2|} b_2,\ldots,b_s\}_{(\mbox{sgn}(\eps_1c^2+\eps_2),\eps_3,\ldots,\eps_s)}$ is also a quasi-tight $\dm$-framelet filter bank with at most $s-1$ number of high-pass filters.
Up to the above trivial variants, from our proof of Theorem~\ref{thm:qtf}, one can conclude that the constructed quasi-tight $\dm$-framelet filter bank in Theorem~\ref{thm:qtf} is essentially unique.

For $\dm=2I_d$, let us consider the simplest low-pass filter: the Haar low-pass filter $a$ with $a(k)=2^{-d}$ for $k\in \{0,1\}^d$ and $a(k)=0$ for $k\in \dZ\bs \{0,1\}^d$.
Since $\wh{a^{[\gamma]}}=2^{-d}$ for all $\gamma\in \Omega_{2I_d}:=\{0,1\}^d$ and since all the coefficients of $a$ are nonnegative, it follows easily from the above proof/algorithm of Theorem~\ref{thm:qtf} that $\{a;b_1,\ldots,b_s\}$ with $s:=2^{d-1}(2^d-1)$ is a tight $2I_d$-framelet filter bank with directionality, where $b_1,\ldots,b_s$ are given by $2^{-d}(\td_{\gamma_1}-\td_{\gamma_2})$ for every undirected edge with different endpoints $\gamma_1,\gamma_2\in \{0,1\}^d$.
Note that the associated $2I_d$-refinable function is the Haar refinable function $\phi$ with $\phi=\chi_{[0,1]^d}$. Consequently, the corresponding $\{\phi;\psi^1,\ldots,\psi^s\}$ is a tight $2I_d$-framelet in $\dLp{2}$ with directionality.
This yields the same construction in \cite{lcsht16} for the two-dimensional Haar framelet (i.e. $d=2$) and \cite[Theorem~1]{hlz17} for any dimension $d\in \N$.
Theorem~\ref{thm:qtf}
can be also easily applied to box spline refinable functions to obtain directional box spline tight $2I_d$-framelets in \cite[Theorem~2]{hlz17} which are obtained by applying the projection method in \cite{han14} to the above $d$-dimensional directional Haar framelets. Theorem~\ref{thm:qtf} improves \cite{hlz17} on directional tight $2I_d$-framelets by being applicable to arbitrary $\dm$-refinable functions whose low-pass filters have nonnegative coefficients.

For $1\le j\le r$, by $e_j$ we denote the $j$th unit coordinate column vector in $\R^r$, i.e., $e_j$ is the $r\times 1$ vector with its only nonzero element being $1$ at the $j$th entry. Using the idea in the proof of Theorem~\ref{thm:qtf}, we now establish a result generalizing the well-known result on the matrix spectral factorization.

\begin{theorem}\label{thm:qsos}
Let $\cA$ be an $r\times r$ matrix of $2\pi\dZ$-periodic trigonometric polynomials in $d$ variables with real coefficients such that $\cA^\star(\xi)=\cA(\xi)$ for all $\xi\in \dR$. For $1\le j\le r$, let $\kappa_r$ be the sum of the $j$th column of $\cA(0)$, i.e., $(\kappa_1,\ldots,\kappa_r):=
(1,\ldots,1)\cA(0)$.
Then there exist $\eps_0,\ldots,\eps_s\in \{-1,1\}$ and $r\times 1$ vectors $\fu_1,\ldots,\fu_s$ of $2\pi\dZ$-periodic trigonometric polynomials with real coefficients such that
\be \label{qsos:hermite}
\cA(\xi)=\mbox{sgn}(\kappa_1) |\kappa_1|e_1 e_1^\star+\cdots+\mbox{sgn}(\kappa_r)
|\kappa_r|e_r e_r^\star+
\eps_1 \fu_1(\xi) \fu_1^\star(\xi)+\cdots
+\eps_s \fu_s(\xi) \fu_s^\star(\xi),
\ee
and each vector function of $\fu_\ell$ has only two nonzero entries with one being $c_\ell e^{-i\alpha_\ell \cdot \xi}$ and the other being $-c_\ell e^{-i\beta_\ell \cdot \xi}$ for some $c_\ell\in \R$ and $\alpha_\ell,\beta_\ell\in \dZ$ for $\ell=1,\ldots,s$. In terms of the matrix form, \eqref{qsos:hermite} can be rewritten as
\be \label{qsos}
\cA(\xi)=\cB(\xi) \mbox{diag}\bigl(\mbox{sgn}(\kappa_1),\ldots,
\mbox{sgn}(\kappa_r),\eps_1,\ldots,\eps_s\bigl)
\cB^\star(\xi),
\ee
where $\cB(\xi):=[\sqrt{|\kappa_1|}e_1,\ldots,\sqrt{|\kappa_r|}e_r, \fu_1,\ldots,\fu_s]$ is an $r\times (r+s)$ matrix obtained by putting all the column vectors $e_1,\ldots, e_r, \fu_1,\ldots,\fu_s$ together.
\end{theorem}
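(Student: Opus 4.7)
My plan is to mimic the three-stage elimination scheme used in the proof of Theorem~\ref{thm:qtf}: starting from $\cN := \cA$, I will repeatedly subtract rank-one Hermitian corrections of the form $\eps_\ell \fu_\ell \fu_\ell^\star$, where each $\fu_\ell$ is a two-term vector of the prescribed shape, until $\cN$ is reduced to a constant diagonal matrix. I then identify that residual matrix as $\mathrm{diag}(\kappa_1,\ldots,\kappa_r)$ and package the data into the stated factorization.

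The first stage clears all off-diagonal entries. Whenever a nonzero monomial $c e^{-i\gamma\cdot\xi}$ appears in an off-diagonal entry $[\cN]_{j,k}$ with $j<k$, I set $\fu_\ell := \sqrt{|c|}\bigl(e^{-i\gamma\cdot\xi} e_j - e_k\bigr)$ and $\eps_\ell := -\mathrm{sgn}(c)$. A short computation paralleling \eqref{B:ell:jk} shows that $\eps_\ell \fu_\ell \fu_\ell^\star$ equals $c e^{-i\gamma\cdot\xi}$ at position $(j,k)$, its conjugate at $(k,j)$, and the constant $-c$ at both $(j,j)$ and $(k,k)$, with zeros everywhere else. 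Replacing $\cN$ by $\cN - \eps_\ell \fu_\ell \fu_\ell^\star$ therefore removes one term from $[\cN]_{j,k}$ (and its conjugate partner from $[\cN]_{k,j}$) and only perturbs two constant parts of the diagonal, never touching any other off-diagonal entry. Since each off-diagonal entry is a trigonometric polynomial with finitely many terms, stage~1 terminates with $\cN$ diagonal.

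The second stage eliminates non-constant diagonal terms. Given $c e^{-i\gamma\cdot\xi}$ with $\gamma \neq 0$ in $[\cN]_{j,j}$ (automatically paired with $c e^{i\gamma\cdot\xi}$ by Hermiticity and realness), I set $\fu_\ell := \sqrt{|c|}\bigl(e^{-i\gamma\cdot\xi}-1\bigr) e_j$ --- a vector whose two prescribed terms happen to sit at the same coordinate $j$ --- and $\eps_\ell := -\mathrm{sgn}(c)$. Mirroring \eqref{B:ell:jj}, $\eps_\ell \fu_\ell \fu_\ell^\star$ vanishes off the $(j,j)$-entry, where it equals $c\bigl(e^{-i\gamma\cdot\xi} + e^{i\gamma\cdot\xi}\bigr) - 2c$. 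Subtracting removes both paired non-constant terms while only modifying the constant part of $[\cN]_{j,j}$, and since this lowers the total count of non-constant terms by two each time, stage~2 also terminates, leaving $\cN$ a constant diagonal matrix.

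The remaining task --- which I expect to be the only conceptual step, rather than any computational one --- is to identify this constant diagonal matrix. The invariant I will use is that, in both stages, $\mathbf{1}^\star \fu_\ell(0) = 0$ (opposite entries summing to zero in stage~1; entries cancelling identically in stage~2), so every subtracted $\eps_\ell \fu_\ell \fu_\ell^\star$ has vanishing row and column sums at $\xi = 0$. Consequently the row vector $(1,\ldots,1)\cN(0)$ is preserved throughout and is still $(\kappa_1,\ldots,\kappa_r)$ at termination. Since $\cN$ is then a constant diagonal matrix, its column sums are exactly its diagonal entries, forcing $\cN = \mathrm{diag}(\kappa_1,\ldots,\kappa_r) = \sum_{j=1}^r \mathrm{sgn}(\kappa_j)|\kappa_j|\, e_j e_j^\star$. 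This gives \eqref{qsos:hermite}; assembling $\sqrt{|\kappa_1|}e_1,\ldots,\sqrt{|\kappa_r|}e_r,\fu_1,\ldots,\fu_s$ as the columns of $\cB$ and the corresponding signs into the middle diagonal matrix then yields \eqref{qsos}.
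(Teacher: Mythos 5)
Your proposal is correct and follows essentially the same argument as the paper: the identical two-stage elimination by rank-one Hermitian corrections $\eps_\ell\fu_\ell\fu_\ell^\star$ (your $\fu_\ell$ differ from the paper's only by unimodular scalar factors, which leave $\fu_\ell\fu_\ell^\star$ unchanged), followed by the same column-sum invariant at $\xi=0$ to identify the residual constant diagonal matrix as $\mathrm{diag}(\kappa_1,\ldots,\kappa_r)$.
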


\begin{proof}
To prove the claim, we construct the desired vectors $\fu_1,\ldots,\fu_s$ of $2\pi\dZ$-periodic trigonometric polynomials by following a similar recursive algorithm as in the proof of Theorem~\ref{thm:qtf}.

Let $\ell:=1$ and $\cN:=\cA$. Suppose that some of the off-diagonal entries $[\cN]_{j,k}$ are not identically zero for some $1\le j<k\le r$. Then $[\cN]_{j,k}$ has a nonzero term $c e^{-i\gamma\cdot\xi}$ with $c\ne 0$ and $\gamma\in \dZ$.
Define
%
\[
\fu_\ell(\xi):=
\sqrt{|c|}e_j-\sqrt{|c|} e^{i\gamma\cdot \xi} e_k \quad \mbox{and}\quad \eps_\ell:=-\mbox{sgn}(c).
\]
Consequently, the $r\times r$ matrix
$\eps_\ell \fu_\ell(\xi) \fu_\ell^\star(\xi)$ has only four nonzero entries $-c, -c, ce^{-i\gamma\cdot \xi}$ and $ce^{i\gamma\cdot \xi}$ at the positions $(j,j), (k,k), (j,k)$ and $(k,j)$, respectively.
Now replace/update $\cN$ by $\cN-\eps_\ell \fu_\ell(\xi) \fu_\ell^\star(\xi)$ and replace $\ell$ by $\ell+1$ (i.e., increase $\ell$ by one). Because $\cN^\star(\xi)=\cN(\xi)$, we conclude that the term $c e^{-i\gamma\cdot\xi}$ does not appear in the $(j,k)$-entry of the updated $\cN$ and only the constant terms in the diagonal entries of the previous $\cN$ are modified.
Hence, we can repeat this procedure until all the off-diagonal entries in $\cN$ are identically zero.

Now we deal with the diagonal matrix $\cN$.
Suppose that some of the diagonal entries $[\cN]_{j,j}$ are not constant for some $1\le j\le r$. Then $[\cN]_{j,j}$ has a nonzero nonconstant term $c e^{-i\gamma\cdot\xi}$ with $c\ne 0$ and $\gamma\in \dZ\bs \{0\}$.
Define
\[
\fu_\ell(\xi):=\sqrt{|c|}(1-e^{-i\gamma\cdot\xi})e_j\quad
\mbox{and}\quad \eps_\ell:=-\mbox{sgn}(c).
\]
Then $\eps_\ell \fu_\ell(\xi)\fu_\ell^\star(\xi)$ has only one nonzero entry $c(e^{-i\gamma\cdot\xi}+e^{i\gamma\cdot\xi}-2)$ at the $(j,j)$-entry.
Now replace/update $\cN$ by $\cN-\eps_\ell \fu_\ell(\xi)\fu_\ell^\star(\xi)$ and replace $\ell$ by $\ell+1$. Because $\cN^\star(\xi)=\cN(\xi)$, we conclude that neither the term $c e^{-i\gamma\cdot\xi}$ nor
$c e^{i\gamma\cdot\xi}$ appears in the $(j,j)$-entry of the updated $\cN$ and only the constant term in the $(j,j)$-entry of the previous $\cN$ is modified. We can repeat this procedure until all the nonzero nonconstant terms in the updated $\cN$ are identically zero. We set $s:=\ell-1$.

Therefore, we end up with
$\cA(\xi)=\cN+\sum_{\ell=1}^s \eps_\ell \fu_\ell(\xi)\fu_\ell^\star(\xi)$
such that $\cN$ is a diagonal matrix of constants.
Note that the sum of every column in $\eps_\ell \fu_\ell(0)\fu_\ell^\star(0)$ is zero. Consequently, we must have $\cN=\mbox{diag}(\kappa_1,\ldots,\kappa_r)$.
This completes the proof.
\end{proof}

Though the sum of Hermitian squares of $2\pi\dZ$-periodic trigonometric polynomials is a challenging problem in real algebraic geometric,
as a direct consequence of Theorem~\ref{thm:qsos}, we have the following result on quasi-sum of Hermitian squares of $2\pi\dZ$-periodic trigonometric polynomials.

\begin{cor}\label{cor:qsos}
Let $\cA$ be a $2\pi\dZ$-periodic trigonometric polynomial in $d$ variables with real coefficients such that $\ol{\cA(\xi)}=\cA(\xi)$. Then there exist $\eps_1,\ldots,\eps_s\in \{-1,1\}$ and $2\pi\dZ$-periodic trigonometric polynomials $\fu_1,\ldots,\fu_s$ with real coefficients such that
\be \label{qsos:scalar}
\cA(\xi)=\mbox{sgn}(\cA(0)) (\sqrt{|\cA(0)|})^2+\eps_1|\fu_1(\xi)|^2+\cdots
+\eps_s |\fu_s(\xi)|^2,\qquad \forall\, \xi\in \dR
\ee
and each function $\fu_\ell$ has only two nonzero entries with one being $c_\ell e^{-i\alpha_\ell \cdot \xi}$ and the other being $-c_\ell e^{-i\beta_\ell \cdot \xi}$ for some $c_\ell\in \R$ and $\alpha_\ell,\beta_\ell\in \dZ$ for $\ell=1,\ldots,s$.
\end{cor}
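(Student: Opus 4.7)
The plan is to obtain Corollary \ref{cor:qsos} as the scalar specialization of Theorem \ref{thm:qsos}, by taking $r=1$. When $r=1$, the matrix $\cA$ is just a scalar trigonometric polynomial, and the Hermitian symmetry $\cA^\star(\xi)=\cA(\xi)$ required by Theorem \ref{thm:qsos} reduces to $\ol{\cA(\xi)}=\cA(\xi)$, which is precisely the hypothesis of the corollary. The quantity $(\kappa_1,\ldots,\kappa_r)=(1,\ldots,1)\cA(0)$ collapses to $\kappa_1=\cA(0)$, and the basis vector $e_1$ is the scalar $1$, so $e_1 e_1^\star = 1$. Each $\fu_\ell$ from Theorem \ref{thm:qsos} is an $r\times 1=1\times 1$ vector, i.e., an ordinary scalar $2\pi\dZ$-periodic trigonometric polynomial, and $\fu_\ell(\xi)\fu_\ell^\star(\xi)=|\fu_\ell(\xi)|^2$. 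Applying \eqref{qsos:hermite} in this setting immediately yields
\[
\cA(\xi)=\mbox{sgn}(\cA(0))|\cA(0)| + \sum_{\ell=1}^s \eps_\ell |\fu_\ell(\xi)|^2,
\]
and rewriting $|\cA(0)|=(\sqrt{|\cA(0)|})^2$ produces the claimed identity \eqref{qsos:scalar}.

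What remains is to verify that the $\fu_\ell$ have the required two-term structure. Here I would point out that when $r=1$ there are no off-diagonal entries to kill in the recursive construction from the proof of Theorem \ref{thm:qsos}, so only the second phase of that construction — the elimination of nonconstant terms from a diagonal entry — ever fires. In that phase each $\fu_\ell$ is defined by $\fu_\ell(\xi):=\sqrt{|c|}\,(1-e^{-i\gamma\cdot\xi})\,e_j$, which for $r=1$ is simply the scalar $\sqrt{|c|}-\sqrt{|c|}\,e^{-i\gamma\cdot\xi}$. Setting $c_\ell:=\sqrt{|c|}$, $\alpha_\ell:=0$ and $\beta_\ell:=\gamma\in\dZ\setminus\{0\}$ exhibits $\fu_\ell$ as $c_\ell e^{-i\alpha_\ell\cdot\xi}-c_\ell e^{-i\beta_\ell\cdot\xi}$, exactly the form stated in the corollary. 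There is no substantial obstacle: the content of the corollary is already contained in Theorem \ref{thm:qsos}, and the only thing to do is to check that the $r=1$ specialization of the constructed vectors matches the scalar two-term template advertised in the statement.
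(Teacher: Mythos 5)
Your proposal is correct and matches the paper exactly: the paper states Corollary~\ref{cor:qsos} as a direct consequence of Theorem~\ref{thm:qsos} with no separate proof, which is precisely your $r=1$ specialization. Your added check that only the diagonal phase of the recursion fires when $r=1$, so each scalar $\fu_\ell$ indeed has the advertised two-term form, is a correct and welcome clarification of a detail the paper leaves implicit.
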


We finish this section by providing several examples of directional quasi-tight or tight framelets.
To obtain a quasi-tight $\dm$-framelet $\{\phi;\psi^1,\ldots,\psi^s\}_{(\eps_1,\ldots,\eps_s)}$ in $\dLp{2}$, we have to check the technical condition $\phi\in \dLp{2}$.
Let $a\in \dlp{0}$ with $\wh{a}(0)=1$ and $m:=\sr(a,\dm)$.
For $1\le p\le \infty$, we now recall a technical quantity (e.g., see \cite[(7.2.2)]{hanbook} and \cite[(4.3)]{han03jat} for its definition and importance in wavelet analysis):
\begin{equation}\label{smaM}
\sm_p(a,\dm):=\tfrac{d}{p}-\log_{\rho(\dm)} \rho_m(a,M)_p \quad \mbox{and}\quad \sm(a,\dm):=\sm_2(a,\dm),
\end{equation}
where $\rho(\dm)$ is the spectral radius of $\dm$ and
\[
\rho_m(a,\dm)_p:=|\det(\dm)|\sup\{\lim_{n\to \infty}\|a_n*(\nabla^\mu \td)\|_{\dlp{p}}^{1/n} \setsp \mu\in \dNN, |\mu|=m\},
\]
where $\wh{a_n}(\xi):=\wh{a}(\xi)\wh{a}(\dm^\tp\xi)\cdots \wh{a}((\dm^\tp)^{n-1}\xi)$ and $\nabla^\mu \td$ is defined later in \eqref{nabla}. Let $\phi$ be defined in \eqref{phi:psi}. If $\sm(a,\dm)>0$, then $\phi\in \dLp{2}$ and moreover, $\int_{\dR} |\wh{\phi}(\xi)|^2 (1+\|\xi\|^2)^\tau d\xi<\infty$ for all $0\le \tau<\sm(a,\dm)$.

We now provide a few examples to illustrate Theorem~\ref{thm:qtf}.
We first demonstrate that the condition in \eqref{qtf:tf:cond} can be also satisfied by some low-pass filters having negative coefficients as well.

\begin{example}{\rm
For $d=1$ and $\dm=2$, we consider the following low-pass filter
\[
a=\{\tfrac{5}{29}, \tfrac{5}{29}, -\tfrac{1}{58},
{\bf\underline{-\tfrac{1}{58}}},
\tfrac{5}{29}, \tfrac{5}{29}, \tfrac{5}{29}, \tfrac{5}{29}
\}_{[-3,4]}.
\]
Then clearly $a$ satisfies the basic sum rules in \eqref{sr:one} and the additional condition in \eqref{qtf:tf:cond} is satisfied: By the definition of the filter $a$, we have
%
\begin{align*}
&\ol{\wh{a^{[0]}}(\xi)}\wh{a^{[1]}}(\xi)=
\tfrac{25}{841} e^{-2i\xi} + \tfrac{45}{1682} e^{-i\xi} + \tfrac{20}{841} + \tfrac{301}{3364} e^{i\xi}
+ \tfrac{20}{841} e^{2i\xi} + \tfrac{45}{1682}e^{3i\xi} + \tfrac{25}{841}e^{4i\xi}
\\
&|\wh{a^{[0]}}(\xi)|^2= |\wh{a^{[1]}}(\xi)|^2 =
\tfrac{25}{841} e^{-3i\xi}
+ \tfrac{45}{1682} e^{-2i\xi}
+ \tfrac{20}{841} e^{-i\xi}
+ \tfrac{301}{3364}
+ \tfrac{20}{841} e^{i\xi}
+ \tfrac{45}{1682} e^{2i\xi}
+ \tfrac{25}{841} e^{3i\xi}.
\end{align*}
Note that not all coefficients of the filter $a$ are nonnegative but \eqref{qtf:tf:cond} is satisfied. By Theorem~\ref{thm:qtf}, we obtain a tight $2$-framelet filter bank $\{a,b_1,\ldots, b_{13}\}$ given by
\begin{align*}
&b_1 = \tfrac{2\sqrt{5}}{29} \{ {\bf\underline{-1}}, 1\}_{[0, 1]}, \quad
b_{2} = \tfrac{2\sqrt{5}}{29}\{ {\bf\underline{-1}}, 0, 1 \}_{[0, 2]}, \quad
b_{3} = \tfrac{2\sqrt{5}}{29} \{ -1, {\bf\underline{0}}, 1 \}_{[-1, 1]},\quad
b_4 = \tfrac{\sqrt{301}}{58}\{-1, {\bf\underline{1}} \}_{[-1, 0]}, \quad\\
&b_5 = \tfrac{2\sqrt{5}}{29}\{-1, {\bf\underline{0}}, 0, 1\}_{[-1, 2]}, \quad
b_6 = \tfrac{3\sqrt{10}}{58}\{-1, 0, {\bf\underline{0}}, 1\}_{[-2, 1]},\quad
b_7 = \tfrac{3\sqrt{10}}{58}\{-1, 0, {\bf\underline{0}}, 0, 1\}_{[-2, 2]}, \quad\\
&b_{8} = \tfrac{3\sqrt{10}}{58}\{-1, {\bf\underline{0}}, 0, 0, 1\}_{[-1, 3]}, \quad
b_9 = \tfrac{3\sqrt{10}}{58}\{-1, {\bf\underline{0}}, 0, 0, 0, 1\}_{[-1, 4]},\quad
b_{10} = \tfrac{5}{29}\{-1, 0, {\bf\underline{0}}, 0, 0, 1\}_{[-2, 3]}, \quad\\
&b_{11} = \tfrac{5}{29}\{-1, 0, {\bf\underline{0}}, 0, 0, 0, 1\}_{[-2, 4]}, \quad
b_{12} = \tfrac{5}{29}\{-1, 0, 0, {\bf\underline{0}}, 0, 0, 1\}_{[-3, 3]}, \quad
b_{13} = \tfrac{5}{29}\{-1, 0, 0, {\bf\underline{0}}, 0, 0, 0, 1\}_{[-3, 4]}.
\end{align*}
By calculation, we have $\sm(a,2)\approx 0.992335$ and hence $\phi\in \Lp{2}$.
Then $\{\phi;\psi^1,\ldots,\psi^{13}\}$ is a tight $2$-framelet in $\Lp{2}$, where $\phi,\psi^1,\ldots,\psi^{13}$ are defined in \eqref{phi:psi} with $\dm=2$ and $s=13$.
}\end{example}

\begin{example}  \label{ex:1dVM1}{\rm
For $d = 1$ and $\dm = 2$, we consider the following interpolatory filter
\begin{equation}\label{aI4}
\ta = \{ -\tfrac{1}{32}, 0, \tfrac{9}{32},
{\bf\underline{\tfrac{1}{2}}},
\tfrac{9}{32}, 0, -\tfrac{1}{32} \}_{[-3, 3]}. \end{equation}
Using Theorem~\ref{thm:qtf}, we have a quasi-tight $2$-framelet filter bank
$ \{\ta; \tb_1, \ldots, \tb_7\}_{(\eps_1, \ldots,\eps_7)}$, where
\begin{align*}
&\tb_1 = \tfrac{3}{8}\{
{\bf\underline{-1}}, 1\}_{[0, 1]},
\quad
\tb_2 = \tfrac{3}{8}\{-1,
{\bf\underline{1}}\}_{[-1, 0]}, \quad
\tb_3 = \tfrac{3\sqrt{7}}{32}\{-1,
{\bf\underline{0}}, 1\}_{[-1, 1]}, \quad
\tb_4 = \tfrac{1}{32}\{-1, 0, 0,
{\bf\underline{0}},
0, 0, 1\}_{[-3, 3]},
\\
&\tb_5 = \tfrac{1}{8}\{-1,
{\bf\underline{0}},
0, 1\}_{[-1, 2]}, \quad
	\tb_6 = \tfrac{1}{8}\{-1, 0,
	{\bf\underline{0}},
	1\}_{[-2, 1]},
	\quad
	\tb_7 = \tfrac{3\sqrt{2}}{32}\{-1,
	{\bf\underline{0}},
	0, 0, 1\}_{[-1, 3]}
\end{align*}
with $\eps_1 =\cdots=\eps_4=1$ and $\eps_5 =\cdots=\eps_7=-1$. Since $\sm(a,2)\approx 2.440765$, $\phi\in \Lp{2}$ and $\{\phi;\psi^1,\ldots,\psi^7\}$ is a quasi-tight $2$-framelet in $\Lp{2}$, where $\phi,\psi^1,\ldots,\psi^7$ are defined in \eqref{phi:psi} with $\dm=2$ and $s=7$.
See Figure~\ref{fig:1dVM1} for the graphs of $\phi,\psi^1,\ldots,\psi^7$.
}\end{example}

\begin{figure}[ht!]
	\centering
	\begin{subfigure}[]{0.24\textwidth}
		\includegraphics[width=\textwidth, height=0.8\textwidth]{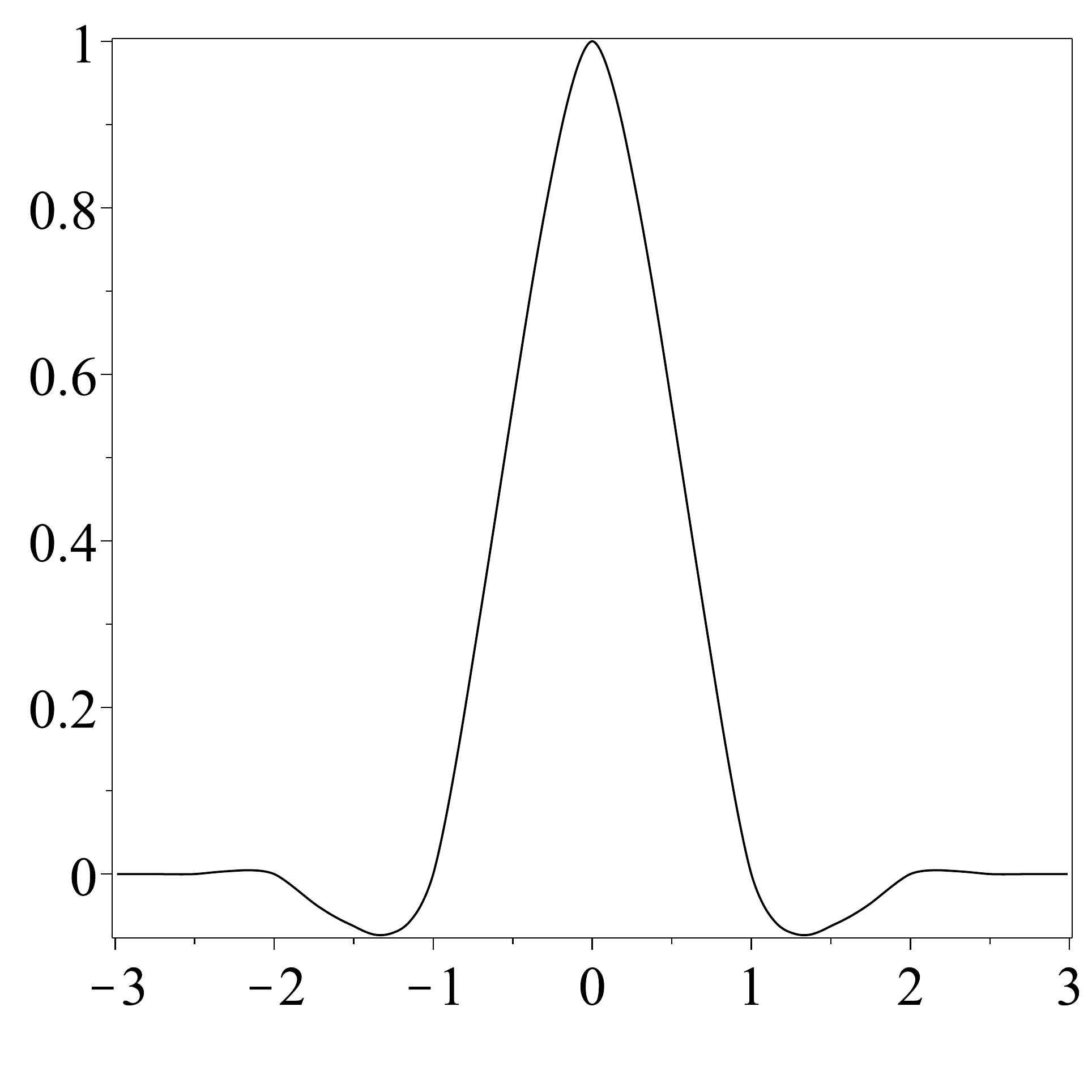}
		\caption{$\phi$}
	\end{subfigure}
	\begin{subfigure}[]{0.24\textwidth}
		\includegraphics[width=\textwidth, height=0.8\textwidth]{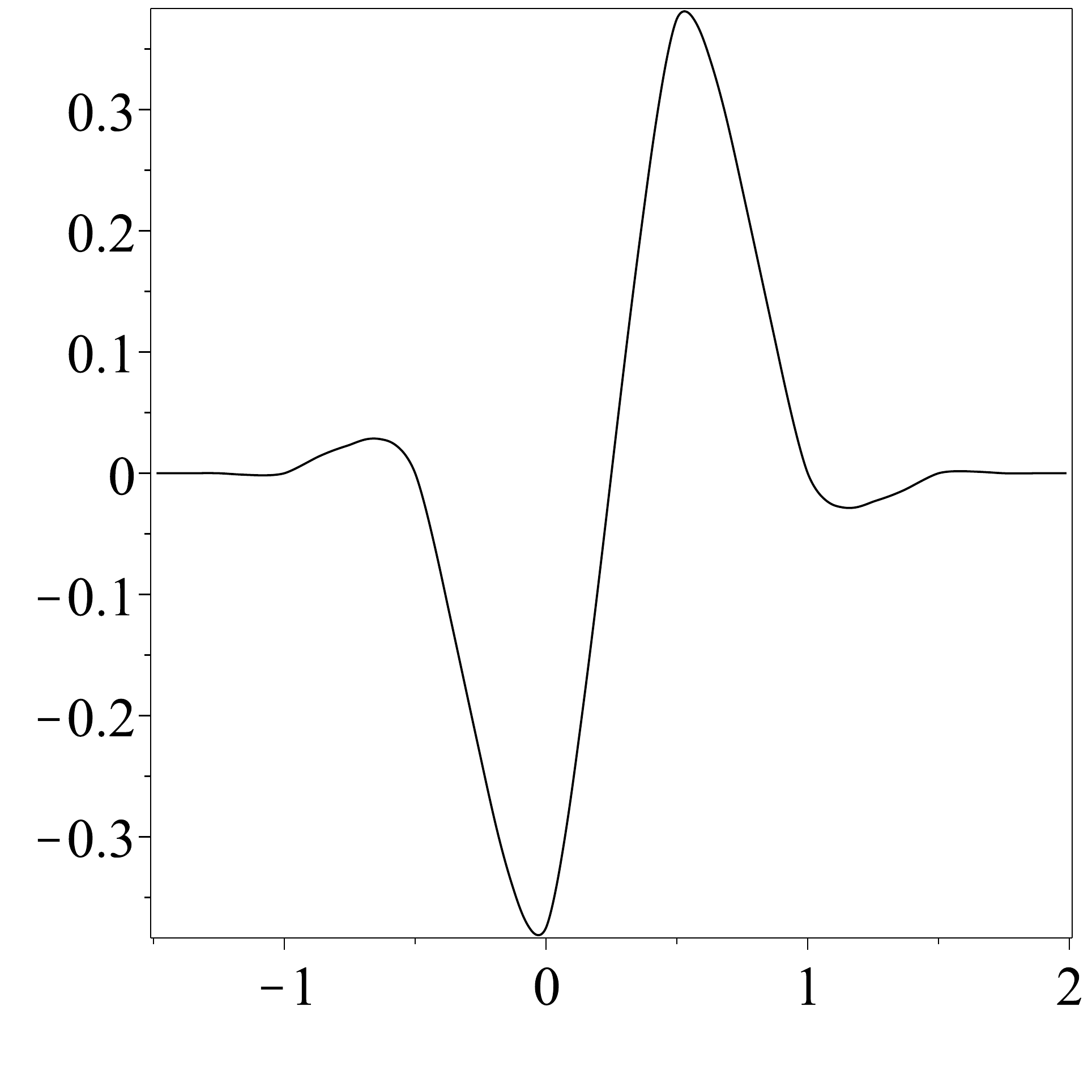}
		\caption{$\psi^1$}
	\end{subfigure}
	\begin{subfigure}[]{0.24\textwidth}
		\includegraphics[width=\textwidth, height=0.8\textwidth]{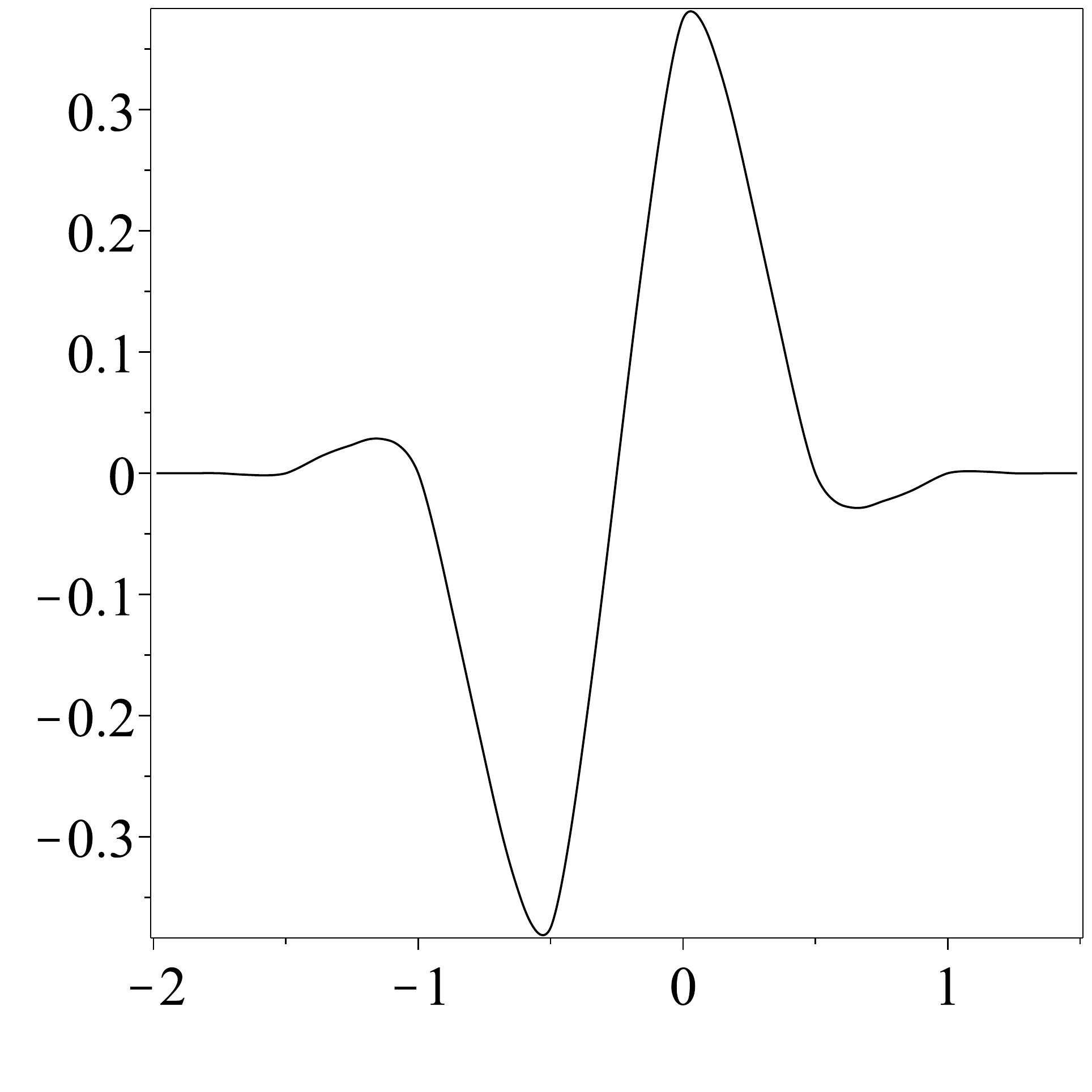}
		\caption{$\psi^2$}
	\end{subfigure}
	\begin{subfigure}[]{0.24\textwidth}
		\includegraphics[width=\textwidth, height=0.8\textwidth]{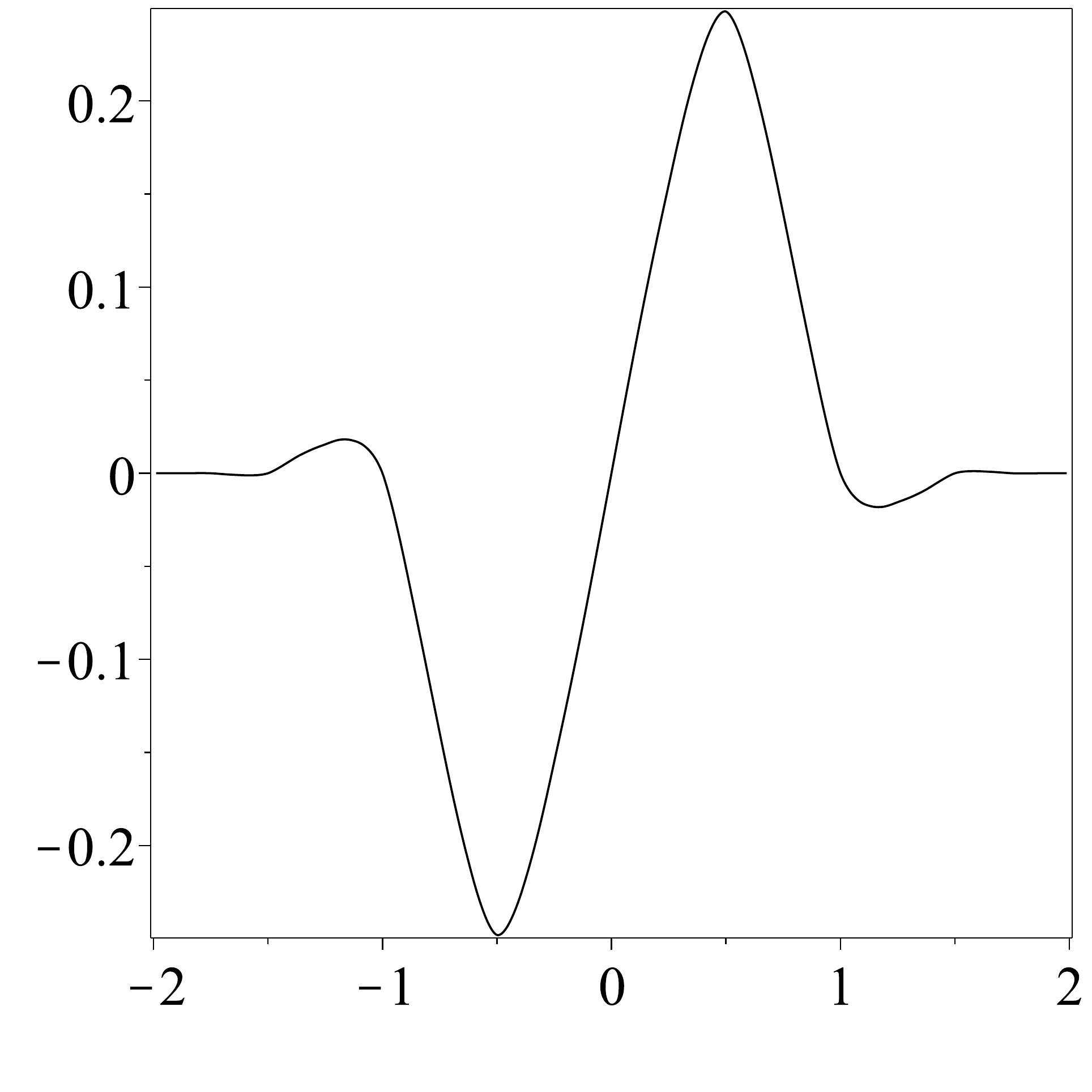}
		\caption{$\psi^3 $ }
	\end{subfigure}
	\\
	\begin{subfigure}[]{0.24\textwidth}
		\includegraphics[width=\textwidth, height=0.8\textwidth]{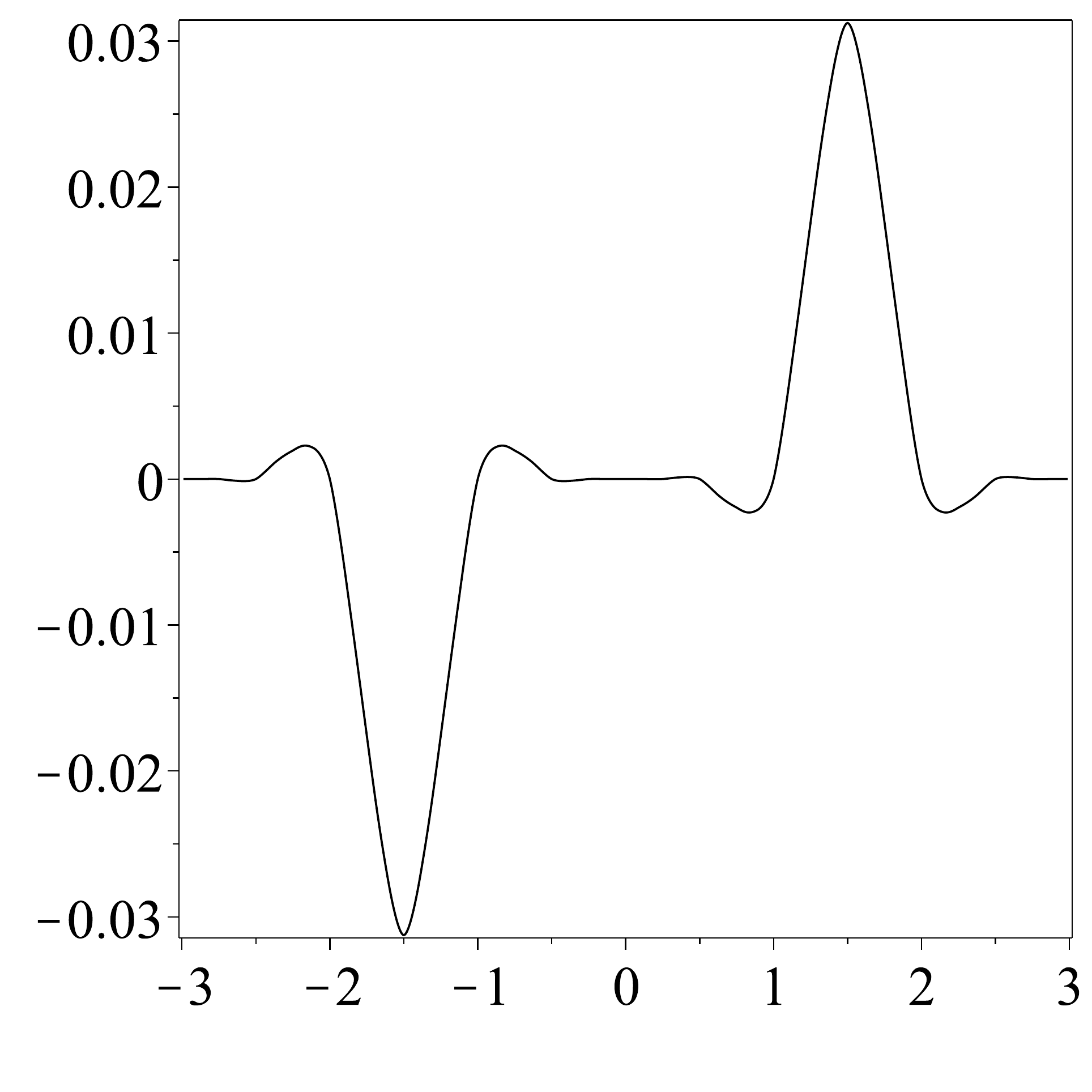}
		\caption{$\psi^4$}
	\end{subfigure}
	\begin{subfigure}[]{0.24\textwidth}
		\includegraphics[width=\textwidth, height=0.8\textwidth]{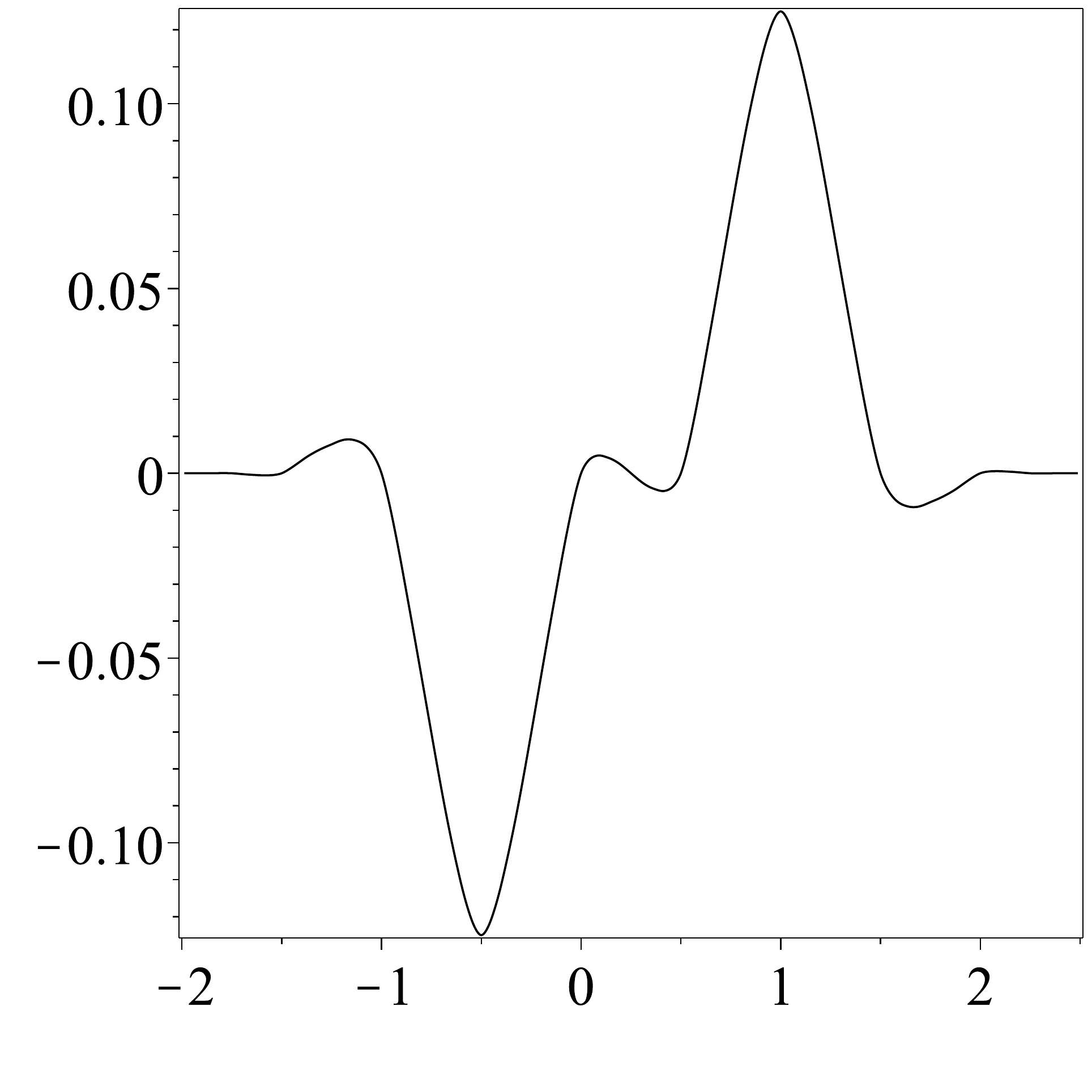}
		\caption{$\psi^5$}
	\end{subfigure}
	\begin{subfigure}[]{0.24\textwidth}
		\includegraphics[width=\textwidth, height=0.8\textwidth]{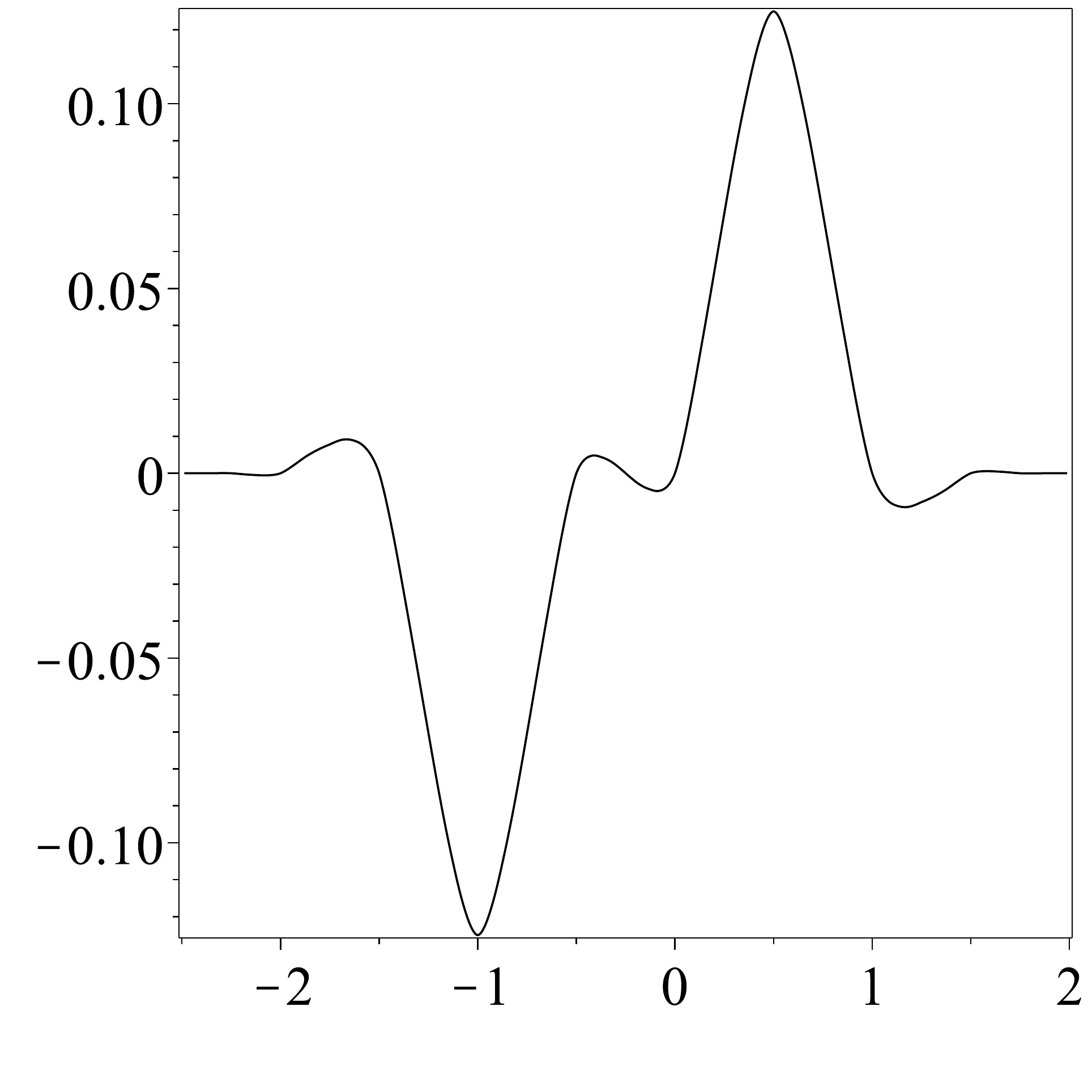}
		\caption{$\psi^6$}
	\end{subfigure}
	\begin{subfigure}[]{0.24\textwidth}
		\includegraphics[width=\textwidth, height=0.8\textwidth]{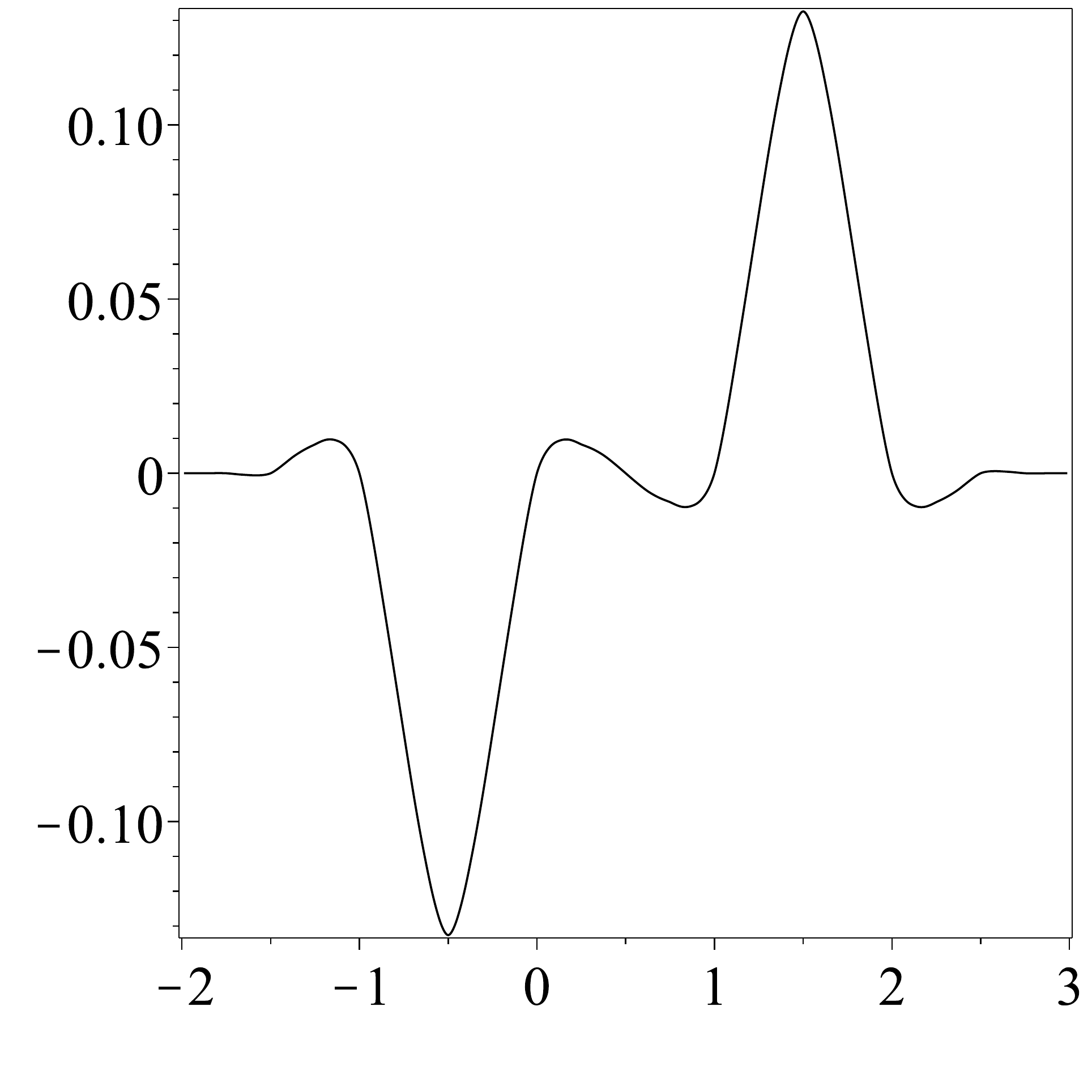}
		\caption{$\psi^7 $ }
	\end{subfigure}
\caption{Graphs of $\phi,\psi^1,\ldots,\psi^7$
in Example~\ref{ex:1dVM1}, where
$\{\phi;\psi^1,\ldots,\psi^7\}_{(1,1,1,1,-1,-1,-1)}$ is a compactly supported quasi-tight $2$-framelet in $\Lp{2}$.
}\label{fig:1dVM1}
\end{figure}

\begin{example}  \label{ex:qcxVM1} {\rm
For $d=2$, we consider the quincunx dilation matrix $\dm_{\sqrt{2}}$ and a low-pass filter $a$:
\begin{equation}\label{quincunx}
\dm_{\sqrt{2}} = \begin{bmatrix} 1 & 1 \\ 1 & -1
\end{bmatrix},\qquad\quad
\ta = \begin{bmatrix}
	0 & \tfrac{1}{8} & 0 \\[0.5em]
	\tfrac{1}{8} & {\bf\underline{\tfrac{1}{2}}} & \tfrac{1}{8} \\[0.5em]
	0 & \tfrac{1}{8} & 0
	\end{bmatrix}_{[-1, 1]\times [-1, 1]}.
\end{equation}
%
Using Theorem~\ref{thm:qtf}, we have a directional tight $\dm_{\sqrt{2}}$-framelet filter bank $\{\ta; \tb_1, \ldots, \tb_8\}$, where
{\small \begin{align*}
&\tb_1 = \tfrac{1}{4}\begin{bmatrix}
{\bf\underline{-1}} & 1
	\end{bmatrix}_{[0, 1]\times[0, 0]},
	\quad
	\tb_2 = \tfrac{1}{4}\begin{bmatrix}
	-1 & 1
	\end{bmatrix}_{[0, 1]\times[-1, -1]},
	\quad
	\tb_3 =  \tfrac{1}{4}\begin{bmatrix}
	-1 \\ 1
	\end{bmatrix}_{[1, 1]\times[-1, 0]},
	\quad
	\tb_4 = \tfrac{1}{4}\begin{bmatrix}
	-1 \\ 1
	\end{bmatrix}_{[1, 1]\times[0, 1]},\\
&\tb_5 = \tfrac{1}{8}\begin{bmatrix}
	-1 & {\bf\underline{0}} & 1
	\end{bmatrix}_{[-1, 1]\times[0, 0]},
	\quad
	\tb_6 =  \tfrac{1}{8} \begin{bmatrix}
	-1 \\ {\bf\underline{0}} \\ 1
	\end{bmatrix}_{[0, 0]\times[-1, 1]},
	\quad
	\tb_7 = \tfrac{\sqrt{2}}{8}\begin{bmatrix}
	{\bf\underline{0}} & -1 \\ 1 & 0
	\end{bmatrix}_{[0, 1]\times[-1, 0]},
	\quad
	\tb_8 = \tfrac{\sqrt{2}}{8}\begin{bmatrix}
	-1 & 0 \\
	{\bf\underline{0}} & 1
	\end{bmatrix}_{[0, 1]\times[0, 1]}.
\end{align*}
}
Note that $\sr(a,\dm_{\sqrt{2}})=2$.
Since $\sm(a,\dm_{\sqrt{2}})\approx 1.577645$, $\phi\in L_2(\R^2)$ and $\{\phi;\psi^1,\ldots,\psi^8\}$ is a (directional) tight $\dm_{\sqrt{2}}$-framelet in $L_2(\R^2)$, where $\phi,\psi^1,\ldots,\psi^8$ are defined in \eqref{phi:psi} with $\dm=\dm_{\sqrt{2}}$ and $s=8$. See Figure~\ref{fig:qcxVM1} for the graphs of $\phi,\psi^1,\ldots,\psi^8$.
} \end{example}

\begin{figure}[hbt]
	\centering
	\begin{subfigure}[]{0.31\textwidth}
		\includegraphics[width=\textwidth, height=0.7\textwidth]{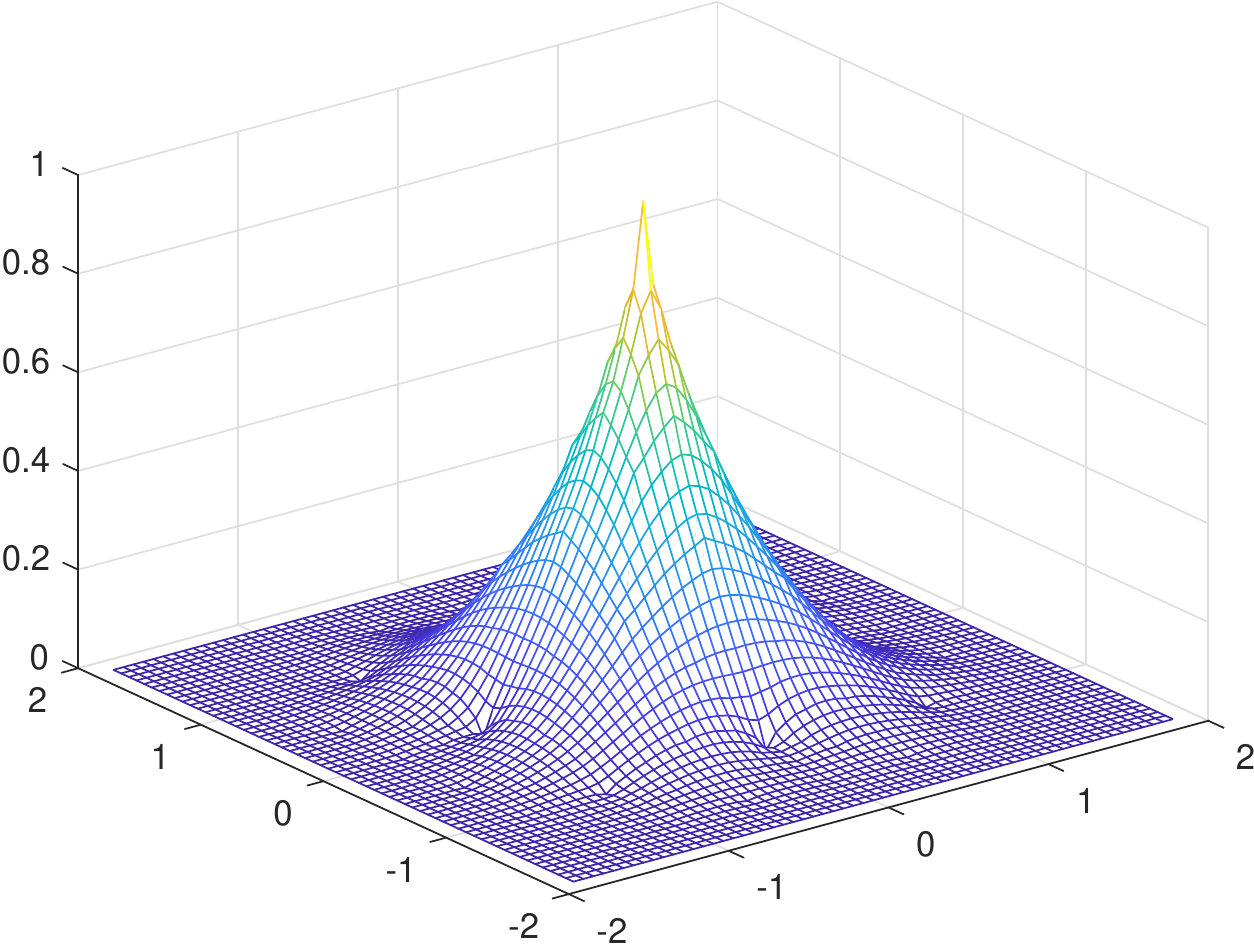}
		\caption{$\phi$}
	\end{subfigure}\vspace{0.05in}
	\begin{subfigure}[]{0.31\textwidth}
		\includegraphics[width=\textwidth, height=0.7\textwidth]{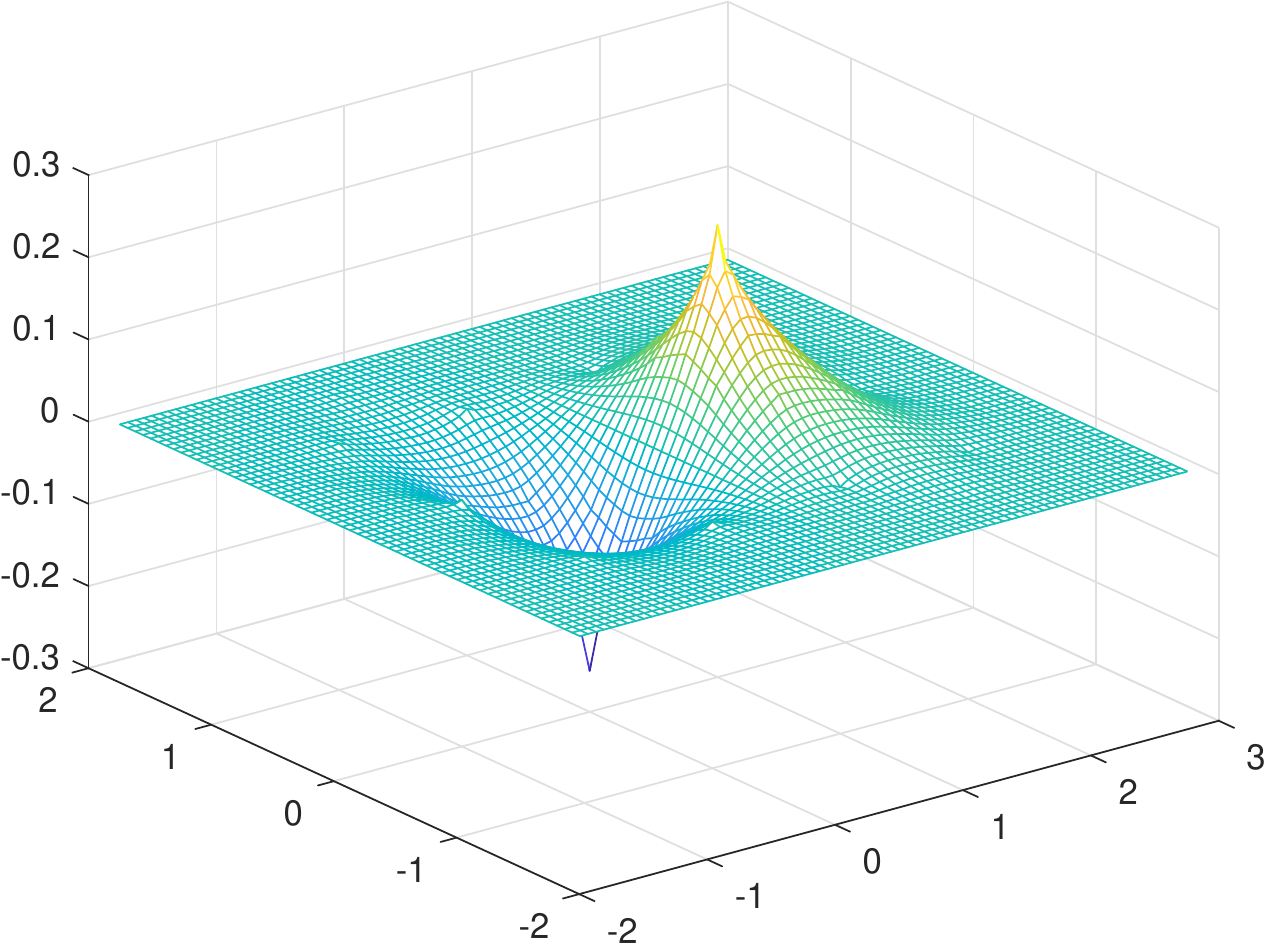}
		\caption{$\psi^1$}
	\end{subfigure}\vspace{0.05in}
	\begin{subfigure}[]{0.31\textwidth}
		\includegraphics[width=\textwidth, height=0.7\textwidth]{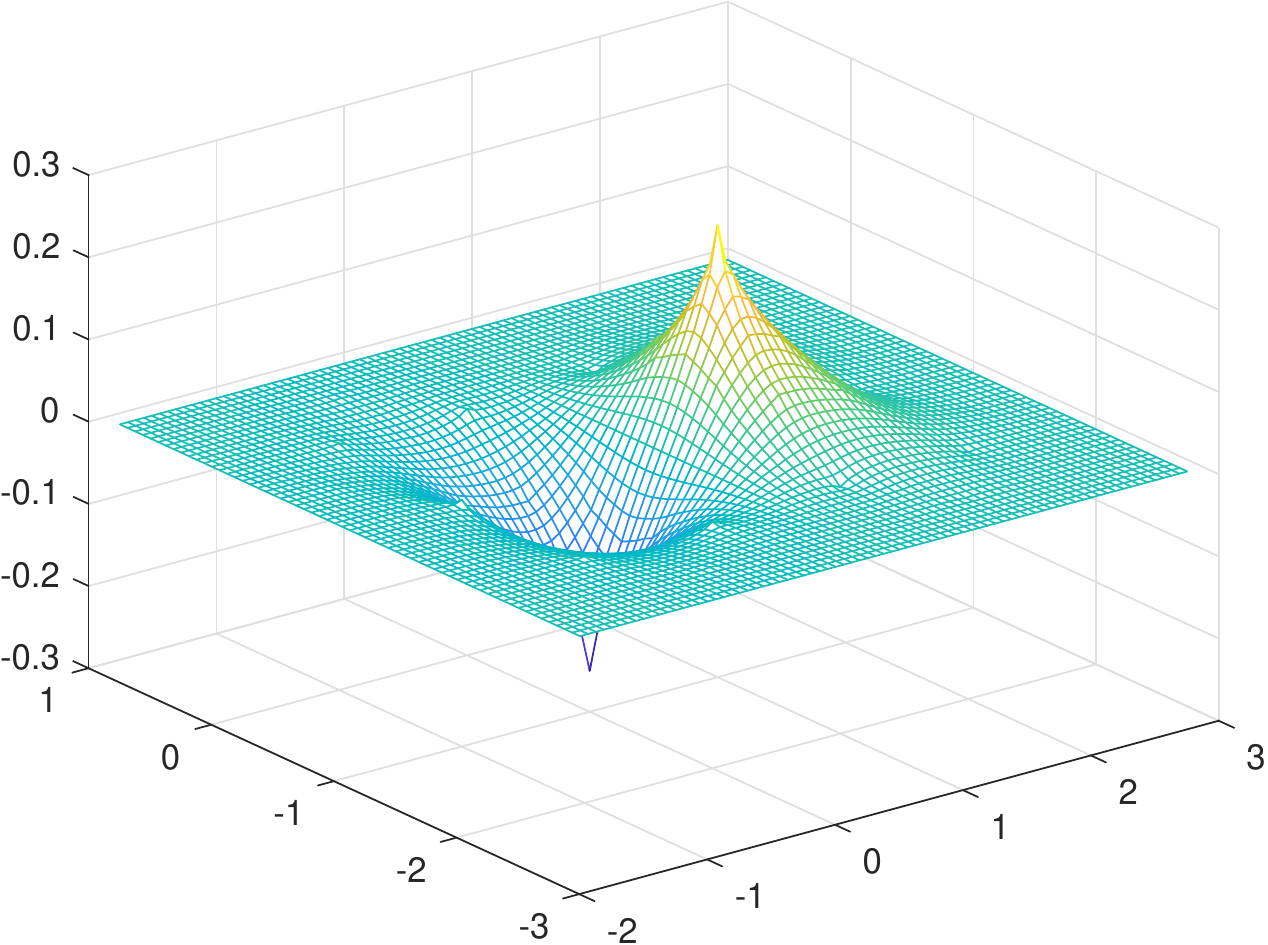}
		\caption{$\psi^2$}
	\end{subfigure}\vspace{0.05in}
	\\
	\begin{subfigure}[]{0.31\textwidth}
		\includegraphics[width=\textwidth, height=0.7\textwidth]{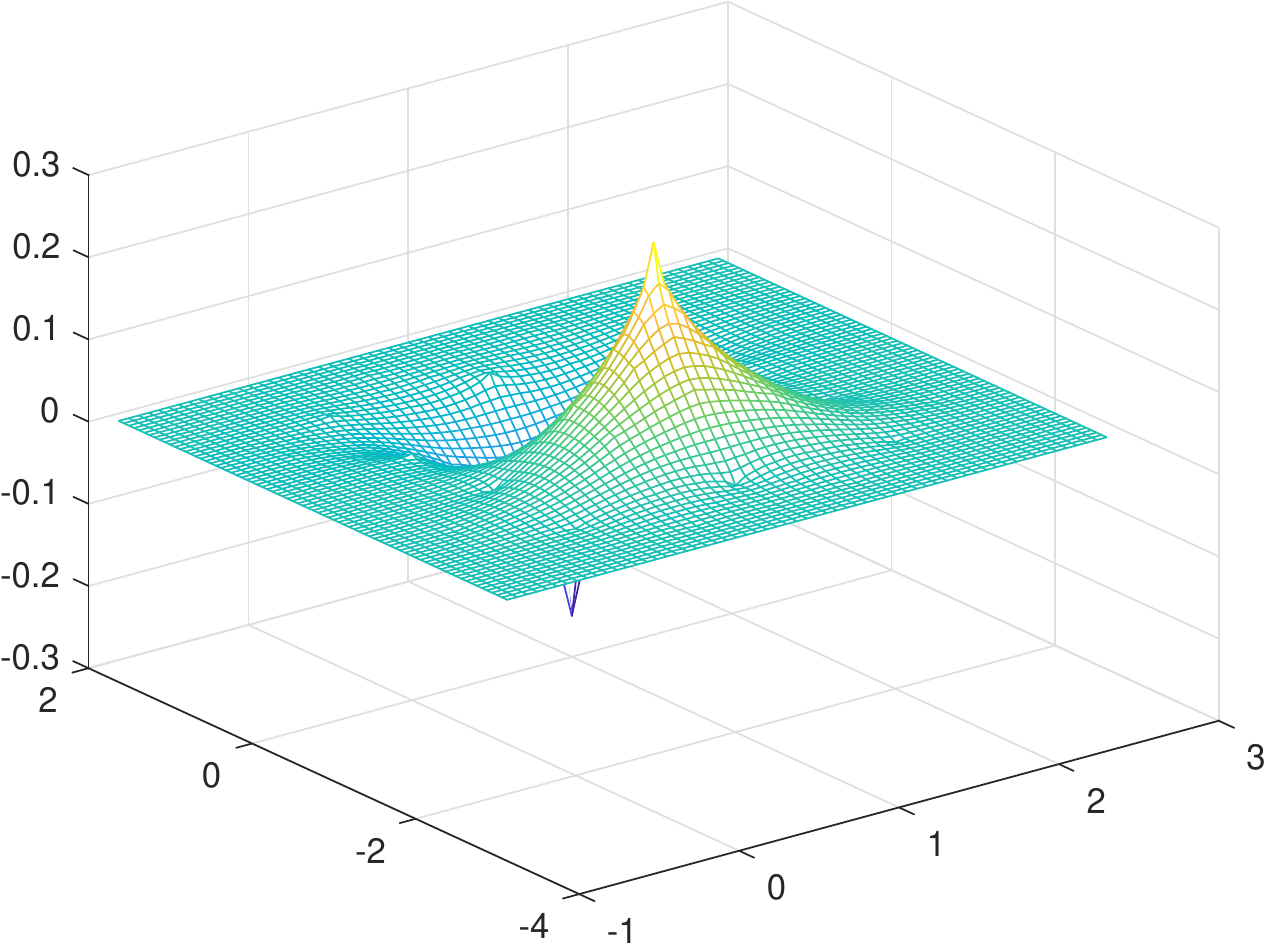}
		\caption{$\psi^3 $ }
	\end{subfigure}\vspace{0.05in}
	\begin{subfigure}[]{0.31\textwidth}
		\includegraphics[width=\textwidth, height=0.7\textwidth]{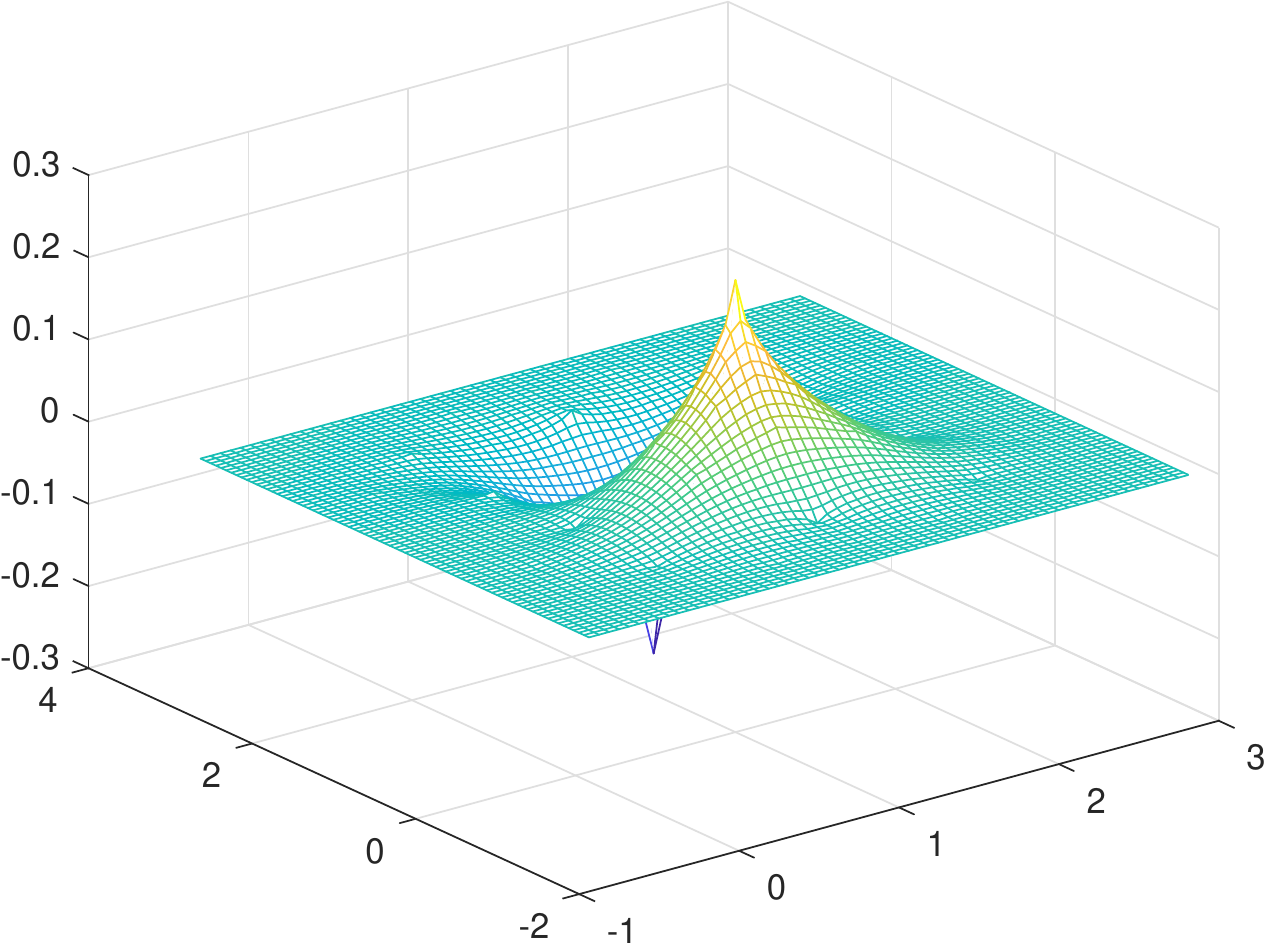}
		\caption{$\psi^4$}
	\end{subfigure}\vspace{0.05in}
	\begin{subfigure}[]{0.31\textwidth}
		\includegraphics[width=\textwidth, height=0.7\textwidth]{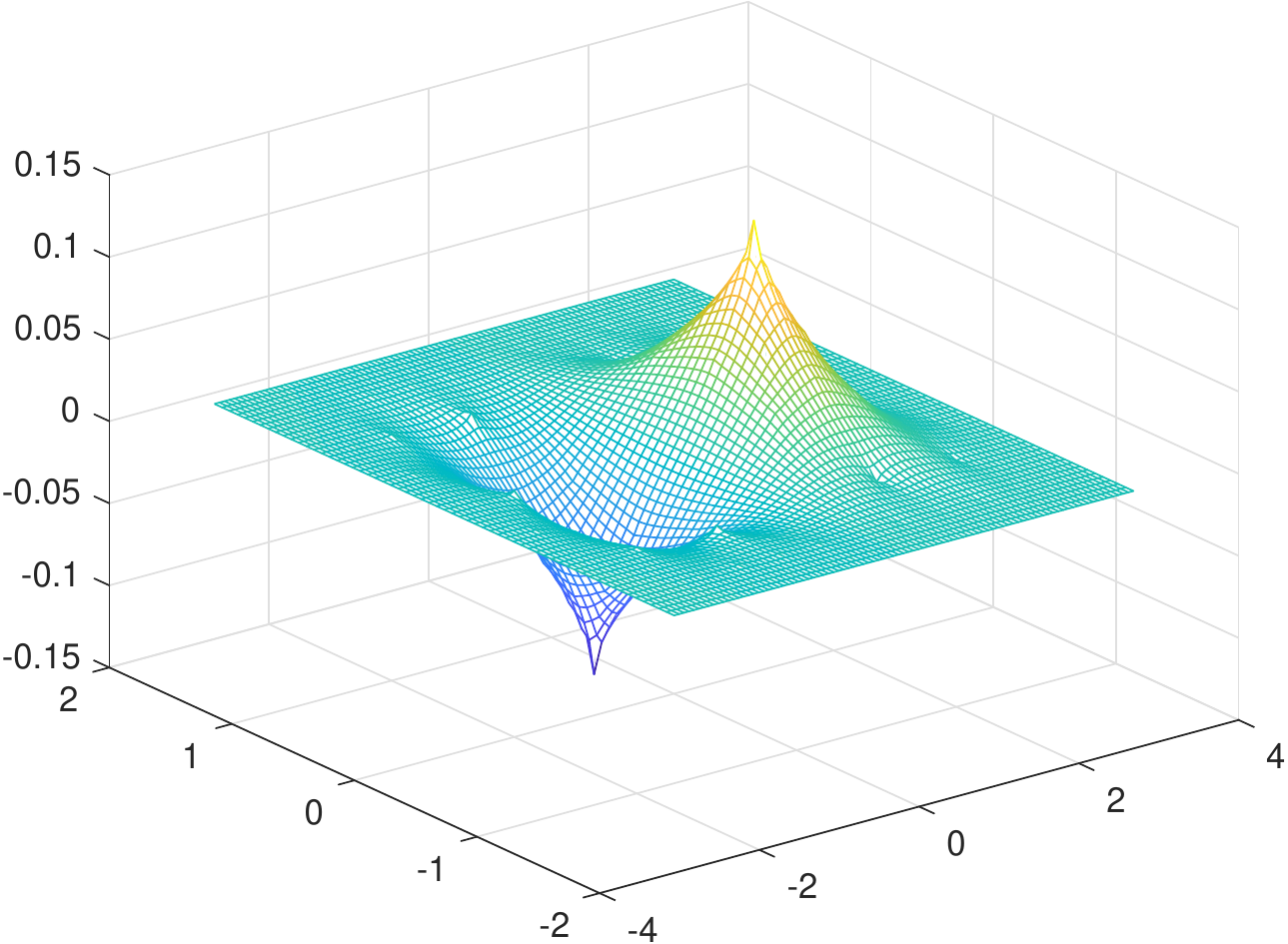}
		\caption{$\psi^5$}
	\end{subfigure}
	\\
	\begin{subfigure}[]{0.31\textwidth}
		\includegraphics[width=\textwidth, height=0.7\textwidth]{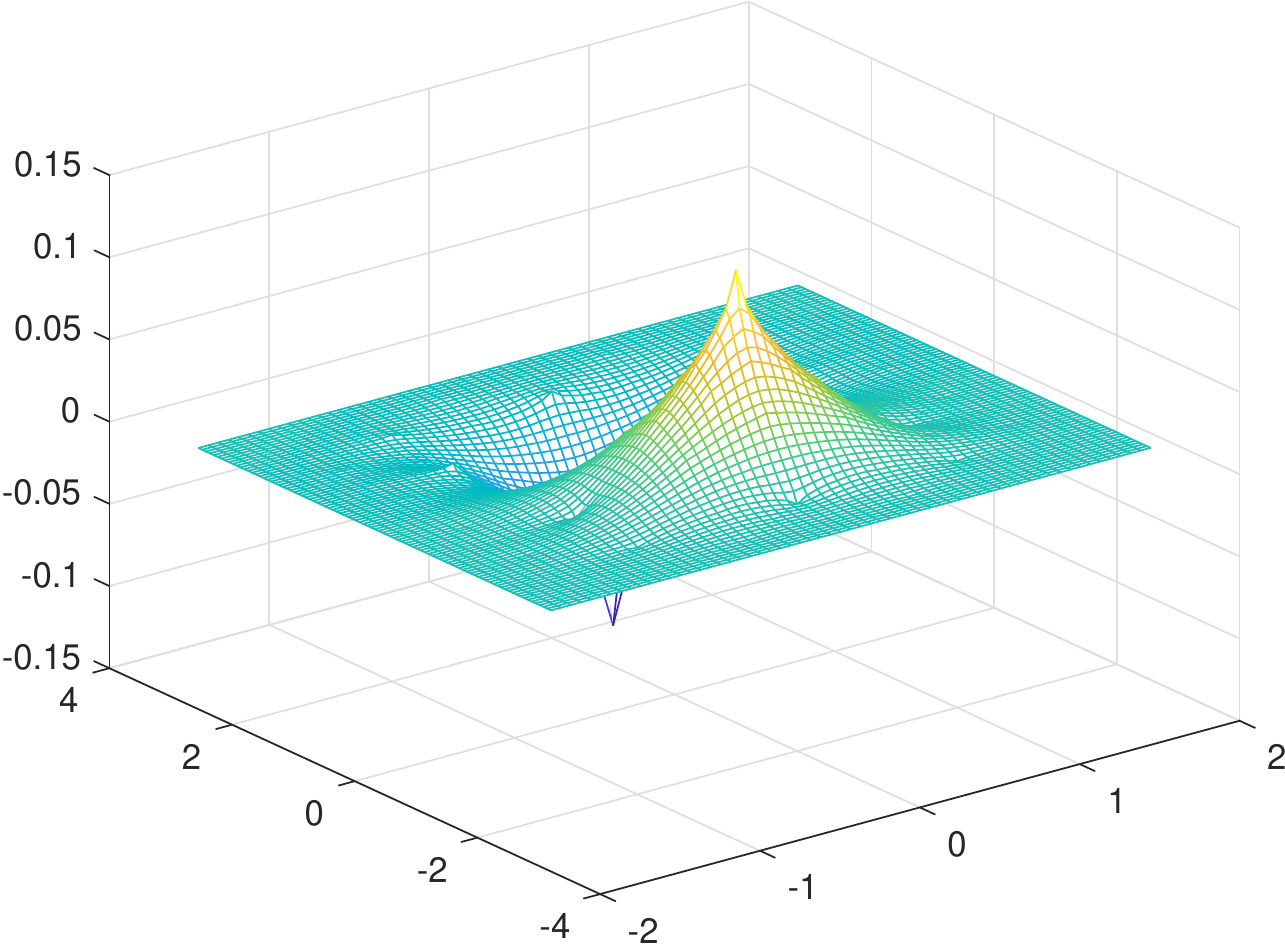}
		\caption{$\psi^6$}
	\end{subfigure}
	\begin{subfigure}[]{0.31\textwidth}
		\includegraphics[width=\textwidth, height=0.7\textwidth]{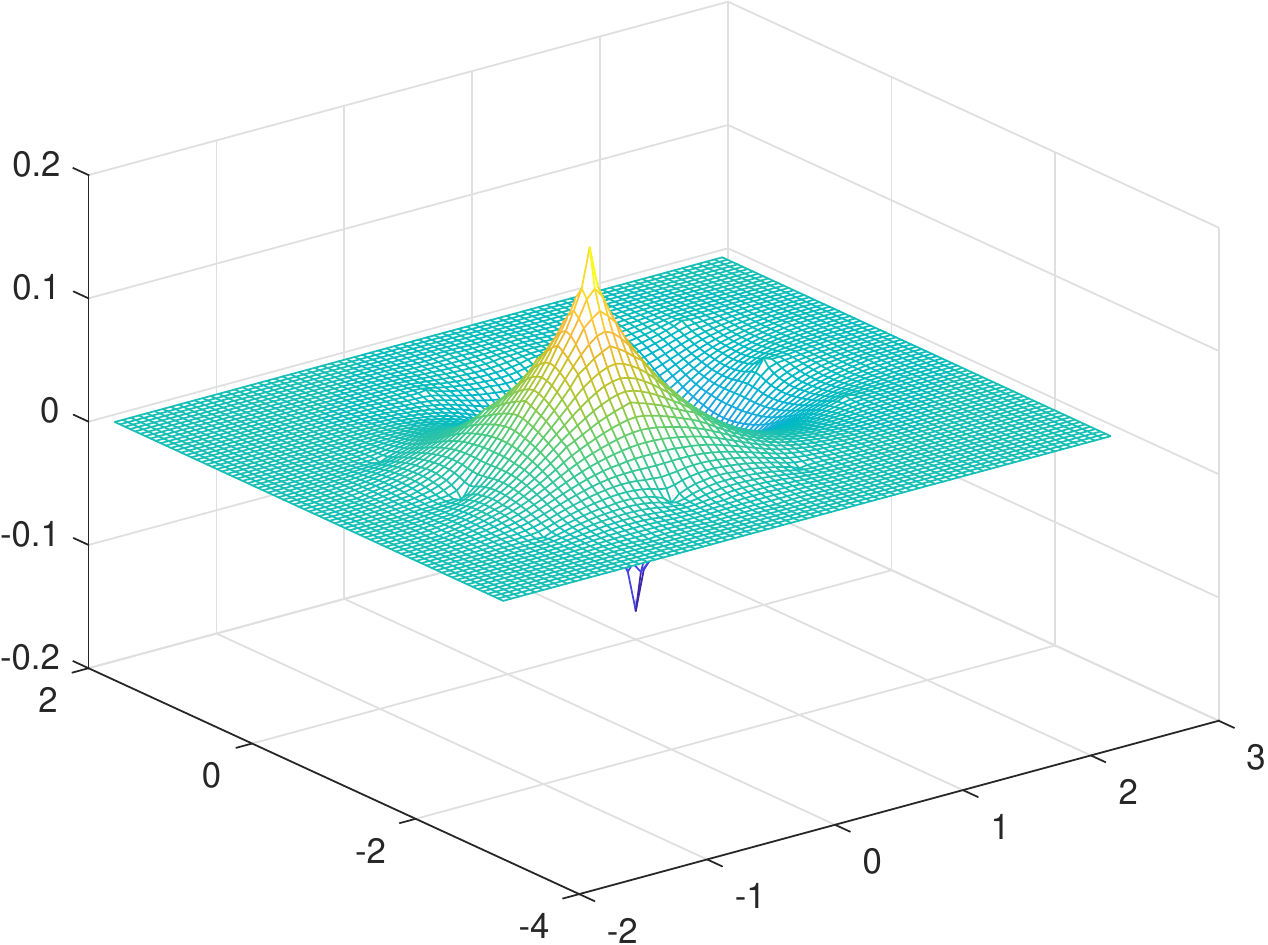}
		\caption{$\psi^7 $ }
	\end{subfigure}
	\begin{subfigure}[]{0.31\textwidth}
		\includegraphics[width=\textwidth, height=0.7\textwidth]{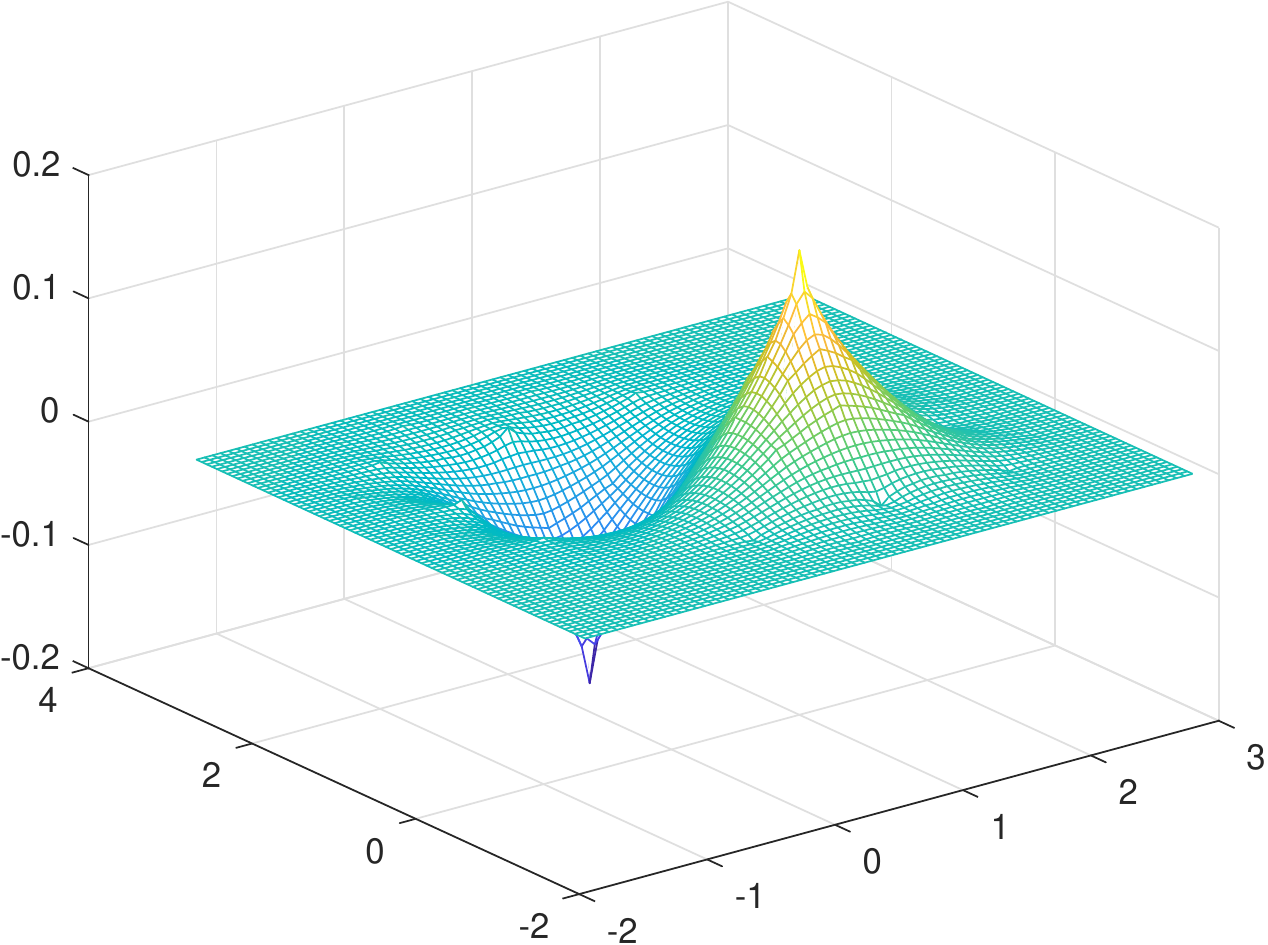}
		\caption{$\psi^8 $ }
	\end{subfigure}
\caption{Graphs of $\phi,\psi^1,\ldots,\psi^8$
in Example~\ref{ex:qcxVM1}, where
$\{\phi;\psi^1,\ldots,\psi^8\}$ is a compactly supported directional tight $\dm_{\sqrt{2}}$-framelet in $L_2(\R^2)$.			 }
\label{fig:qcxVM1}
\end{figure}

\begin{example} \label{ex:sqrt3}{\rm
For $d=2$, we consider the dilation matrix $\dm_{\sqrt{3}}$ and a low-pass filter $a$ as follows:
\[
\dm_{\sqrt{3}}=
\begin{bmatrix} 1 & -2 \\ 2 & -1
\end{bmatrix},\qquad\quad
\ta = \begin{bmatrix}
0 & \tfrac{1}{9} & \tfrac{1}{9} \\[0.5em]
\tfrac{1}{9} & {\bf\underline{\tfrac{1}{3}}} & \tfrac{1}{9} \\[0.5em]
\tfrac{1}{9} & \tfrac{1}{9} & 0
\end{bmatrix}_{[-1, 1]\times [-1, 1]}.
\]
Using Theorem~\ref{thm:qtf}, we have a directional tight $\dm_{\sqrt{3}}$-framelet filter bank
$ \{\ta; \tb_1, \ldots, \tb_{18}\}$, where
{\small
\begin{align*}
&b_1 = \tfrac{\sqrt{3}}{9}\begin{bmatrix}
{\bf\underline{-1}} & 1
\end{bmatrix}_{[0,1]\times[0,0]}, \quad
b_2 = \tfrac{\sqrt{3}}{9}\begin{bmatrix}
-1 & 1
\end{bmatrix}_{[1,2]\times[1,1]}, \quad
b_3 = \tfrac{\sqrt{3}}{9}\begin{bmatrix}
-1 \\ 1
\end{bmatrix}_{[1,1]\times[-1,0]}, \quad
b_4 = \tfrac{\sqrt{3}}{9}\begin{bmatrix}
-1 \\ 1
\end{bmatrix}_{[1,1]\times[1,2]},\\
&b_5 = \tfrac{\sqrt{2}}{9} \begin{bmatrix}
-1 \\ 1
\end{bmatrix}_{[1,1]\times[0,1]},\quad
b_6 = \tfrac{\sqrt{2}}{9} \begin{bmatrix}
-1 &  1
\end{bmatrix}_{[0,1]\times[1,1]}, \quad
b_7 = \tfrac{\sqrt{3}}{9} \begin{bmatrix}
0 & -1 \\ 1 & 0
\end{bmatrix}_{[1,2]\times[0,1]}, \quad
b_8 = \tfrac{\sqrt{3}}{9} \begin{bmatrix}
0 & -1 \\ {\bf\underline{1}} & 0
\end{bmatrix}_{[0,1]\times[0,1]},
\end{align*}
}
{\small
\begin{align*}
&b_9 = \tfrac{\sqrt{2}}{9}\begin{bmatrix}
0 & -1 \\ 1 & 0
\end{bmatrix}_{[1,2]\times[1,2]}, \quad
b_{10} = \tfrac{1}{9}\begin{bmatrix}
{{-1}} & 0 & 1
\end{bmatrix}_{[1,3]\times[1,1]}, \quad
b_{11} = \tfrac{1}{9}\begin{bmatrix}
{{-1}} \\ 0 \\ 1
\end{bmatrix}_{[1,1]\times[1,3]}, \quad
b_{12} = \tfrac{1}{9} \begin{bmatrix}
-1 & 0 \\ 0 & {{1}}
\end{bmatrix}_{[1,2]\times[-1, 0]},\\
&b_{13} = \tfrac{1}{9} \begin{bmatrix}
-1 & 0 \\ 0 & 1
\end{bmatrix}_{[1,2]\times[0, 1]}, \quad
b_{14} = \tfrac{1}{9}\begin{bmatrix}
{\bf\underline{0}} & -1 \\ 0 & 0 \\ 1 & 0
\end{bmatrix}_{[0,1]\times[-2, 0]}, \quad
b_{15} = \tfrac{1}{9} \begin{bmatrix}
0 & -1 \\ {\bf\underline{0}} & 0 \\ 1 & 0
\end{bmatrix}_{[0,1]\times[-1, 1]},\\
&b_{16} = \tfrac{1}{9} \begin{bmatrix}
0 & 0 & -1 \\  1 & 0 & 0
\end{bmatrix}_{[1,3]\times[0, 1]}, \quad
b_{17} = \tfrac{1}{9} \begin{bmatrix}
0 & 0 & -1 \\  1 & 0 & 0
\end{bmatrix}_{[1,3]\times[1, 2]}, \quad
b_{18} = \tfrac{1}{9} \begin{bmatrix}
0 & 0 & -1 \\
0 & {\bf\underline{0}} & 0 \\
1 & 0 & 0
\end{bmatrix}_{[-1,1]\times[-1, 1]}.
\end{align*}
}
Since $\sm(a,\dm_{\sqrt{3}})\approx 1.657138$, $\phi\in L_2(\R^2)$ and $\{\phi;\psi^1,\ldots,\psi^{18}\}$ is a (directional) tight $\dm_{\sqrt{3}}$-framelet in $L_2(\R^2)$, where $\phi,\psi^1,\ldots,\psi^{18}$ are defined in \eqref{phi:psi} with $\dm=\dm_{\sqrt{3}}$ and $s=18$.
}\end{example}

\section{Multivariate Quasi-tight Framelets with High Vanishing Moments}

Multivariate framelets having high vanishing moments are of interest and importance in both theory and applications.
In this section we shall prove Theorem~\ref{thm:qtf:vm} by showing that we can always construct a quasi-tight framelet with the highest possible order of vanishing moments from an arbitrarily given multivariate refinable function. In particular, we provide an algorithm for constructing quasi-tight framelets with high vanishing moments.

To prove Theorem~\ref{thm:qtf:vm}, we need a few auxiliary results and recall some necessary notations.
For $k\in \dZ$ and $u\in \dlp{0}$, the difference operator $\nabla_k u$ is defined to be $\nabla_k u:=u-u(\cdot-k)$. For $\nu=(\nu_1,\ldots,\nu_d)^\tp\in \dNN$, we define $\nabla^\nu:=\nabla_{e_1}^{\nu_1}\cdots\nabla_{e_d}^{\nu_d}$.
Recall that $\td$ is the Dirac sequence such that $\td(0)=1$ and $\td(k)=0$ for all $k\in \dZ\bs\{0\}$. Therefore, for $u\in \dlp{0}$,
\be \label{nabla}
\wh{\nabla^\nu u}(\xi)=
\wh{\nabla^\nu \td}(\xi)\wh{u}(\xi)=
(1-e^{-i\xi_1})^{\nu_1}\cdots(1-e^{-i\xi_d})^{\nu_d}\wh{u}(\xi),\qquad
\xi=(\xi_1,\ldots,\xi_d)^\tp\in \dR.
\ee

The following result is known in \cite[Theorem~3.6]{han03jat} and \cite[Lemma~7.2.2]{hanbook}. For the convenience of the reader, we provide a slightly modified proof from \cite[Lemma~7.2.2]{hanbook} here.

\begin{lemma}\label{lem:nabla}
Let $m\in \N$ and $u=\{u(k)\}_{k\in \dZ}\in \dlp{0}$. Then $u$ has order $m$ vanishing moments (i.e., $\wh{u}(\xi)=\bo(\|\xi\|^m)$ as $\xi \to 0$) if and only if there exist $u_\nu\in \dlp{0}$ for all $\nu\in \dNN$ with $|\nu|=m$ such that
$u=\sum_{\nu\in \dNN, |\nu|=m} \nabla^\nu u_\nu$, that is,
$\wh{u}(\xi)=\sum_{\nu\in \dNN, |\nu|=m} \wh{\nabla^\nu\td}(\xi) \wh{u_\nu}(\xi)$.
Moreover, if the filter $u$ has real coefficients, then all filters $u_\nu$ have real coefficients.
\end{lemma}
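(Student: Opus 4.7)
The plan is to establish the easy ``if'' direction by direct inspection and the ``only if'' direction by induction on $m$. The easy direction will follow immediately from \eqref{nabla}, since each term $\wh{\nabla^\nu u_\nu}(\xi) = (1-e^{-i\xi_1})^{\nu_1}\cdots(1-e^{-i\xi_d})^{\nu_d}\wh{u_\nu}(\xi)$ is $\bo(\|\xi\|^{|\nu|}) = \bo(\|\xi\|^m)$ as $\xi \to 0$, so any finite sum $u = \sum_{|\nu|=m}\nabla^\nu u_\nu$ automatically has order $m$ vanishing moments, and the real-coefficient property is preserved.

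For the harder direction, I will induct on $m$. The base case $m=1$ will rest on the telescoping identity
$$e^{-ik\cdot\xi}-1 \;=\; \sum_{j=1}^d e^{-i(k_1\xi_1+\cdots+k_{j-1}\xi_{j-1})}\bigl(e^{-ik_j\xi_j}-1\bigr),$$
combined with the single-variable factorization $e^{-ik_j\xi_j}-1 = -(1-e^{-i\xi_j})\,q_{k_j}(\xi_j)$, where $q_{k_j}$ is a trigonometric polynomial with real coefficients for every integer $k_j$ (explicitly, a finite geometric sum of powers of $e^{-i\xi_j}$, possibly multiplied by $e^{in\xi_j}$ when $k_j<0$). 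Multiplying by $u(k)$, summing over $k\in \dZ$, and using $\wh{u}(0)=\sum_k u(k)=0$ will yield $\wh{u}(\xi) = \sum_{j=1}^d (1-e^{-i\xi_j})\wh{v_j}(\xi)$ for some $v_j\in \dlp{0}$ with real coefficients, i.e., $u = \sum_{|\nu|=1}\nabla^\nu v_\nu$.

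For the inductive step from $m$ to $m+1$, assume the claim at level $m$ and suppose $\wh{u}(\xi) = \bo(\|\xi\|^{m+1})$. Since this is a fortiori $\bo(\|\xi\|^m)$, the induction hypothesis supplies a decomposition $u = \sum_{|\nu|=m}\nabla^\nu u_\nu$ with $u_\nu \in \dlp{0}$. The main obstacle is that these $u_\nu$ a priori carry no vanishing moments, so one cannot iterate blindly; the key idea will be to use the extra order of vanishing of $\wh{u}$ to force one vanishing moment onto each $u_\nu$. Taylor expanding each factor $(1-e^{-i\xi_j}) = i\xi_j + \bo(\xi_j^2)$ gives $(1-e^{-i\xi_1})^{\nu_1}\cdots(1-e^{-i\xi_d})^{\nu_d} = (i\xi)^\nu + \bo(\|\xi\|^{m+1})$ for $|\nu|=m$, whence
$$\wh{u}(\xi) \;=\; \sum_{|\nu|=m} (i\xi)^\nu\, \wh{u_\nu}(0) \;+\; \bo(\|\xi\|^{m+1}).$$
Comparing with $\wh{u}(\xi) = \bo(\|\xi\|^{m+1})$ will force the homogeneous polynomial $\sum_{|\nu|=m}(i\xi)^\nu\,\wh{u_\nu}(0)$ of degree $m$ to vanish identically, hence $\wh{u_\nu}(0) = 0$ for every $|\nu|=m$. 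Applying the already-proved base case to each $u_\nu$ then produces $u_\nu = \sum_{j=1}^d \nabla^{e_j} u_{\nu,j}$, and regrouping yields $u = \sum_{|\mu|=m+1}\nabla^\mu w_\mu$ with $w_\mu := \sum_{\nu+e_j=\mu} u_{\nu,j}$. Since every $q_{k_j}$ above has real coefficients, starting from a real $u$ will produce real $v_j$ in the base case and real $w_\mu$ throughout the induction, delivering the real-coefficient statement.
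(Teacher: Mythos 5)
Your proof is correct, but it follows a genuinely different route from the paper's. The paper argues by a greedy support-reduction: after shifting $u$ into a simplex $\Lambda_r=\{\mu\in\dNN \setsp |\mu|\le r\}$, it repeatedly subtracts multiples of shifted difference filters $[\nabla^\nu\td](\cdot-j)$ to annihilate the outermost coefficients one at a time, until the residual filter is supported in $\Lambda_{m-1}$; it then invokes the nondegeneracy of the multivariate Vandermonde-type system $\sum_{k\in\Lambda_{m-1}}v(k)k^\mu=0$, $|\mu|\le m-1$, to conclude the residual vanishes. You instead induct on $m$: the base case $m=1$ via the telescoping identity and the geometric-sum factorization $e^{-ik_j\xi_j}-1=-(1-e^{-i\xi_j})q_{k_j}(\xi_j)$ (which is the standard one-variable argument, valid for $k_j$ of either sign as you note), and the step $m\to m+1$ by first decomposing at level $m$ and then using the Taylor comparison $\wh{\nabla^\nu\td}(\xi)=(i\xi)^\nu+\bo(\|\xi\|^{m+1})$ to force $\wh{u_\nu}(0)=0$ for every $|\nu|=m$ --- this last deduction is sound because the monomials $\xi^\nu$ with $|\nu|=m$ are linearly independent, so the vanishing of the degree-$m$ homogeneous part of $\wh{u}$ kills each coefficient $\wh{u_\nu}(0)$ separately --- after which the base case applies to each $u_\nu$ and regrouping finishes. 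Both arguments are constructive and both track real coefficients correctly. Your approach is more elementary in that it avoids the interpolation/Vandermonde fact the paper cites as ``well known,'' while the paper's single-pass elimination gives a cleaner explicit support bound on the resulting $u_\nu$ (they stay inside the same simplex as the shifted $u$), whereas your recursive regrouping can inflate the supports somewhat.
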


\begin{proof}
The sufficiency part is trivial, since $\wh{\nabla^\nu\td}(\xi)=\bo(\|\xi\|^{|\nu|})$ as $\xi \to 0$ (in fact $\vmo(\nabla^\nu \td)=|\nu|$) for all $\nu\in \dNN$. Consequently, it is trivial that $\wh{u}(\xi)=\sum_{\nu\in \dNN, |\nu|=m} \wh{\nabla^\nu\td}(\xi) \wh{u_\nu}(\xi)=\bo(\|\xi\|^m)$ as $\xi \to 0$.

For $r\in \NN$, we define $\Lambda_r:=\{ \mu \in \dNN \setsp |\mu|\le r\}$.
Let $v:=u$.
To prove the necessity part, without loss of generality, by shifting the filter $v$, we can assume that $\mbox{supp}(v)\subseteq \Lambda_r$ but $\mbox{supp}(v)$ is not contained inside $\Lambda_{r-1}$ for some $r\in \NN$. Suppose that $r\ge m$ and $v(k)\ne 0$ for some $k\in \dNN$ with $|k|=r$. Then we can easily write $k=\nu+j$ with $\nu,j\in \dNN$ and $|\nu|=m$. We replace/update $v$ by $v-v(k)(-1)^m [\nabla^\nu\td](\cdot-j)$.
Since $\mbox{supp}([\nabla^\nu \td](\cdot-j))\subseteq \Lambda_{r-1}\cup \{k\}$, we conclude that the updated filter $v$ is still supported inside $\Lambda_r$, $v(k)=0$, and the updated filter $v$ preserves the values as the previous filter at the set $\{ n\in \dNN \setsp |n|=r, n\ne k\}$.
Therefore, we can continue this procedure for other $n\in \dNN$ with $|n|=r$ so that finally the  updated
filter $v$ has support inside $\Lambda_{r-1}$. We can continue this procedure until $r<m$.
Note that the Fourier series of $v(k) [\nabla^\nu\td](\cdot-j)$ is simply $v(k) e^{-ij\cdot \xi}\wh{\nabla^\nu\td}(\xi)$.

Consequently, we can write $\wh{u}=\wh{v}+\sum_{\nu\in \dNN, |v|=m} \wh{\nabla^\nu \td}(\xi)\wh{u_\nu}(\xi)$ for some sequences $u_\nu \in \dlp{0}, \nu\in \dNN$ with $|\nu|=m$ and $\supp(v)\subseteq \Lambda_{m-1}$.
Since $\wh{u}(\xi)=\bo(\|\xi\|^m)$ as $\xi\to 0$, we trivially have $\wh{v}(\xi)=\bo(\|\xi\|^m)$ as $\xi \to 0$, that is, the filter $v$ has order $m$ vanishing moments. Hence,
\[
\sum_{k\in \Lambda_{m-1}} v(k) k^\mu=0, \qquad \forall\, \mu\in \Lambda_{m-1}.
\]
It is well known that the above system of linear equations can only have the trivial solution: $v(k)=0$ for all $k\in \Lambda_{m-1}$. Therefore, $v=0$ and we proved the claim.
\end{proof}

Using Lemma~\ref{lem:nabla}, we have the following result, which improves Corollary~\ref{cor:qsos}.

\begin{lemma}\label{lem:vm}
Let $m\in \NN$ and $u\in \dlp{0}$ with real coefficients. Then
\be \label{u:cond}
\ol{\wh{u}(\xi)}=\wh{u}(\xi) \quad \mbox{and}\quad \wh{u}(\xi)=\bo(\|\xi\|^{2m}),\quad \xi \to 0
\ee
if and only if there exist
$\eps_1,\ldots,\eps_s\in \{-1,1\}$ and $u_1,\ldots, u_s \in \dlp{0}$ with real coefficients satisfying
\be \label{u:qsos}
\wh{u}(\xi)=\sum_{\ell=1}^s \eps_\ell |\wh{u_\ell}(\xi)|^2 \quad \mbox{with}\quad
\wh{u_\ell}(\xi)=\bo(\|\xi\|^m),\quad \xi\to 0,\forall\, \ell=1,\ldots,s.
\ee
\end{lemma}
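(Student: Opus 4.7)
\emph{Sufficiency} is immediate: each $|\wh{u_\ell}|^2$ is real-valued, and $\wh{u_\ell}(\xi)=\bo(\|\xi\|^m)$ gives $|\wh{u_\ell}(\xi)|^2=\bo(\|\xi\|^{2m})$, properties that are preserved under summation.

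For \emph{necessity} I would argue by induction on $m$. The base cases $m=0$ and $m=1$ follow directly from Corollary~\ref{cor:qsos}. For $m=0$, interpret the constant term $\mbox{sgn}(\wh u(0))(\sqrt{|\wh u(0)|})^2$ of the corollary as $\mbox{sgn}(\wh u(0))\,|\wh{u_0}(\xi)|^2$ with $u_0:=\sqrt{|\wh u(0)|}\,\td$. For $m=1$, the hypothesis $\wh u(\xi)=\bo(\|\xi\|^2)$ forces $\wh u(0)=0$, eliminating the constant term, and each remaining two-term filter $\fu_\ell=c_\ell(\td_{\alpha_\ell}-\td_{\beta_\ell})$ supplied by the corollary has exactly one vanishing moment by construction.

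For the inductive step $m-1\Rightarrow m$ with $m\ge 2$, I would first apply Lemma~\ref{lem:nabla} to write $u=\sum_{|\nu|=m}\nabla^\nu w_\nu$ with real $w_\nu\in\dlp{0}$, and then arrange the $w_\nu$ to carry order-$m$ vanishing moments. The starting observation is that the hypothesis $\wh u=\bo(\|\xi\|^{2m})$ automatically forces $\wh{w_\nu}(0)=0$ for every $\nu$: the degree-$m$ Taylor term of $\sum_\nu\wh{\nabla^\nu\td}(\xi)\wh{w_\nu}(\xi)$ equals $i^m\sum_{|\nu|=m}\xi^\nu\wh{w_\nu}(0)$, and since this must vanish because $m<2m$, the linear independence of $\{\xi^\nu:|\nu|=m\}$ delivers the claim. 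Higher-order Taylor matching at degrees $m+1,\ldots,2m-1$ yields further constraints, which I would combine with the non-uniqueness of the Lemma~\ref{lem:nabla} decomposition (i.e., the syzygy module of the map $\{w_\nu\}\mapsto\sum\nabla^\nu w_\nu$) to iteratively redistribute derivatives of the $w_\nu$'s at $\xi=0$ until each $w_\nu$ carries order-$m$ vanishing moments. Using then the symmetry $\wh u=\ol{\wh u}$ together with the polarization identity $2\,\mbox{Re}(a\ol{b})=\tfrac{1}{2}(|a+b|^2-|a-b|^2)$, I would rewrite $\wh u=\sum_\nu\mbox{Re}(\wh{\nabla^\nu\td}\,\wh{w_\nu})$ as a signed sum of squared moduli, each with VM at least $m$, and separate positive and negative coefficient signs to read off the $\eps_\ell$.

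The \textbf{main obstacle} is the step of arranging the $w_\nu$ to have order-$m$ vanishing moments: the Taylor-matching argument supplies many scalar conditions on the Taylor coefficients of $\wh{w_\nu}$ at $\xi=0$, and one must verify that the algebraic freedom provided by the syzygies of $\{\nabla^\nu\td:|\nu|=m\}$ is exactly sufficient to absorb them. A cleaner parallel route I would also pursue is the algebraic statement that any symmetric real trigonometric polynomial with order-$2m$ vanishing at $\xi=0$ lies in the $m$-th power of the ideal generated by $\{|1-e^{-ik\cdot\xi}|^2\}_{k\in\dZ}$; granted this, $\wh u$ expands directly as a real-coefficient linear combination of $m$-fold products of such generators, each being the squared modulus of a filter with at least $m$ vanishing moments, and a final separation of signs yields the required $\eps_\ell\in\{-1,1\}$ without any polarization step.
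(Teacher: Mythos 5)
Your sufficiency argument and your final polarization step are both sound; indeed, the identity $2\,\mathrm{Re}(a\ol{b})=\tfrac{1}{2}(|a+b|^2-|a-b|^2)$ is essentially the completion-of-squares that the paper's own proof uses. The genuine gap is exactly the step you flag as the ``main obstacle'': you never establish that the order-$m$ decomposition $u=\sum_{|\nu|=m}\nabla^\nu w_\nu$ supplied by Lemma~\ref{lem:nabla} can be rearranged so that every $w_\nu$ has order $m$ vanishing moments. The Taylor-matching computation only produces finitely many linear constraints on the derivatives of $\wh{w_\nu}$ at $\xi=0$, and ``iteratively redistributing'' them through the syzygy module of $\{\nabla^\nu\td\}_{|\nu|=m}$ is a plan, not a proof: one must exhibit the redistribution, show it terminates, and show it preserves real coefficients, none of which you do. The fallback route (that every real symmetric trigonometric polynomial vanishing to order $2m$ lies, with constant real coefficients, in the span of $m$-fold products of the generators $|1-e^{-ik\cdot\xi}|^2$) is likewise asserted rather than proved and is itself a nontrivial multivariate claim. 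As written, the necessity direction is therefore incomplete.

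The obstacle has a one-line resolution, and it is precisely the paper's device: apply Lemma~\ref{lem:nabla} at order $2m$ rather than $m$. Since $\wh{u}(\xi)=\bo(\|\xi\|^{2m})$, one may write $u=\sum_{|\sigma|=2m}\nabla^\sigma v_\sigma$; splitting each $\sigma=\alpha+\beta$ with $|\alpha|=|\beta|=m$ gives $\nabla^\sigma v_\sigma=\nabla^\alpha(\nabla^\beta v_\sigma)$, and the inner factor automatically has order $m$ vanishing moments because $\wh{\nabla^\beta\td}(\xi)=\bo(\|\xi\|^{m})$ --- no syzygy analysis is needed. After symmetrizing using $\ol{\wh{u}}=\wh{u}$, the paper completes the square on each term $\wh{\nabla^\alpha\td}\,\wh{\nabla^\beta\td}\,\wh{v_\sigma}+\ol{(\cdots)}$, writing it as $|\ol{\wh{\nabla^\alpha\td}}+\wh{\nabla^\beta\td}\wh{v_\sigma}|^2-|\wh{\nabla^\alpha\td}|^2-|\wh{\nabla^\beta\td}\wh{v_\sigma}|^2$ (your polarization step in disguise), and handles the diagonal case $\sigma=2\mu$ by factoring out $|\wh{\nabla^\mu\td}|^2$ and applying Corollary~\ref{cor:qsos} to the remaining real symbol. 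Note also that your induction on $m$ is vestigial: the inductive hypothesis is never invoked in the step you describe, so the argument should be given directly.
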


\begin{proof}
The sufficiency part is trivial and we only need to prove the necessity part. Suppose that \eqref{u:cond} holds. By Lemma~\ref{lem:nabla}, there exist $u_\nu\in \dlp{0}, |\nu|=2m$ with real coefficients such that $\wh{u}(\xi)=\sum_{\nu\in \dNN,|\nu|=2m} 2\wh{\nabla^\nu \td}(\xi) \wh{u_\nu}(\xi)$.
Since $\ol{\wh{u}(\xi)}=\wh{u}(\xi)$, we conclude that
\be \label{u:nabla:2}
\wh{u}(\xi)=\sum_{\nu\in \dNN, |\nu|=2m} \theta_\nu(\xi)
\quad \mbox{with}\quad \theta_\nu(\xi):= \wh{\nabla^\nu \td}(\xi) \wh{u_\nu}(\xi)+\ol{\wh{\nabla^\nu \td}(\xi)}\,\ol{ \wh{u_\nu}(\xi)}.
\ee
For $\nu\in \dNN$ with $|\nu|=2m$, we consider two cases.

Case 1. $\nu\in \dNN$ with $|\nu|=2m$ but $\nu\not \in 2\dNN$. Then there exist $\alpha,\beta\in \dNN$ such that $\alpha+\beta=\nu$ and $|\alpha|=|\beta|=m$. Define
\[
\wh{u_{\nu,\alpha,\beta}}(\xi):=
\ol{\wh{\nabla^\alpha\td}(\xi)}+\wh{\nabla^\beta\td}(\xi)\wh{u_\nu}(\xi).
\]
By $|\alpha|=|\beta|=m$, we see that
$\ol{\wh{\nabla^\alpha\td}(\xi)}=\bo(\|\xi\|^m)$ and $\wh{\nabla^\beta\td}(\xi)\wh{u_\nu}(\xi)=\bo(\|\xi\|^m)$ as $\xi \to 0$. Consequently, we have $\wh{u_{\nu,\alpha,\beta}}(\xi)=\bo(\|\xi\|^m)$ as $\xi \to 0$.
By calculation, we have
\[
|\wh{u_{\nu,\alpha,\beta}}(\xi)|^2
=|\wh{\nabla^\alpha\td}(\xi)|^2
+|\wh{\nabla^\beta\td}(\xi)\wh{u_\nu}(\xi)|^2
+\Big(
\wh{\nabla^\alpha\td}(\xi)\wh{\nabla^\beta\td}(\xi)
\wh{u_\nu}(\xi)+
\ol{\wh{\nabla^\alpha\td}(\xi)\wh{\nabla^\beta\td}(\xi)
\wh{u_\nu}(\xi)}\Big).
\]
Since $\alpha+\beta=\nu$, the last term in the above identity is simply $\theta_\nu(\xi)$. Consequently, we have
\[
\theta_\nu(\xi)=|\wh{u_{\nu,\alpha,\beta}}(\xi)|^2
-|\wh{\nabla^\alpha\td}(\xi)|^2
-|\wh{\nabla^\beta\td}(\xi)\wh{u_\nu}(\xi)|^2.
\]
To reduce the total number of filters, we can instead combine the last two terms in the above identity with the terms in Case~2 discussed below.

Case 2: $\nu\in 2\dNN$, that is, $\nu=2\mu$ for some $\mu\in \dNN$ with $|\mu|=m$. By $\wh{\nabla^{2\mu}\td}(\xi)=|\wh{\nabla^\mu\td}(\xi)|^2
(-1)^{|\mu|}e^{-i\mu\cdot \xi}$, we deduce that
\[
\theta_\nu(\xi)=\theta_{2\mu}(\xi)=
|\wh{\nabla^\mu\td}(\xi)|^2 \eta_\mu(\xi)\quad \mbox{with}\quad
\eta_\mu(\xi):=
(-1)^{|\mu|} \Big(e^{-i\mu\cdot \xi}\wh{u_{2\mu}}(\xi)+
e^{i\mu\cdot \xi}\ol{\wh{u_{2\mu}}(\xi)}\Big).
\]
Note that $\ol{\eta_{\mu}(\xi)}=\eta_{\mu}(\xi)$ and $\eta_\mu$ has real coefficients. By Corollary~\ref{cor:qsos}, there exist $\eps_0,\ldots,\eps_s\in \{-1,1\}$ and $2\pi\dZ$-periodic trigonometric polynomials $\fu_0,\ldots,\fu_s$ with real coefficients such that $\eta_\mu(\xi)=\eps_0|\fu_0(\xi)|^2+\cdots+\eps_s|\fu_s(\xi)|^2$.
That is, we proved
\[
\theta_{\nu}(\xi)=
\eps_0 |\wh{u_{\mu,0}}(\xi)|^2+\cdots
+\eps_s |\wh{u_{\mu,s}}(\xi)|^2\quad
\mbox{with}\quad
\wh{u_{\mu,\ell}}(\xi):=\wh{\nabla^\mu\td}(\xi)
\fu_\ell(\xi),\qquad \ell=0,\ldots,s.
\]
Note that $\wh{u_{\mu,\ell}}(\xi)=\bo(\|\xi\|^m)$ as $\xi \to 0$ for all $\ell=0,\ldots,s$ by $|\mu|=m$.
Now the conclusion follows trivially from \eqref{u:nabla:2}.
\end{proof}

We now prove Theorem~\ref{thm:qtf:vm}.
Define $\cA(\xi):=1-\ddm\sum_{j=1}^{\ddm}
|\wh{a^{[\gamma_j]}}(\xi)|^2$.
Obviously, $\ol{\cA(\xi)}=\cA(\xi)$.
By the definition of the coset sequences, we observe that $\cA(\dm^\tp\xi)=1-\sum_{\omega\in \Omega_{\dm}} |\wh{a}(\xi+2\pi \omega)|^2$.
By the definition $m=\min(\sr(a,\dm),\frac{1}{2}\vmo(u_a))$ with $\wh{u_a}(\xi):=1-|\wh{a}(\xi)|^2$, we have $\sr(a,\dm)\ge m$ and $\vmo(u_a)\ge 2m$.
Consequently, we have
$|\wh{a}(\xi+2\pi \omega)|^2=\bo(\|\xi\|^{2m})$ as $\xi \to 0$ for all $\omega\in \Omega_{\dm}\bs\{0\}$, and $1-|\wh{a}(\xi)|^2=\wh{u_a}(\xi)=\bo(\|\xi\|^{2m})$ as $\xi \to 0$. That is, we must have
\[
\wh{\cA}(\dm^\tp \xi)=
1-|\wh{a}(\xi)|^2-\sum_{\omega\in \Omega_{\dm}\bs\{0\}} |\wh{a}(\xi+2\pi \omega)|^2=
\bo(\|\xi\|^{2m}),\qquad \xi\to 0.
\]
Since $\dm$ is an invertible matrix, consequently we must have $\cA(\xi)=\bo(\|\xi\|^{2m})$ as $\xi \to 0$.
By Lemma~\ref{lem:vm}, since $\cA$ has real coefficients, there exist $\eps_1,\ldots,\eps_t\in \{-1,1\}$ and $u_1,\ldots,u_t\in \dlp{0}$ with real coefficients such that
\be \label{A:qsos}
\cA(\xi)=\sum_{\ell=1}^t \eps_\ell |\wh{u_\ell}(\xi)|^2 \quad \mbox{with}\quad
\wh{u_\ell}(\xi)=\bo(\|\xi\|^m),\quad \xi\to 0,\forall\, \ell=1,\ldots,t.
\ee
Employing a similar idea as in \cite{ls06}, we now define the high-pass filters $b_1,\ldots, b_s\in \dlp{0}$ with $s:=t+\ddm$ as follows:
\be \label{qtffb:sos:1}
\wh{b_{\ell}}(\xi):=\wh{a}(\xi) \wh{u_\ell}(\dm^\tp\xi),\qquad \ell=1,\ldots,t
\ee
and
\be \label{qtffb:sos:2}
\wh{b_{t+j}}(\xi):=\ddm^{-1/2} e^{-i\gamma_j\cdot\xi}-\ddm^{1/2} \wh{a}(\xi)\ol{\wh{a^{[\gamma_j]}}(\dm^\tp\xi)},\quad
j=1,\ldots,\ddm.
\ee
Define $\eps_{t+1}=\cdots=\eps_{t+\ddm}:=1$.
We show that
$\{a; b_1,\ldots,b_s\}_{(\eps_1,\ldots,\eps_s)}$ is a quasi-tight $\dm$-framelet filter bank and
$\vmo(b_\ell)\ge m$ for all $\ell=1,\ldots, s$.
Let $B_\ell$ be defined as in \eqref{B:ell}.
We can calculate $B_\ell$ for the high-pass filters $b_\ell$ defined in \eqref{qtffb:sos:1} and \eqref{qtffb:sos:2}.
Let $b_\ell$ be defined in \eqref{qtffb:sos:1}. Then
$\wh{b_\ell^{[\gamma]}}(\xi)=\wh{a^{[\gamma]}}(\xi)\wh{u_\ell}(\xi)$ for $\gamma\in \Gamma_{\dm}$.
Therefore, by \eqref{A:qsos}, we have
\be\label{Bell1}
\begin{split}
\sum_{\ell=1}^t B_\ell(\xi)
&=
\sum_{\ell=1}^t \eps_\ell  |\wh{u_\ell}(\xi)|^2\ddm \Big[\wh{a^{[\gamma_1]}}(\xi),\ldots, \wh{a^{[\gamma_{\ddm}]}}(\xi)\Big]^\star
\Big[\wh{a^{[\gamma_1]}}(\xi),\ldots, \wh{a^{[\gamma_{\ddm}]}}(\xi)\Big]\\
&=
\ddm \cA(\xi) \Big[\wh{a^{[\gamma_1]}}(\xi),\ldots, \wh{a^{[\gamma_{\ddm}]}}(\xi)\Big]^\star
\Big[\wh{a^{[\gamma_1]}}(\xi),\ldots, \wh{a^{[\gamma_{\ddm}]}}(\xi)\Big].
\end{split}
\ee
Let $b_\ell$ be defined in \eqref{qtffb:sos:2} with $\ell=t+j$. Then
$\wh{b_\ell^{[\gamma]}}(\xi)=
\ddm^{-1/2}\td(\gamma-\gamma_j)-
\ddm^{1/2}
\wh{a^{[\gamma]}}(\xi)
\ol{\wh{a^{[\gamma_j]}}(\xi)}$ for $\gamma\in \Gamma_{\dm}$.
Hence,
\[
\Big[\wh{b^{[\gamma_1]}_\ell}(\xi),\ldots, \wh{b^{[\gamma_{\ddm}]}_\ell}(\xi)\Big]
=\ddm^{-1/2}e_j^\tp-\ddm^{1/2} \ol{\wh{a^{[\gamma_j]}}(\xi)} \Big[\wh{a^{[\gamma_1]}}(\xi),\ldots, \wh{a^{[\gamma_{\ddm}]}}(\xi)\Big].
\]
Therefore, by $\eps_{t+1}=\cdots=\eps_{t+\ddm}=1$,
\begin{align*}
\sum_{\ell=t+1}^{t+\ddm}
B_\ell(\xi)
&=\sum_{j=1}^{\ddm}
\left(e_j e_j^\tp-\ddm e_j \ol{\wh{a^{[\gamma_j]}}(\xi)} \Big[\wh{a^{[\gamma_1]}}(\xi),\ldots, \wh{a^{[\gamma_{\ddm}]}}(\xi)\Big]
-\ddm \Big[\wh{a^{[\gamma_1]}}(\xi),\ldots, \wh{a^{[\gamma_{\ddm}]}}(\xi)\Big]^\star \wh{a^{[\gamma_j]}}(\xi)e_j^\tp \right.\\
&\qquad \qquad\qquad \left.
+\ddm^2 |\wh{a^{[\gamma_j]}}(\xi)|^2 \Big[\wh{a^{[\gamma_1]}}(\xi),\ldots, \wh{a^{[\gamma_{\ddm}]}}(\xi)\Big]^\star
\Big[\wh{a^{[\gamma_1]}}(\xi),\ldots, \wh{a^{[\gamma_{\ddm}]}}(\xi)\Big]\right)\\
&=I_{\ddm}+\left(-2\ddm+\ddm^2\sum_{j=1}^{\ddm}
|\wh{a^{[\gamma_j]}}(\xi)|^2\right)
\Big[\wh{a^{[\gamma_1]}}(\xi),\ldots, \wh{a^{[\gamma_{\ddm}]}}(\xi)\Big]^\star
\Big[\wh{a^{[\gamma_1]}}(\xi),\ldots, \wh{a^{[\gamma_{\ddm}]}}(\xi)\Big],
\end{align*}
where $I_{\ddm}$ stands for the $\ddm\times \ddm$ identity matrix.
Since $\cA(\xi)=1-\ddm\sum_{j=0}^{\ddm-1}
|\wh{a^{[\gamma_j]}}(\xi)|^2$, we have
\[
-2\ddm +\ddm^2\sum_{j=1}^{\ddm}
|\wh{a^{[\gamma_j]}}(\xi)|^2
=-2\ddm+\ddm(1-\cA(\xi))=
-\ddm(1+\cA(\xi)).
\]
In other words, we obtain
\[
\sum_{\ell=t+1}^{t+\ddm}
B_\ell(\xi)=
I_{\ddm}-\ddm (1+\cA(\xi))
\Big[\wh{a^{[\gamma_1]}}(\xi),\ldots, \wh{a^{[\gamma_{\ddm}]}}(\xi)\Big]^\star
\Big[\wh{a^{[\gamma_1]}}(\xi),\ldots, \wh{a^{[\gamma_{\ddm}]}}(\xi)\Big].
\]
Combining the above identity with \eqref{Bell1} and noting that $s=t+\ddm$, we have
\[
\sum_{\ell=1}^s B_\ell(\xi)=
I_{\ddm}-\ddm \Big [\wh{a^{[\gamma_1]}}(\xi),\ldots, \wh{a^{[\gamma_{\ddm}]}}(\xi)\Big]^\star
\Big[\wh{a^{[\gamma_1]}}(\xi),\ldots, \wh{a^{[\gamma_{\ddm}]}}(\xi)\Big]
=\cN_a(\xi).
\]
That is, we verified the identity in \eqref{qtffb:2}. This proves that
$\{a; b_1,\ldots,b_s\}_{(\eps_1,\ldots,\eps_s)}$ is a quasi-tight $\dm$-framelet filter bank.

We now prove that $\wh{b_\ell}(\xi)=\bo(\|\xi\|^m)$ as $\xi \to 0$ for all $\ell=1,\ldots,s$.
By the second identity in \eqref{A:qsos} and the definition of $b_\ell$ in \eqref{qtffb:sos:1}, we trivially have $\wh{b_\ell}(\xi)=\bo(\|\xi\|^m)$ as $\xi \to 0$ for all $\ell=1,\ldots,t$.

Let $\{\omega_1,\ldots,\omega_{\ddm}\}=\Omega_{\dm}$ with $\omega_1:=0$.
By the identity in \eqref{coset:U} with $E(\xi)$ and $U$ being defined in \eqref{EU}, we have
\be \label{a:acoset}
\Big[\wh{a}(\xi+2\pi \omega_1),\ldots,\wh{a}(\xi+2\pi \omega_{\ddm})\Big]=\Big[
e^{-i\gamma_1\cdot \xi} \wh{a^{[\gamma_1]}}(\dm^\tp\xi),
\ldots,e^{-i\gamma_{\ddm}\cdot \xi} \wh{a^{[\gamma_{\ddm}]}}(\dm^\tp \xi)\Big]  U
\ee
and $U U^\star=\ddm I_{\ddm}$.
Note that $\wh{a}(0)=1$ and $\omega_1=0$. Since $\sr(a,\dm)\ge m$, by the definition of sum rules in \eqref{sr}, we deduce from the above identity in \eqref{a:acoset} that
\[
\Big[
e^{-i\gamma_1\cdot \xi} \wh{a^{[\gamma_1]}}(\dm^\tp\xi),
\ldots,e^{-i\gamma_{\ddm}\cdot \xi} \wh{a^{[\gamma_{\ddm}]}}(\dm^\tp \xi)\Big]=
\Big[\wh{a}(\xi),\bo(\|\xi\|^m),\ldots,
\bo(\|\xi\|^m)\Big]U^{-1},\quad \xi \to 0.
\]
Since $U^{-1}=\ddm^{-1}U^\star=(\ddm^{-1} e^{-i2\pi \gamma_j\cdot \omega_k})_{1\le k,j\le \ddm}$ and $\omega_1=0$, all the entries in the first row of $U^{-1}$ are $\ddm^{-1}$. Therefore, we conclude from the above identity that
\[
\Big[
e^{-i\gamma_1\cdot \xi} \wh{a^{[\gamma_1]}}(\dm^\tp\xi),
\ldots,e^{-i\gamma_{\ddm}\cdot \xi} \wh{a^{[\gamma_{\ddm}]}}(\dm^\tp \xi)\Big]=
\ddm^{-1} \wh{a}(\xi)[1,\ldots,1]+\bo(\|\xi\|^m),\quad \xi \to 0.
\]
That is, we proved
\begin{equation} \label{eq:sr:coset}
e^{-i\gamma_j\cdot \xi} \wh{a^{[\gamma_j]}}(\dm^\tp \xi)=\ddm^{-1} \wh{a}(\xi)+\bo(\|\xi\|^m),\qquad \xi \to 0, j=1,\ldots,\ddm.
\end{equation}
For $b_\ell$ defined in \eqref{qtffb:sos:2} with $\ell=t+j$, we deduce from the above identity that
\begin{align*}
\wh{b_\ell}(\xi)
&=\ddm^{-1/2}e^{-i\gamma_j\cdot \xi}\ol{(1-\ddm \ol{\wh{a}(\xi)}
e^{-i\gamma_j\cdot \xi}\wh{a^{[\gamma_j]}}(\dm^\tp \xi))}\\
&=
\ddm^{-1/2}e^{-i\gamma_j\cdot \xi}(1-|\wh{a}(\xi)|^2)
+\bo(\|\xi\|^m)\\
&=
\ddm^{-1/2}e^{-i\gamma_j\cdot \xi}\bo(\|\xi\|^{2m})
+\bo(\|\xi\|^m)=\bo(\|\xi\|^m)
\end{align*}
as $\xi \to 0$, where we used our assumption $1-|\wh{a}(\xi)|^2=\wh{u_a}(\xi)=\bo(\|\xi\|^{2m})$ as $\xi \to 0$. This proves that all the high-pass filters $b_1,\ldots,b_s$ have at least order $m$ vanishing moments.

The above constructed quasi-tight $\dm$-framelet filter banks $\{a; b_1,\ldots,b_s\}_{(\eps_1,\ldots,\eps_s)}$ through \eqref{qtffb:sos:1} and \eqref{qtffb:sos:2} have several shortcomings.
First of all, the supports of all the high-pass filters $b_1,\ldots,b_s$ are much larger than that of the low-pass filter $a$.
Secondly, the number $t$ (and hence $s:=t+\ddm$) of generators could be very large even for low-pass filters with short support.
These considerations
motivate us to propose an algorithm solving only linear equations for constructing quasi-tight framelet filter banks with high vanishing moments. To do so, let us introduce some notations.
For $\mu=(\mu_1,\ldots,\mu_d)^\tp,\nu=(\nu_1,\ldots,\nu_d)^\tp\in \dNN$, we say that $\mu<\nu$ if either $|\mu|<|\nu|$ or $|\mu|=|\nu|$ and $\mu_j=\nu_j$ for $j=1,\ldots,\ell-1$ but $\mu_\ell<\nu_\ell$ for some $1\le \ell \le d$.
Note that $\Omega_\dm$ is a complete set of representatives of distinct cosets of the quotient group $[(\dm^\tp)^{-1}\dZ]/\dZ$. Therefore, $\Omega_\dm$ can be regarded as an additive group under modulo $\dZ$.

Before introducing our algorithm to prove Theorem~\ref{thm:qtf:vm}, we need the following auxiliary result.

\begin{lemma}\label{lem:bM}
Let $b\in \dlp{0}$ and $\beta\in \Omega_{\dm}$.
Define $F(\xi):=(e^{-i \gamma_j\cdot(\xi+2\pi \omega)})_{1\le j\le \ddm, \omega\in \Omega_{\dm}}$ and the $\ddm\times \ddm$ matrix $D_{b,\beta}$ by
\be \label{Dbeta}
[D_{b,\beta}(\xi)]_{\omega,\eta}:=
\begin{cases}
\wh{b}(\xi+2\pi \omega), &\text{if $\omega+\beta-\eta\in \dZ$,}\\
0, &\text{if $\omega+\beta-\eta\not \in \dZ$},\end{cases}
\qquad \omega,\eta\in \Omega_{\dm}.
\ee
Then
\be \label{bMbeta}
F(\xi)D_{b,\beta}(\xi) F^\star(\xi)=\ddm
E_{b,\beta}(\dm^\tp \xi)\quad \mbox{with}\quad
E_{b,\beta}(\xi):= \Big(\wh{b^{[\gamma_k-\gamma_j]}}(\xi) e^{i\gamma_k \cdot 2\pi \beta}\Big)_{1\le j,k\le \ddm}.
\ee
\end{lemma}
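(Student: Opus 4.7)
The plan is to expand the $(j,k)$-entry of the matrix product $F(\xi)D_{b,\beta}(\xi)F^\star(\xi)$ directly and reduce it to $\ddm\,[E_{b,\beta}(\dm^\tp\xi)]_{j,k}$. First I would exploit the sparsity of $D_{b,\beta}$: for each row index $\omega\in\Omega_\dm$ there is a unique column index $\eta=\eta(\omega)\in\Omega_\dm$ with $\eta\equiv\omega+\beta\pmod{\dZ}$ contributing a nonzero entry $\wh{b}(\xi+2\pi\omega)$, so the double sum over $(\omega,\eta)$ collapses to a single sum over $\omega$ paired with its unique $\eta(\omega)$. Writing $\eta(\omega)=\omega+\beta-n_\omega$ with $n_\omega\in\dZ$ and invoking $\gamma_k\in\dZ$ to conclude that $e^{-i\gamma_k\cdot 2\pi n_\omega}=1$ cleanly eliminates the coset-representative ambiguity. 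Pulling out the overall factor $e^{i(\gamma_k-\gamma_j)\cdot\xi}e^{i\gamma_k\cdot 2\pi\beta}$ then reduces the problem to evaluating the ``modulation sum''
\[
S_{j,k}(\xi):=\sum_{\omega\in\Omega_\dm} e^{i(\gamma_k-\gamma_j)\cdot 2\pi\omega}\,\wh{b}(\xi+2\pi\omega).
\]

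The core step is the evaluation of $S_{j,k}$. I would substitute the coset decomposition $\wh{b}(\xi+2\pi\omega)=\sum_{\gamma\in\Gamma_\dm}e^{-i\gamma\cdot(\xi+2\pi\omega)}\wh{b^{[\gamma]}}(\dm^\tp(\xi+2\pi\omega))$ and observe that $\dm^\tp\omega\in\dZ$, so the $\wh{b^{[\gamma]}}$ factor is $2\pi\dZ$-periodic in this argument and collapses to $\wh{b^{[\gamma]}}(\dm^\tp\xi)$. Swapping the order of summation produces an inner character sum $\sum_{\omega\in\Omega_\dm}e^{i(\gamma_k-\gamma_j-\gamma)\cdot 2\pi\omega}$, which equals $\ddm$ when $\gamma_k-\gamma_j-\gamma\in\dm\dZ$ and vanishes otherwise by standard orthogonality of characters on the finite quotient group dual to $\Omega_\dm$. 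Hence only the unique $\gamma^*\in\Gamma_\dm$ with $\gamma^*\equiv\gamma_k-\gamma_j\pmod{\dm\dZ}$ survives, giving $S_{j,k}(\xi)=\ddm\,e^{-i\gamma^*\cdot\xi}\wh{b^{[\gamma^*]}}(\dm^\tp\xi)$.

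The final step is to match $\gamma^*\in\Gamma_\dm$ with the subscript $\gamma_k-\gamma_j\in\dZ$ appearing in the statement of $E_{b,\beta}$. Writing $\gamma_k-\gamma_j=\gamma^*+\dm\mu$ for some $\mu\in\dZ$, a direct shift-of-index computation shows $\wh{b^{[\gamma_k-\gamma_j]}}(\eta)=e^{i\mu\cdot\eta}\wh{b^{[\gamma^*]}}(\eta)$, and combining this with the identity $(\dm\mu)\cdot\xi=\mu\cdot\dm^\tp\xi$ yields $e^{-i\gamma^*\cdot\xi}\wh{b^{[\gamma^*]}}(\dm^\tp\xi)=e^{-i(\gamma_k-\gamma_j)\cdot\xi}\wh{b^{[\gamma_k-\gamma_j]}}(\dm^\tp\xi)$. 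Substituting back and cancelling the leading $e^{i(\gamma_k-\gamma_j)\cdot\xi}$ factor leaves exactly $\ddm\,\wh{b^{[\gamma_k-\gamma_j]}}(\dm^\tp\xi)\,e^{i\gamma_k\cdot 2\pi\beta}$, which is $\ddm\,[E_{b,\beta}(\dm^\tp\xi)]_{j,k}$. The main obstacle here is not analytic but notational bookkeeping: since $\gamma_k-\gamma_j$ generally lies outside $\Gamma_\dm$, one has to verify that the quantity $e^{-i\gamma\cdot\xi}\wh{b^{[\gamma]}}(\dm^\tp\xi)$ is invariant under the choice of coset representative $\gamma\in\gamma_k-\gamma_j+\dm\dZ$, so that the final answer repackages cleanly as $\wh{b^{[\gamma_k-\gamma_j]}}$ exactly as written in \eqref{bMbeta}.
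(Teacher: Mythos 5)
Your proposal is correct and follows essentially the same route as the paper's proof: expand the $(j,k)$-entry using the sparsity of $D_{b,\beta}$, insert the coset decomposition of $\wh{b}$, apply orthogonality of the characters $e^{-i\gamma\cdot 2\pi\omega}$ over $\Omega_\dm$, and repackage via the identity $\wh{b^{[\gamma+\dm\mu]}}(\eta)=e^{i\mu\cdot\eta}\wh{b^{[\gamma]}}(\eta)$. Your explicit handling of the coset-representative ambiguity is a minor presentational refinement of the same argument.
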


\begin{proof} Let $j,k=1,\ldots,\ddm$. We now compute the $(j,k)$-entry of the matrix on the left-hand side of \eqref{bMbeta}. Note that
$\wh{b}(\xi)=\sum_{p=1}^{\ddm} \wh{b^{[\gamma_p]}}(\dm^\tp\xi) e^{-i\gamma_p\cdot \xi}$. Then
\begin{align*}
[F(\xi)D_{b,\beta}(\xi) F^\star(\xi)]_{j,k}
&=\sum_{\omega\in \Omega_{\dm}} [F(\xi)]_{j,\omega}[D_{b,\beta}(\xi)]_{\omega,\omega+\beta}
[F^\star(\xi)]_{\omega+\beta,k}\\
&=\sum_{\omega\in \Omega_{\dm}} e^{-i\gamma_j\cdot (\xi+2\pi \omega)} \wh{b}(\xi+2\pi \omega)
e^{i\gamma_k\cdot(\xi+2\pi \omega+2\pi \beta)}\\
&=
\sum_{\omega\in \Omega_{\dm}} \sum_{p=1}^{\ddm} e^{-i\gamma_j\cdot (\xi+2\pi \omega)} \wh{b^{[\gamma_p]}}(\dm^\tp\xi) e^{-i\gamma_p\cdot(\xi+2\pi \omega)}
e^{i\gamma_k\cdot(\xi+2\pi \omega+2\pi \beta)}\\
&= \sum_{p=1}^{\ddm} \wh{b^{[\gamma_p]}}(\dm^\tp\xi)
e^{-i(\gamma_p+\gamma_j-\gamma_k)\cdot \xi}e^{i\gamma_k\cdot 2\pi \beta}
\sum_{\omega\in \Omega_{\dm}}
e^{-i(\gamma_p+\gamma_j-\gamma_k)\cdot 2\pi \omega}.
\end{align*}
Note that the last sum in the above identity is equal to $\ddm$ if $\gamma_p+\gamma_j-\gamma_k\in \dm \dZ$ and $0$ otherwise. Hence, we deduce from the above identity that
\[
[F(\xi)D_{b,\beta}(\xi) F^\star(\xi)]_{j,k}
=\ddm \wh{b^{[\gamma_p]}}(\dm^\tp\xi)
e^{-i\dm \alpha_{j,k}\cdot \xi}
e^{i\gamma_k\cdot 2\pi \beta}=
\ddm\wh{b^{[\gamma_k-\gamma_j]}}(\dm^\tp\xi)
e^{i\gamma_k\cdot 2\pi \beta},
\]
where we used the identity $\wh{u^{[\gamma+\dm \alpha]}}(\xi)=\wh{u^{[\gamma]}}(\xi)e^{i\alpha\cdot \xi}$ and
$\alpha_{j,k}\in \dZ$ is the unique integer such that $\gamma_p=\gamma_k-\gamma_j+\dm \alpha_{j,k}$ for the unique $\gamma_p \in \Gamma_{\dm}$ satisfying $\gamma_p+\gamma_j-\gamma_k\in \dm \dZ$.
This proves \eqref{bMbeta}.
\end{proof}

Now we are ready to state a constructive algorithm for constructing quasi-tight framelet filter banks with vanishing moments by solving only linear equations. For the convenience of the reader, we state the following result in an algorithmic way.
We now prove Theorem~\ref{thm:qtf:vm} as follows.

\begin{theorem}\label{thm:qtf:vm:linear}
Let $\dm$ be a $d\times d$ dilation matrix and let $a\in \dlp{0}$ be a finitely supported real-valued sequence on $\dZ$.
Let $m\in \NN$ such that $m\le \min(\sr(a,\dm),\frac{1}{2}\vmo(u_a))$, where $\wh{u_a}(\xi):=1-|\wh{a}(\xi)|^2$.
Define $F(\xi):=(e^{-i \gamma_j\cdot(\xi+2\pi \omega)})_{1\le j\le \ddm,\omega\in \Omega_{\dm}}$. Define $\cN_a$ as in \eqref{Na} and $E_{\mu}(\xi):=E_{\nabla^\mu\td,0}(\xi)$ as in \eqref{bMbeta}.

\begin{enumerate}
\item[(S1)] Solve the system $X$ of linear equations induced by
\be \label{Na:factorize}
\ddm^{-1} \cN_a(\xi)=
 \sum_{|\mu|=m} E^\star_\mu(\xi)A_{\mu,\mu}(\xi) E_\mu(\xi)+\sum_{\mu<\nu, |\mu|=|\nu|=m}
 \Big(E^\star_\mu(\xi)A_{\mu,\nu}(\xi) E_\nu(\xi)+E^\star_\nu(\xi)A^\star_{\mu,\nu}(\xi) E_\mu(\xi)\Big)
\ee
and
\be \label{Amu}
A^\star_{\mu,\mu}(\xi)=A_{\mu,\mu}(\xi),\qquad \forall\, \mu\in \dNN, |\mu|=m,
\ee
for the coefficients in all the entries of the matrices $A_{\mu,\nu}$ of $2\pi\dZ$-periodic trigonometric polynomials with $|\mu|=|\nu|=m$ and $\mu\le \nu$.
The linear system $X$ always has a solution of $A_{\mu,\nu}$ with real coefficients for $|\mu|=|\nu|=m$ and $\mu\le \nu$, as long as the supports of their coefficients are large enough.

\item[(S2)] For every $(\mu,\nu)$ with $|\mu|=|\nu|=m$ and $\mu<\nu$, factorize $A_{\mu,\nu}(\xi)=A^\star_{\mu,\nu,1}(\xi)A_{\mu,\nu,2}(\xi)$ (e.g.,  $A_{\mu,\nu,1}(\xi)=I_{\ddm}$ and $A_{\mu,\nu,2}=A_{\mu,\nu}(\xi)$) for some $\ddm\times \ddm$ matrices $A_{\mu,\nu,1}$ and $A_{\mu,\nu,2}$ of $2\pi\dZ$-periodic trigonometric polynomials with real coefficients.
    Define
\be \label{bmunu}
[\wh{b_{\mu,\nu,1}}(\xi),\ldots,\wh{b_{\mu,\nu,\ddm}}(\xi)]^\tp:=
A_{\mu,\nu,1}(\dm^\tp \xi) F(\xi) e_1 \wh{\nabla^\mu \td}(\xi)+
A_{\mu,\nu,2}(\dm^\tp \xi) F(\xi) e_1 \wh{\nabla^\nu \td}(\xi)
\ee
and $s_{\mu,\nu}:=\ddm$ and $\eps_{\mu,\nu,\ell}:=1$ for all $\ell=1,\ldots,\ddm$. Replace/update $A_{\mu,\mu}$ and $A_{\nu,\nu}$ by $A_{\mu,\mu}-A_{\mu,\nu,1}^\star A_{\mu,\nu,1}$ and $A_{\nu,\nu}-A_{\mu,\nu,2}^\star A_{\mu,\nu,2}$, respectively.

\item[(S3)] For every $\mu\in \dNN$ with $|\mu|=m$, apply Theorem~\ref{thm:qsos} to the updated $A_{\mu,\mu}$ so that $A_{\mu,\mu}(\xi)=\sum_{\ell=1}^{s_{\mu,\mu}}
    \eps_{\mu,\mu,\ell} \fu_\ell^\star(\xi) \fu_\ell(\xi)$, where $\eps_{\mu,\mu,\ell}\in \{-1,1\}$ and $\fu_\ell$ is a $1\times \ddm$ row vector of $2\pi\dZ$-periodic trigonometric polynomials with real coefficients for $\ell=1,\ldots,s_{\mu,\mu}$. Define
\be \label{bmu}
\wh{b_{\mu,\mu,\ell}}(\xi):=
\wh{\nabla^\mu\td}(\xi)\fu_\ell(\dm^\tp \xi)F(\xi) e_1,\qquad \ell=1,\ldots,s_{\mu,\mu}.
\ee
\end{enumerate}
Define
\[
\{(b_1,\eps_1),\ldots,(b_s,\eps_s)\}:=
\{(b_{\mu,\nu,\ell},\eps_{\mu,\nu,\ell}) \setsp \mu,\nu\in \dNN, |\mu|=|\nu|=m, \mu\le \nu, \ell=1,\ldots,s_{\mu,\nu}\}.
\]
Then $\{a;b_1,\ldots,b_s\}_{(\eps_1,\ldots,\eps_s)}$ is a quasi-tight $\dm$-framelet filter bank such that all the high-pass filters have at least order $m$ vanishing moments, i.e., $\vmo(b_\ell)\ge m$ for all $\ell=1,\ldots,s$.
\end{theorem}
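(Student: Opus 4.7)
The plan is to verify the theorem in three parts: (a) the linear system in Step~(S1) admits a real-coefficient solution with $A_{\mu,\mu}^\star=A_{\mu,\mu}$; (b) the high-pass filters produced by (S2) and (S3) satisfy the quasi-tight identity \eqref{qtffb:2}; and (c) every high-pass filter has at least order $m$ vanishing moments. Throughout, Lemma~\ref{lem:bM} serves as the bridge between the matrix factorization \eqref{Na:factorize} (expressed through the matrices $E_\mu$) and the polyphase outer-product form of \eqref{qtffb:2}.

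For (a), I would first establish that $\cN_a(\xi)=\bo(\|\xi\|^{2m})$ as $\xi\to 0$. By $\sr(a,\dm)\ge m$ and identity \eqref{eq:sr:coset}, one has $e^{-i\gamma_j\cdot\xi}\wh{a^{[\gamma_j]}}(\dm^\tp\xi)=\ddm^{-1}\wh{a}(\xi)+\bo(\|\xi\|^m)$; substituting this into \eqref{Na} and using $1-|\wh{a}(\xi)|^2=\bo(\|\xi\|^{2m})$ (from $\vmo(u_a)\ge 2m$) yields the estimate. Then, applying Lemma~\ref{lem:nabla} entrywise to $\cN_a$ and regrouping the resulting $\wh{\nabla^\mu\td}$-factors in matrix form via Lemma~\ref{lem:bM}, one obtains a representation of type \eqref{Na:factorize} with real-coefficient $A_{\mu,\nu}$. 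Hermitian symmetry $\cN_a^\star=\cN_a$ lets us symmetrize each diagonal block and so enforce \eqref{Amu}. Since \eqref{Na:factorize} is linear in the unknown coefficients, solvability follows as soon as the supports of the $A_{\mu,\nu}$ are taken large enough.

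For (b), I would match contributions to $\sum_\ell \eps_\ell\ddm V_\ell^\star(\xi)V_\ell(\xi)=\cN_a(\xi)$, where $V_\ell(\xi):=[\wh{b_\ell^{[\gamma_1]}}(\xi),\ldots,\wh{b_\ell^{[\gamma_\ddm]}}(\xi)]$, pair by pair. Using Lemma~\ref{lem:bM} applied with $b=\nabla^\mu\td$ and $\beta=0$, the key identity to verify is that the $\ddm$ filters $b_{\mu,\nu,1},\ldots,b_{\mu,\nu,\ddm}$ defined in \eqref{bmunu} contribute $E_\mu^\star A_{\mu,\nu,1}^\star A_{\mu,\nu,1}E_\mu+E_\nu^\star A_{\mu,\nu,2}^\star A_{\mu,\nu,2}E_\nu+E_\mu^\star A_{\mu,\nu}E_\nu+E_\nu^\star A_{\mu,\nu}^\star E_\mu$ to the sum. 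The first two terms are exactly what the update step of (S2) subtracts from $A_{\mu,\mu}$ and $A_{\nu,\nu}$, while the last two match the off-diagonal contributions in \eqref{Na:factorize}. Step~(S3), via Theorem~\ref{thm:qsos}, factors the updated $A_{\mu,\mu}$ as $\sum_\ell \eps_{\mu,\mu,\ell}\fu_\ell^\star\fu_\ell$, and the filters \eqref{bmu} are arranged so that their contribution equals $E_\mu^\star A_{\mu,\mu}E_\mu$. Summing over all $(\mu,\nu)$ with $\mu\le\nu$ and invoking \eqref{Na:factorize} then gives \eqref{qtffb:2}. Part (c) is immediate: by construction, every $\wh{b_\ell}$ carries a factor $\wh{\nabla^\mu\td}$ or $\wh{\nabla^\nu\td}$ with $|\mu|=|\nu|=m$, each of which is $\bo(\|\xi\|^m)$ as $\xi\to 0$.

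The main obstacle is the polyphase bookkeeping in part~(b): one must verify, using Lemma~\ref{lem:bM} together with $F(\xi)F^\star(\xi)=\ddm I_{\ddm}$, that the Fourier-side expression $A_{\mu,\nu,j}(\dm^\tp\xi)F(\xi)e_1\wh{\nabla^\mu\td}(\xi)$ produces the claimed polyphase outer product. A clean way to organize this is to compute each $V_\ell^\star V_\ell$ by $\ddm^{-1}F^\star$-conjugating the matrix $[\wh{b_\ell}(\xi+2\pi\omega_j)\ol{\wh{b_\ell}(\xi+2\pi\omega_k)}]_{j,k}$ and then to recognize each block in terms of the $E_\mu$ and the $A_{\mu,\nu,p}$; the unitary-like nature of $\ddm^{-1/2}F(\xi)$ makes the translation systematic once the pattern for a single $(\mu,\nu)$ pair is established.
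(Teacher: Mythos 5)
Your parts (b) and (c) are sound and essentially reproduce the paper's verification: the quasi-tight identity \eqref{qtffb:2} is checked by passing between the modulation matrix $[\wh{b_\ell}(\xi+2\pi\omega_1),\ldots,\wh{b_\ell}(\xi+2\pi\omega_{\ddm})]$ and the coset vector $[\wh{b_\ell^{[\gamma_1]}}(\xi),\ldots,\wh{b_\ell^{[\gamma_{\ddm}]}}(\xi)]$ via $F$ and Lemma~\ref{lem:bM}, and the vanishing moments are immediate from the factors $\wh{\nabla^\mu\td}$, $\wh{\nabla^\nu\td}$.

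Part (a), the existence of a solution to the linear system $X$, contains a genuine error. The matrix $\cN_a$ does \emph{not} satisfy $\cN_a(\xi)=\bo(\|\xi\|^{2m})$ as $\xi\to 0$: since the basic sum rule gives $\wh{a^{[\gamma_j]}}(0)=\ddm^{-1}$ for all $j$, every diagonal entry of $\cN_a(0)$ equals $1-\ddm^{-1}$ and every off-diagonal entry equals $-\ddm^{-1}$, so $\cN_a(0)\ne 0$ whenever $\ddm\ge 2$. (This nonvanishing is consistent with \eqref{Na:factorize}, whose right-hand side also need not vanish at $\xi=0$ because $E_\mu(0)\ne 0$ in general; only the row sums of $E_\mu(0)$ vanish.) Substituting \eqref{eq:sr:coset} into \eqref{Na} as you propose in fact yields $\cN_a(\dm^\tp\xi)=I_{\ddm}-\ddm^{-1}\bigl(e^{i(\gamma_k-\gamma_j)\cdot\xi}\bigr)_{j,k}+\bo(\|\xi\|^{m})$, not $\bo(\|\xi\|^{2m})$. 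Hence "applying Lemma~\ref{lem:nabla} entrywise to $\cN_a$" is not available; and even where entries do vanish, Lemma~\ref{lem:nabla} produces only a one-sided factorization $\sum_\nu\wh{\nabla^\nu\td}(\xi)\wh{u_\nu}(\xi)$ per entry, whereas \eqref{Na:factorize} demands the two-sided coset-sandwich form $E_\mu^\star(\xi)A_{\mu,\nu}(\xi)E_\nu(\xi)$. You do not explain how the regrouping into that structure would be achieved, and this is precisely the nontrivial content of (S1).

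The paper's argument for solvability is genuinely different at this point and supplies the missing idea. It works with the unfolded matrix $I_{\ddm}-[\wh{a}(\xi+2\pi\omega_1),\ldots,\wh{a}(\xi+2\pi\omega_{\ddm})]^\star[\wh{a}(\xi+2\pi\omega_1),\ldots,\wh{a}(\xi+2\pi\omega_{\ddm})]=\sum_{j=1}^{\ddm}D_{a_j,\omega_j}(\xi)$, where $\wh{a_1}=1-|\wh{a}|^2$ and $\wh{a_j}(\xi)=-\ol{\wh{a}(\xi)}\wh{a}(\xi+2\pi\omega_j)$, and proves the bilinear factorization \eqref{nabla:tobeproved}: each $\wh{a_j}(\xi)$ equals $\sum_{|\mu|=|\nu|=m}\ol{\wh{\nabla^\mu\td}(\xi)}\,\wh{\nabla^\nu\td}(\xi+2\pi\omega_j)\,\wh{u_{j,\mu,\nu}}(\xi)$, with the second difference factor evaluated at the \emph{shifted} argument $\xi+2\pi\omega_j$. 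For $j=1$ this uses $\vmo(u_a)\ge 2m$ and Lemma~\ref{lem:nabla}; for $j\ge 2$ it uses the sum rules twice, applying Lemma~\ref{lem:nabla} to both $\wh{a}(\xi+2\pi\omega_j)$ and $\wh{a}(\xi-2\pi\omega_j)$ and multiplying the two expansions. Only then does conjugation by $F$ together with Lemma~\ref{lem:bM} convert $D_{a_j,\omega_j}$ into $\sum_{\mu,\nu}E_\mu^\star E_{u_{j,\mu,\nu},\omega_j}E_\nu$, after which one symmetrizes to enforce \eqref{Amu} and realifies via $A_{\mu,\nu}(\xi)\mapsto\tfrac12\bigl(A_{\mu,\nu}(\xi)+\ol{A_{\mu,\nu}(-\xi)}\bigr)$. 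Without some version of this two-sided factorization, your proof that $X$ is solvable does not go through.
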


\begin{proof}
Let $\{\omega_1,\ldots,\omega_{\ddm}\}:=\Omega_{\dm}$ with $\omega_1:=0$.
For $b\in (\dlp{0})^r$, for simplicity of presentation, we define $r\times \ddm$ matrices
\[
G_b(\xi):=[\wh{b^{[\gamma_1]}}(\xi),\ldots,
\wh{b^{[\gamma_{\ddm}]}}(\xi)]
\quad\mbox{and}\quad
H_b(\xi):=[\wh{b}(\xi+2\pi \omega_1),\ldots,\wh{b}(\xi+2\pi \omega_{\ddm})].
\]
By \eqref{coset:U}, we have $H_b(\xi)=G_b(\dm^\tp\xi) F(\xi)$.

Define $D_\mu(\xi):=D_{\nabla^\mu \td,0}(\xi)$ as in \eqref{Dbeta}.
For $\mu<\nu$ in (S2), by the definition of $b_{\mu,\nu,\ell}$ in \eqref{bmunu} and the identity in \eqref{coset:U}, we have
\[
H_{[b_{\mu,\nu,1},\ldots,b_{\mu,\nu,\ddm}]^\tp}(\xi)
=
A_{\mu,\nu,1}(\dm^\tp \xi)F(\xi) D_\mu(\xi)+
A_{\mu,\nu,2}(\dm^\tp \xi)F(\xi) D_\nu(\xi).
\]
Since $F(\xi)F^\star(\xi)=\ddm I_{\ddm}$, we deduce from the identity $H_b(\xi)=G_b(\dm^\tp\xi) F(\xi)$
and \eqref{bMbeta} that
\[
G_{[b_{\mu,\nu,1},\ldots,b_{\mu,\nu,\ddm}]^\tp}(\dm^\tp\xi)
=\ddm^{-1} H_{[b_{\mu,\nu,1},\ldots,b_{\mu,\nu,\ddm}]^\tp}(\xi)F^\star(\xi)
=A_{\mu,\nu,1}(\dm^\tp\xi) E_\mu(\dm^\tp \xi)
+A_{\mu,\nu,2}(\dm^\tp\xi) E_\nu(\dm^\tp \xi).
\]
That is, we proved
\[
G_{[b_{\mu,\nu,1},\ldots,b_{\mu,\nu,\ddm}]^\tp}(\xi)
=A_{\mu,\nu,1}(\xi) E_\mu(\xi)
+A_{\mu,\nu,2}(\xi) E_\nu(\xi).
\]
Therefore, by $\eps_{\mu,\nu,\ell}=1$ for all $\ell=1,\ldots,\ddm$, we deduce from the above identity that
\begin{align*}
\sum_{\ell=1}^{\ddm} &\eps_{\mu,\nu,\ell} G^\star_{b_{\mu,\nu,\ell}}(\xi)
G_{b_{\mu,\nu,\ell}}(\xi)
=G^\star_{[b_{\mu,\nu,1},\ldots,b_{\mu,\nu,\ddm}]^\tp}(\xi)
G_{[b_{\mu,\nu,1},\ldots,b_{\mu,\nu,\ddm}]^\tp}(\xi)\\
&=\Big(E^\star_\mu A^\star_{\mu,\nu,1}(\xi)+
E^\star_{\nu}(\xi) A^\star_{\mu,\nu,2}(\xi)\Big)
\Big(A_{\mu,\nu,1}(\xi)E_\mu(\xi)+
A_{\mu,\nu,2}(\xi)E_\nu(\xi)\Big)\\
&=E^\star_\mu(\xi)A^\star_{\mu,\nu,1}(\xi)
A_{\mu,\nu,1}(\xi) E_\mu(\xi)+
E^\star_\nu(\xi)A^\star_{\mu,\nu,2}(\xi)
A_{\mu,\nu,2}(\xi) E_\nu(\xi)\\
&\qquad+
\Big( E^\star_\mu(\xi)A^\star_{\mu,\nu,1}(\xi)
A_{\mu,\nu,2}(\xi) E_\nu(\xi)
+
E^\star_\nu(\xi)A^\star_{\mu,\nu,2}(\xi)
A_{\mu,\nu,1}(\xi) E_\mu(\xi)\Big).
\end{align*}
As we shall see below,
the first two terms in the last expression of the last identity have been handled by the updated $A_{\mu,\mu}$ and $A_{\nu,\nu}$ in (S2) (see proof below).

For $\mu\in \dNN$ with $|\mu|=m$, we have $H_{b_{\mu,\mu,\ell}}(\xi)=\fu_\ell(\dm^\tp\xi)F(\xi) D_\mu(\xi)$. Therefore, $G_{b_{\mu,\mu,\ell}}(\xi)=
\fu_\ell(\xi)E_\mu(\xi)$ for all $\ell=1,\ldots,s_{\mu,\mu}$. Hence,
\[
\sum_{\ell=1}^{s_{\mu,\mu}}
\eps_{\mu,\mu,\ell} G^\star_{b_{\mu,\mu,\ell}}(\xi) G_{b_{\mu,\mu,\ell}}(\xi)
=E_\mu^\star(\xi)
\sum_{\ell=1}^{s_{\mu,\mu}}
\eps_{\mu,\mu,\ell} \fu_\ell^\star(\xi) \fu_\ell(\xi) E_\mu(\xi)
=E_\mu^\star(\xi)A_{\mu,\mu}(\xi) E_\mu(\xi),
\]
where $A_{\mu,\mu}$ is the updated version in (S2). Therefore, we proved
\begin{align*}
\sum_{\ell=1}^s \eps_\ell G_{b_\ell}^\star(\xi) G_{b_\ell}(\xi)
&=
\sum_{|\mu|=m} E^\star_\mu(\xi)A_{\mu,\mu}(\xi) E_\mu(\xi)+\sum_{\mu<\nu, |\mu|=|\nu|=m}
 (E^\star_\mu(\xi)A_{\mu,\nu}(\xi) E_\nu(\xi)+E^\star_\nu(\xi)A^\star_{\mu,\nu}(\xi) E_\mu(\xi))\\
&=\ddm^{-1} \cN_a(\xi).
\end{align*}
Hence, we verified the condition in \eqref{qtffb:2} and consequently, $\{a;b_1,\ldots,b_s\}_{(\eps_1,\ldots,\eps_s)}$ is a quasi-tight $\dm$-framelet filter bank.

Since $\wh{\nabla^\mu\td}(\xi)=\bo(\|\xi\|^{|\mu|})$ as $\xi \to 0$ for all $\mu\in \dNN$, it follows directly from \eqref{bmunu} and \eqref{bmu} that $\wh{b_\ell}(\xi)=\bo(\|\xi\|^m)$ as $\xi \to 0$ for all $\ell=1,\ldots,s$. Hence, all the high-pass filters have at least order $m$ vanishing moments.

To complete the proof, we now prove the existence of a desired solution to the linear system $X$ induced by \eqref{Na:factorize} and \eqref{Amu}. We first prove that $X$ must have a solution (probably with complex coefficients) and then we prove that $X$ must have a solution with real coefficients.
Define
\begin{equation} \label{aj:def}
\wh{a_1}(\xi):=1-|\wh{a}(\xi)|^2 \quad \mbox{and}\quad
\wh{a_j}(\xi):=-\ol{\wh{a}(\xi)}\wh{a}(\xi+2\pi \omega_j),\qquad j=2,\ldots,\ddm.
\end{equation}
By $\omega_1=0$ and the definition of the matrices $D_{b,\beta}$ in
\eqref{Dbeta}, it is straightforward to observe that
\[
\cN(\xi):=I_{\ddm}-\Big[\wh{a}(\xi+2\pi \omega_1),\ldots,\wh{a}(\xi+2\pi \omega_{\ddm})\Big]^\star
\Big[\wh{a}(\xi+2\pi \omega_1),\ldots,\wh{a}(\xi+2\pi \omega_{\ddm})\Big]
=\sum_{j=1}^{\ddm} D_{a_j,\omega_j}(\xi).
\]
Since $F(\xi)F^\star(\xi)=\ddm I_{\ddm}$, we deduce from \eqref{coset:U}, \eqref{bMbeta} and the above identity that
\be \label{N:Da}
\cN_a(\dm^\tp \xi)=\ddm^{-1}F(\xi) \cN(\xi) F^\star(\xi)
=\ddm^{-1}\sum_{j=1}^{\ddm} F(\xi) D_{a_j,\omega_j}(\xi) F^\star(\xi).
\ee
Suppose that we can prove
\be \label{nabla:tobeproved}
\wh{a_j}(\xi)=\sum_{\mu,\nu\in \dNN, |\mu|=|\nu|=m}
\ol{\wh{\nabla^\mu \td}(\xi)}  \wh{\nabla^\nu\td}(\xi+2\pi \omega_j)\wh{u_{j,\mu,\nu}}(\xi),\qquad j=1,\ldots,\ddm,
\ee
for some $u_{j,\mu,\nu}\in \dlp{0}$. Then by the definition in \eqref{Dbeta}, we must have
\[
D_{a_j,\omega_j}(\xi)=\sum_{\mu,\nu\in \dNN,|\mu|=|\nu|=m}
D_{\nabla^\mu \td,0}^\star(\xi) D_{u_{j,\mu,\nu},\omega_j}(\xi)
D_{\nabla^\nu\td,0}(\xi).
\]
Consequently, by \eqref{bMbeta} and $E_\mu=E_{\nabla^\mu\td,0}$, we deduce that
\begin{align*}
F(\xi)D_{a_j,\omega_j}(\xi) F^\star(\xi)
&=\ddm^{-2}\sum_{\mu,\nu\in \dNN,|\mu|=|\nu|=m}
F(\xi) D_{\nabla^\mu\td,0}^\star(\xi) F^\star(\xi)
F(\xi) D_{u_{j,\mu,\nu},\omega_j}(\xi) F^\star(\xi) F(\xi) D_{\nabla^\nu\td,0}(\xi) F^\star(\xi)\\
&=\ddm \sum_{\mu,\nu\in \dNN, |\mu|=|\nu|=m}
E_\mu^\star(\dm^\tp \xi) E_{u_{j,\mu,\nu},\omega_j}(\dm^\tp \xi)
E_\nu(\dm^\tp \xi).
\end{align*}
Now we deduce from \eqref{N:Da} that
\[
\ddm^{-1} \cN_a(\dm^\tp\xi)=
\ddm^{-1} \sum_{j=1}^{\ddm} \sum_{\mu,\nu\in \dNN,|\mu|=|\nu|=m}
E_\mu^\star(\dm^\tp \xi) E_{u_{j,\mu,\nu},\omega_j}(\dm^\tp \xi)
E_\nu(\dm^\tp \xi).
\]
Therefore, we proved
\be \label{Na:eqn}
\ddm^{-1} \cN_a(\xi)=
\ddm^{-1} \sum_{j=1}^{\ddm} \sum_{\mu,\nu\in \dNN,|\mu|=|\nu|=m}
E_\mu^\star(\xi) E_{u_{j,\mu,\nu},\omega_j}(\xi)
E_\nu(\xi).
\ee
Note that $\cN_a^\star(\xi)=\cN_a(\xi)$.
Define
\[
A_{\mu,\nu}(\xi):=\frac{1}{2\ddm}\sum_{j=1}^{\ddm} \Big(E_{u_{j,\mu,\nu}}(\xi)+E^\star_{u_{j,\nu,\mu}}(\xi)\Big),\qquad \mu,\nu\in \dNN, |\mu|=|\nu|=m, \mu \le \nu.
\]
From \eqref{Na:eqn}, it is trivial to verify that these $A_{\mu,\nu}$ satisfy both \eqref{Na:factorize} and \eqref{Amu}. That is, we proved that the linear system $X$ induced by \eqref{Na:factorize} and \eqref{Amu} must have a solution (but probably with complex coefficients).

For a $2\pi\dZ$-periodic trigonometric polynomial $\wh{u}$, it is straightforward to see that $\wh{u}$ has real coefficients if and only if $\ol{\wh{u}(-\xi)}=\wh{u}(\xi)$. Since the low-pass filter $a$ and all the filters $\nabla^\mu \td$ have real coefficients, we observe that $\cN_a$ and $E_\mu$ have real coefficients.
Changing $\xi$ into $-\xi$ and applying complex conjugate to \eqref{Na:factorize} , it is trivial to see that \eqref{Na:factorize} still holds if we replace all $A_{\mu,\nu}(\xi)$ by $\ol{A_{\mu,\nu}(-\xi)}$, respectively. Consequently, if we replace $A_{\mu,\nu}(\xi)$ by $\frac{1}{2}(A_{\mu,\nu}(\xi)+\ol{A_{\mu,\nu}(-\xi)})$, then
\eqref{Na:factorize} and \eqref{Amu} still hold. Since all $\frac{1}{2}(A_{\mu,\nu}(\xi)+\ol{A_{\mu,\nu}(-\xi)})$ have real coefficients, we proved that the linear system $X$ induced by \eqref{Na:factorize} and \eqref{Amu} must have a solution with real coefficients.

To complete the proof, we now prove \eqref{nabla:tobeproved}. From the definition in \eqref{aj:def}, we have $ \wh{a_1}(\xi) = \bo(\|\xi\|^{2m}) $ as $ \xi \to 0 $. Note that $ \ol{\wh{\nabla^\mu \td}(\xi)} \wh{\nabla^{\nu}\td}(\xi)
= \wh{\nabla^{\mu + \nu} \td}(\xi)(-1)^{|\mu|}e^{i\mu \cdot \xi} $. Hence, we conclude from Lemma~\ref{lem:nabla} that \eqref{nabla:tobeproved} holds for $ j = 1 $. For $ j = 2, \ldots, \ddm $, the sum rule condition of $ a $ implies that $ \wh{a}(\xi \pm 2\pi \omega_j) = \bo(\|\xi\|^{m}) $ as $ \xi \to 0 $.
According to Lemma~\ref{lem:nabla}, there exist some $ u_{j,\mu}, v_{j, \mu}\in \dlp{0} $ for each $\mu\in \dNN$, $|\mu|=m $, such that
\[
\wh{a}(\xi + 2\pi \omega_j) = \sum_{\mu\in \dNN, |\mu|=m}\wh{\nabla^{\mu}\td}(\xi) \wh{u_{j,\mu}}(\xi), \qquad
\wh{a}(\xi - 2\pi \omega_j) = \sum_{\nu\in \dNN, |\nu|=m}\wh{\nabla^{\nu}\td}(\xi) \wh{v_{j,\nu}}(\xi)
\]
hold. The above second identity implies that
$\wh{a}(\xi)= \sum_{\nu\in \dNN, |\nu|=m}\wh{\nabla^{\nu}\td}(\xi + 2\pi\omega_j) \wh{v_{j,\nu}}(\xi + 2\pi\omega_j) $.
Therefore,
\begin{align*}
\ol{\wh{a}(\xi)}\wh{a}(\xi+2\pi \omega_j)
=&  \sum_{\mu, \nu\in \dNN, \, |\mu|=|\nu|=m}
\ol{\wh{\nabla^{\nu}\td}(\xi + 2\pi\omega_j)} \ol{\wh{v_{j,\nu}}(\xi + 2\pi\omega_j)}
\wh{\nabla^{\mu}\td}(\xi) \wh{u_{j,\mu}}(\xi) \\
=& \sum_{\mu, \nu\in \dNN, \, |\mu|=|\nu|=m}
\ol{\wh{\nabla^{\mu}\td}(\xi)}
\wh{\nabla^{\nu}\td}(\xi + 2\pi\omega_j)
e^{-i\mu \cdot \xi}
e^{i\nu \cdot (\xi+2\pi\omega_j)}
\wh{u_{j,\mu}}(\xi)
\ol{\wh{v_{j,\nu}}(\xi + 2\pi\omega_j)}.
\end{align*}
Define
$ \wh{u_{j,\mu,\nu}}(\xi):=
- e^{-i\mu \cdot \xi}
e^{i\nu \cdot (\xi+2\pi\omega_j)}
\wh{u_{j,\mu}}(\xi)
\ol{\wh{v_{j,\nu}}(\xi + 2\pi\omega_j)}$, we proved
\eqref{nabla:tobeproved} for $ j = 2,\ldots,\ddm $.
\end{proof}

As a special case of Theorem~\ref{thm:qtf:vm:linear}, we have the following result.

\begin{cor}\label{cor:qtf:vm}
Let $\dm$ be a $d\times d$ dilation matrix and let $a\in \dlp{0}$ be a finitely supported real-valued sequence on $\dZ$.
Let $m\in \NN$ such that $m\le \min(\sr(a,\dm),\frac{1}{2}\vmo(u_a))$, where $\wh{u_a}(\xi):=1-|\wh{a}(\xi)|^2$.
Define $F(\xi):=(e^{-i \gamma_j\cdot(\xi+2\pi \omega)})_{1\le j\le \ddm,\omega\in \Omega_{\dm}}$. Define $\cN_a$ as in \eqref{Na} and $E_{\mu}(\xi):=E_{\nabla^\mu\td,0}(\xi)$ as in \eqref{bMbeta}.
If there exist $A_{\mu,\mu}$ with real coefficients for $|\mu|=m$ satisfying
\be \label{Na:factorize:Amu}
\ddm^{-1} \cN_a(\xi)=
 \sum_{|\mu|=m} E^\star_\mu(\xi)A_{\mu,\mu}(\xi) E_\mu(\xi)
 \quad \mbox{and}\quad
 A_{\mu,\mu}^\star(\xi)=A_{\mu,\mu}(\xi),\qquad
|\mu|=m,\mu\in \dNN,
\ee
(Such a solution to \eqref{Na:factorize:Amu} always exists in dimension one, i.e., $d=1$),
then there exist $b_1,\ldots,b_s\in \dlp{0}$ with real coefficients and $\eps_1,\ldots,\eps_s\in \{-1,1\}$ such that $\{a;b_1,\ldots,b_s\}_{(\eps_1,\ldots,\eps_s)}$ is a quasi-tight $\dm$-framelet filter bank such that all high-pass filters $b_\ell, \ell=1,\ldots,s$ have at least order $m$ vanishing moments and all the high-pass filters take the form either $\wh{b_\ell}(\xi)=c_\ell e^{-i\alpha_\ell\cdot \xi} \wh{\nabla^{\mu_\ell} \td}(\xi)$ (i.e., $b_\ell=c_\ell (\nabla^{\mu_\ell}\td)(\cdot-\alpha_\ell)$) or
$\wh{b_\ell}(\xi)=c_\ell e^{-i\alpha_\ell\cdot \xi} \wh{\nabla^{\mu_\ell}\td}(\xi)\wh{\nabla_{\beta_\ell}\td}(\xi)$
(i.e., $b_\ell=c_\ell [(\nabla^{\mu_\ell} \td)(\cdot-\alpha_\ell)-
(\nabla^{\mu_\ell} \td)(\cdot-\alpha_\ell-\beta_\ell)]$)
for some $c_\ell\in \R$, $\alpha_\ell,\beta_\ell \in \dZ$ and some $\mu_\ell\in \dNN$ with $|\mu_\ell|=m$ for all $\ell=1,\ldots,s$.
\end{cor}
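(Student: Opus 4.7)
The plan is to recognise Corollary~\ref{cor:qtf:vm} as essentially a direct specialisation of Theorem~\ref{thm:qtf:vm:linear} to the case where the cross-term matrices vanish. Hypothesis \eqref{Na:factorize:Amu} is precisely the statement that the linear system in step (S1) of Theorem~\ref{thm:qtf:vm:linear} admits a real solution with $A_{\mu,\nu}=0$ for every pair $\mu<\nu$ with $|\mu|=|\nu|=m$. With that choice, step (S2) of the algorithm is vacuous and no high-pass filters of the two-index form \eqref{bmunu} are produced, so every generated filter comes from step (S3) and has the shape \eqref{bmu}.

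First I would simply invoke Theorem~\ref{thm:qtf:vm:linear} directly, which guarantees that the resulting collection $\{a;b_1,\ldots,b_s\}_{(\eps_1,\ldots,\eps_s)}$ is a quasi-tight $\dm$-framelet filter bank with $\vmo(b_\ell)\ge m$ for all $\ell$. The substantive remaining work is to verify that the filters produced by (S3) actually take one of the two tabulated forms. To do this, I would apply Theorem~\ref{thm:qsos} to each Hermitian matrix $A_{\mu,\mu}$ and inspect the two kinds of column vectors it produces: the constant unit vectors $\sqrt{|\kappa_j|}e_j$ coming from the constant-diagonal part, and the two-monomial vectors $\sqrt{|c|}(e_j-e^{i\gamma\cdot\xi}e_k)$ and $\sqrt{|c|}(1-e^{-i\gamma\cdot\xi})e_j$ coming from the off-diagonal and nonconstant-diagonal elimination steps in the proof of Theorem~\ref{thm:qsos}.

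Next I would substitute these two cases into the formula $\wh{b_{\mu,\mu,\ell}}(\xi)=\wh{\nabla^\mu\td}(\xi)\fu_\ell(\dm^\tp\xi)F(\xi)e_1$. Writing the first column of $F$ explicitly as $(e^{-i\gamma_j\cdot\xi})_{j=1}^{\ddm}$, a one-line calculation shows that a constant $\fu_\ell=\sqrt{|\kappa_j|}e_j^\tp$ yields an output of the form $c_\ell e^{-i\alpha_\ell\cdot\xi}\wh{\nabla^{\mu_\ell}\td}(\xi)$ with $\alpha_\ell=\gamma_j$; while in the two-monomial case, pulling out a common exponential factorises the output as
\[
c_\ell e^{-i\alpha_\ell\cdot\xi}\wh{\nabla^{\mu_\ell}\td}(\xi)\bigl(1-e^{-i\beta_\ell\cdot\xi}\bigr)=c_\ell e^{-i\alpha_\ell\cdot\xi}\wh{\nabla^{\mu_\ell}\td}(\xi)\wh{\nabla_{\beta_\ell}\td}(\xi),
\]
for suitable $\alpha_\ell,\beta_\ell\in\dZ$. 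These are exactly the two permitted forms. For the parenthetical claim that \eqref{Na:factorize:Amu} is always solvable when $d=1$, I would simply observe that $\{\mu\in\dNN:|\mu|=m\}=\{m\}$ is a singleton, so no off-diagonal pairs $\mu<\nu$ with $|\mu|=|\nu|=m$ exist and \eqref{Na:factorize:Amu} coincides with the full system \eqref{Na:factorize}; existence of a real solution is then immediate from the existence argument already given inside the proof of Theorem~\ref{thm:qtf:vm:linear}.

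The only mildly subtle point, and the one I would double-check most carefully, is that the two-monomial $\fu_\ell$ arising from Case~2 (nonconstant-diagonal elimination) in the proof of Theorem~\ref{thm:qsos} also collapses to Form~2 of the corollary and not some third pattern; the identity $\sqrt{|c|}(1-e^{-i\gamma\cdot\xi})e_j$ makes this immediate once the dilation by $\dm^\tp$ is absorbed into $\beta_\ell$. Everything else is algorithmic bookkeeping inherited from Theorem~\ref{thm:qtf:vm:linear}.
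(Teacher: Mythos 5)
Your proposal is correct and follows essentially the same route as the paper: the paper's proof likewise observes that for $d=1$ the index set $\{\mu:|\mu|=m\}$ is a singleton so \eqref{Na:factorize} reduces to \eqref{Na:factorize:Amu}, and then obtains the filters by applying Theorem~\ref{thm:qsos} to each $A_{\mu,\mu}$ within the algorithm of Theorem~\ref{thm:qtf:vm:linear}. You merely carry out explicitly the substitution of the two kinds of vectors $\fu_\ell$ into \eqref{bmu} to confirm the two permitted filter forms, a verification the paper leaves implicit.
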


\begin{proof} For the one dimensional case $d=1$, there are no terms satisfying $\mu<\nu$ and $|\mu|=|\nu|$. Consequently, \eqref{Na:factorize} becomes \eqref{Na:factorize:Amu}. Therefore, the existence of a solution to \eqref{Na:factorize:Amu} with $d=1$ is guaranteed by Theorem~\ref{thm:qtf:vm:linear}.
The claim follows directly by
applying Theorem~\ref{thm:qsos} to each $A_{\mu,\mu}$.
\end{proof}

We call the high-pass filters constructed in Corollary~\ref{cor:qtf:vm} as differencing filters since
all of them take the form $\nabla^\mu \td$ or their differences.

\section{Examples of Quasi-tight Framelets with Vanishing Moments}

In this section we shall illustrate Theorem~\ref{thm:qtf:vm:linear} and Corollary~\ref{cor:qtf:vm}
for constructing quasi-tight framelets with high vanishing moments from arbitrary refinable functions.
Let us first present a one-dimensional example to illustrate Corollary~\ref{cor:qtf:vm} for constructing quasi-tight framelets with all high-pass filters being special differencing filters. Recall that $\wh{u_a}(\xi):=1-|\wh{a}(\xi)|^2$.

\begin{example}  \label{ex:1dVM2} {\rm
Consider the interpolatory low-pass filter $\ta$ in \eqref{aI4} of Example~\ref{ex:1dVM1}.
Since $\sr(a,2)=4$ and $\vmo(u_a)=4$, according to the inequality in \eqref{qtffb:vm}, the highest order of vanishing moments that we can achieve is $2$.
Using Corollary~\ref{cor:qtf:vm} with $m=2$, we have a quasi-tight $2$-framelet filter bank
$\{\ta; \tb_1, \ldots, \tb_9\}_{(\eps_1, \ldots,\eps_9)} $, where all the high-pass filters are differencing filters given by
\begin{align*}
&\wh{b_1}(\xi)= \tfrac{\sqrt{2}}{32}(1-e^{-i\xi})^2(1-e^{3i\xi}),
	\quad
\wh{b_2}(\xi) =	 \tfrac{\sqrt{2}}{32}(1-e^{-i\xi})^2(e^{-i\xi}-e^{2i\xi}),
	\quad
\wh{b_3}(\xi) = \tfrac{1}{16}(1-e^{-i\xi})^2
(1-e^{2i\xi}),\\
&\wh{b_4}(\xi) = \tfrac{1}{32}(1-e^{-i\xi})^2
	(e^{3i\xi}-e^{-i\xi}),
	\quad
	\wh{b_5}(\xi) = \tfrac{\sqrt{3}}{4}(1-e^{-i\xi})^2,
	\quad
	\wh{b_6}(\xi) = \tfrac{\sqrt{3}}{4}(1-e^{-i\xi})^2 e^{-i\xi},\\
&\wh{b_7}(\xi) = \tfrac{\sqrt{42}}{32}(1-e^{-i\xi})^3,
	\quad
	\wh{b_8}(\xi) = \tfrac{\sqrt{42}}{32}(1-e^{-i\xi})^3 e^{i\xi},
	\quad
	\wh{b_9}(\xi) = \tfrac{\sqrt{3}}{16}(1-e^{-i\xi})^2 (e^{-i\xi}-e^{i\xi}),
\end{align*}
and $\eps_1=\cdots=\eps_6=1$ and $ \eps_7 =\eps_8=\eps_9=-1$. Note that the high-pass filters $b_5$ and $b_6$ have $2$ vanishing moments, while all other high-pass filters have $3$ vanishing moments.
Since $\sm(a,2)\approx 2.440765$, $\phi\in \Lp{2}$ and $\{\phi;\psi^1,\ldots,\psi^{9}\}_{(\eps_1,\ldots,\eps_9)}$ is a quasi-tight $2$-framelet in $L_2(\R)$, where $\phi,\psi^1,\ldots,\psi^{9}$ are defined in \eqref{phi:psi} with $\dm=2$ and $s=9$. Note that all the functions $\psi^1,\ldots,\psi^{9}$ have at least $2$ vanishing moments.
See Figure~\ref{fig:1dVM2} for the graphs of $\psi^1,\ldots,\psi^9$.
}\end{example}

\begin{figure}[ht!]
	\centering
	\begin{subfigure}[]{0.25\textwidth} 
		\includegraphics[width=\textwidth, height=0.8\textwidth]{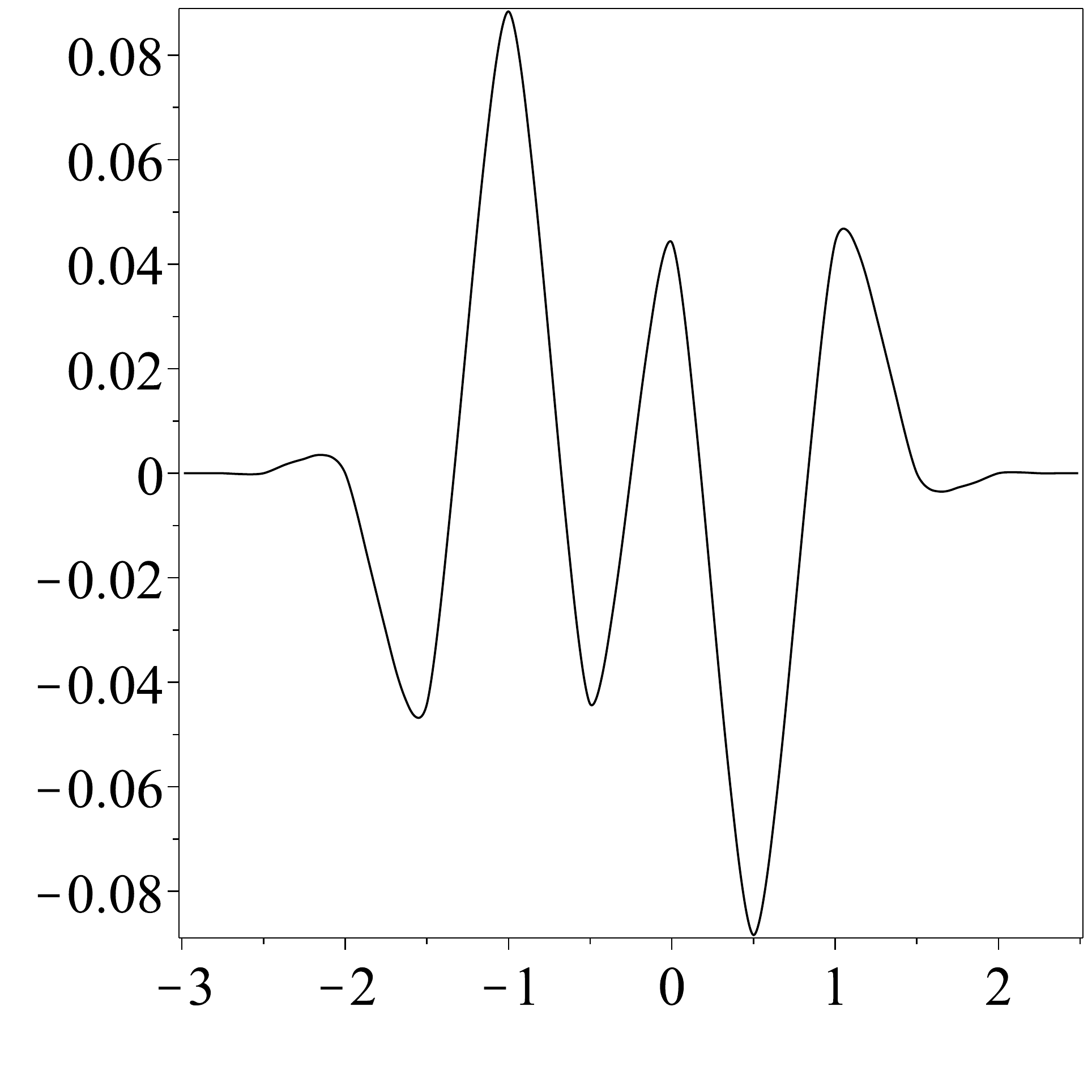}
		\caption{$\psi^1$}
	\end{subfigure}
	\begin{subfigure}[]{0.25\textwidth}
		\includegraphics[width=\textwidth, height=0.8\textwidth]{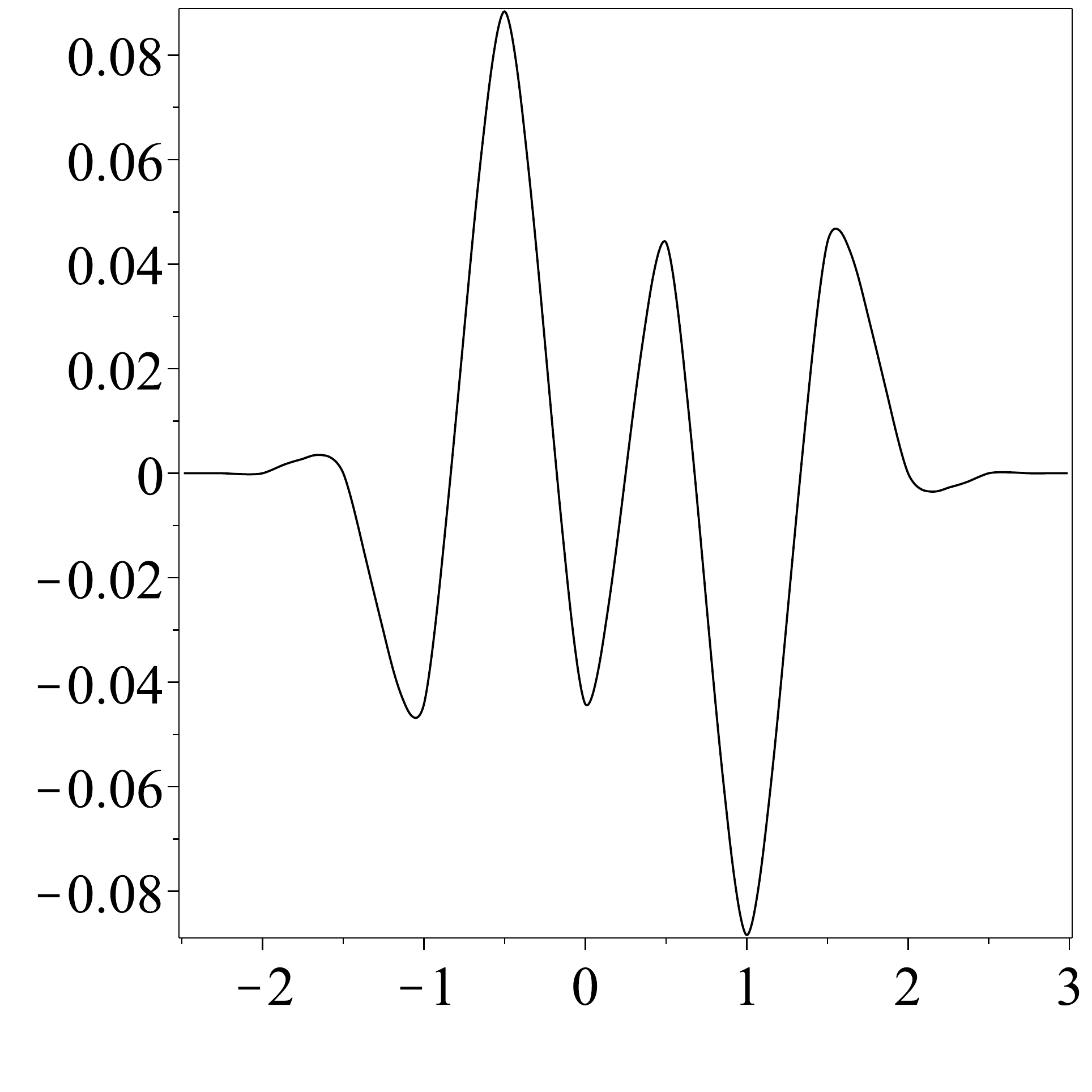}
		\caption{$\psi^2$}
	\end{subfigure}
	\begin{subfigure}[]{0.25\textwidth}
		\includegraphics[width=\textwidth, height=0.8\textwidth]{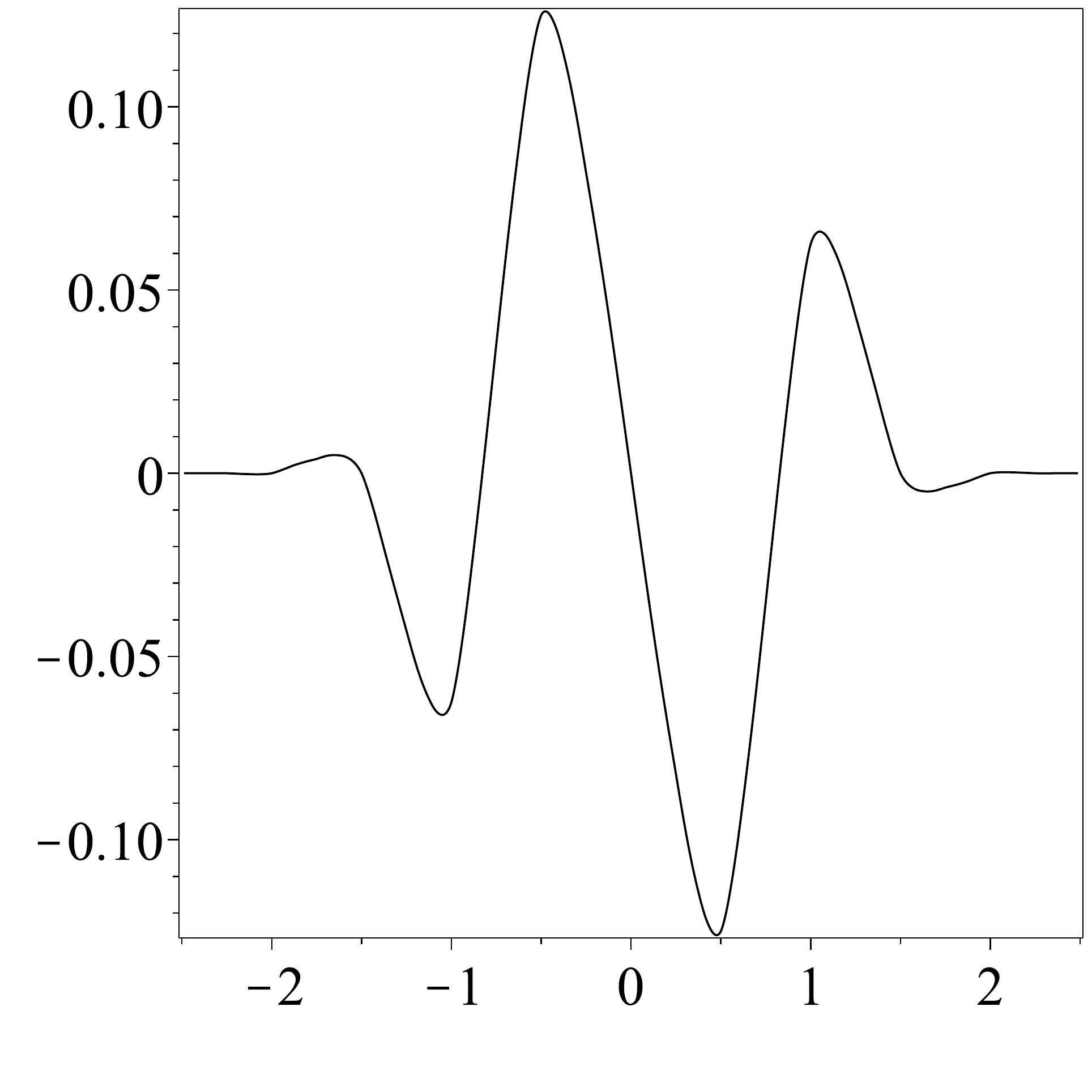}
		\caption{$\psi^3$}
	\end{subfigure}
	\\
	\begin{subfigure}[]{0.25\textwidth}
		\includegraphics[width=\textwidth, height=0.8\textwidth]{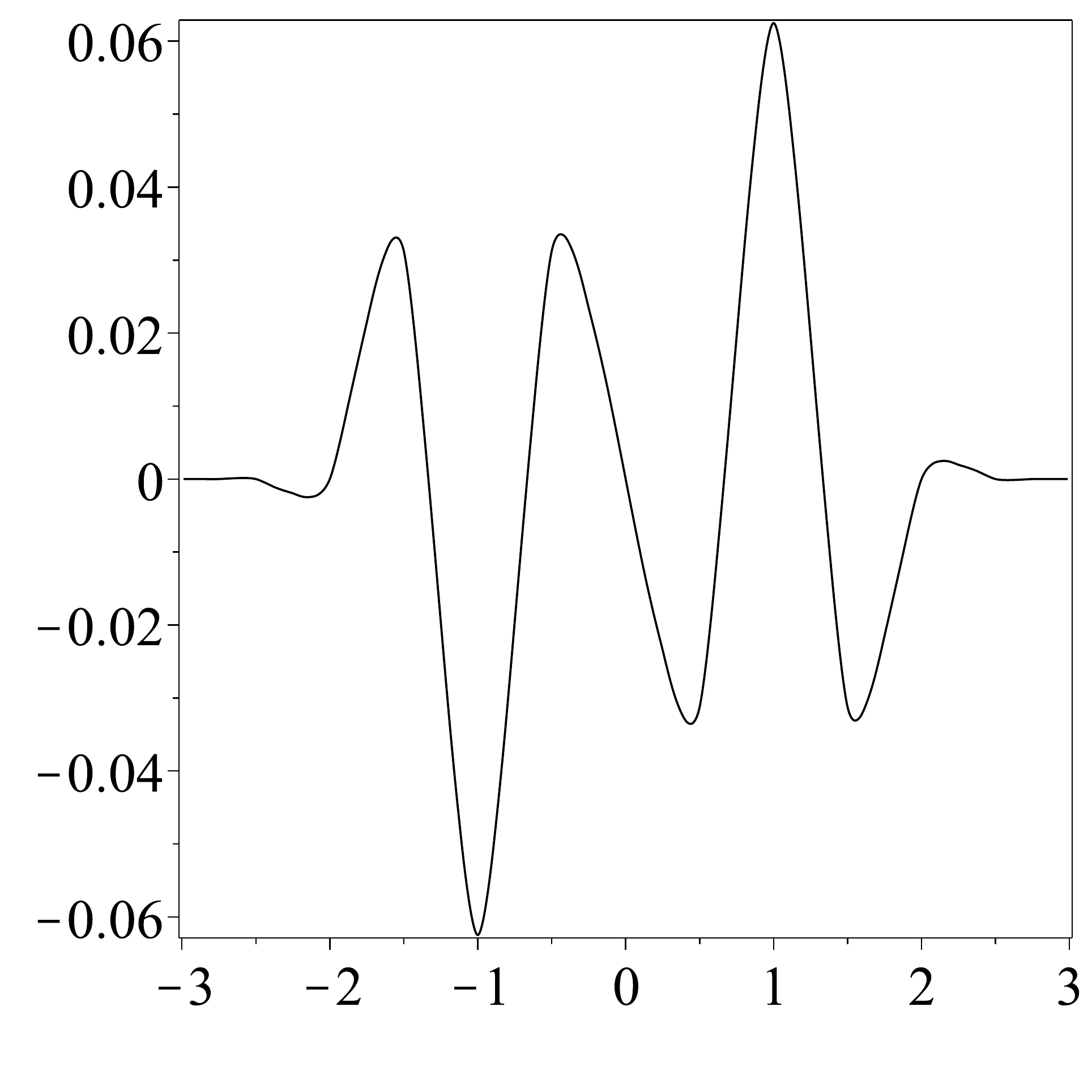}
		\caption{$\psi^4 $ }
	\end{subfigure}
	\begin{subfigure}[]{0.25\textwidth}
		\includegraphics[width=\textwidth, height=0.8\textwidth]{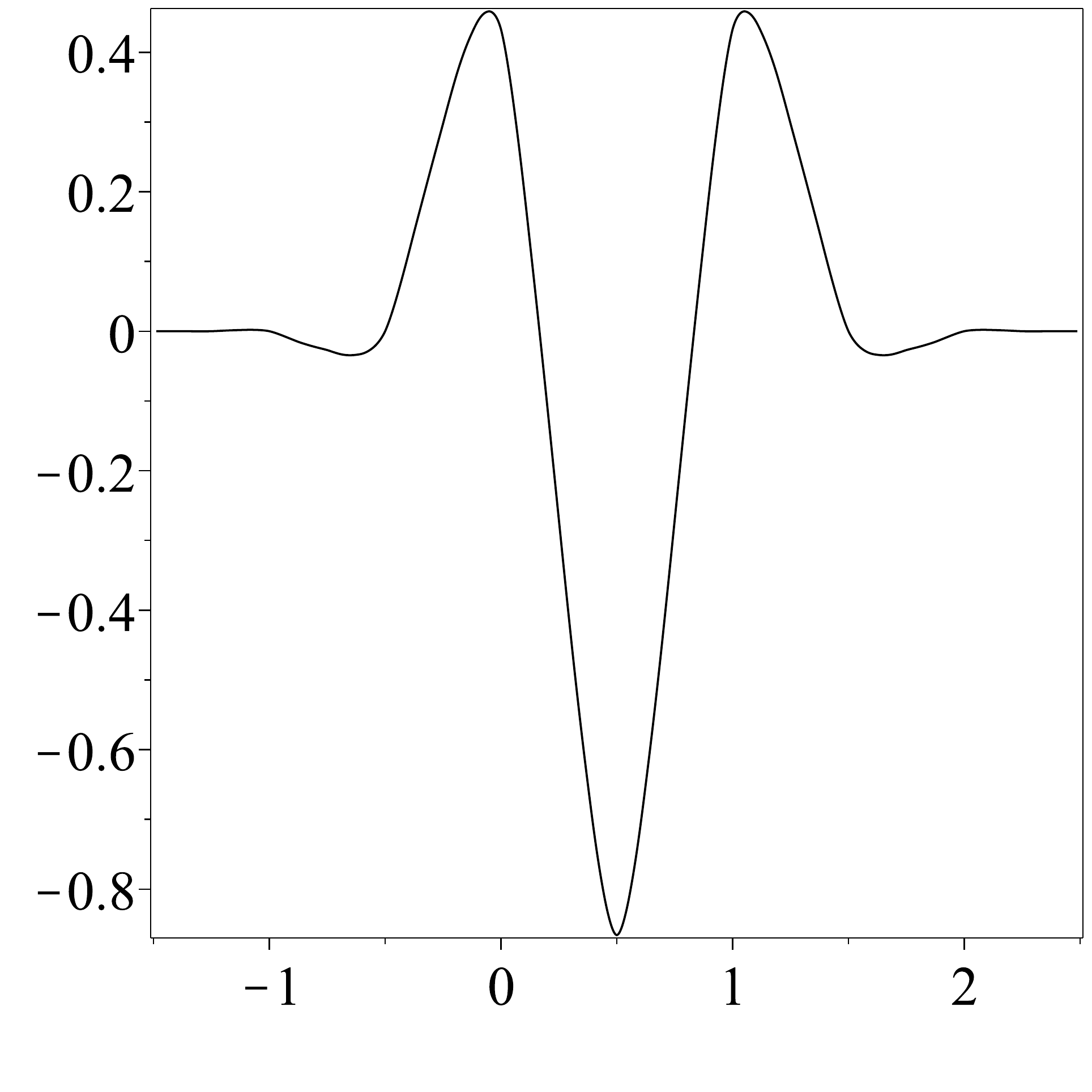}
		\caption{$\psi^5$}
	\end{subfigure}
	\begin{subfigure}[]{0.25\textwidth}
		\includegraphics[width=\textwidth, height=0.8\textwidth]{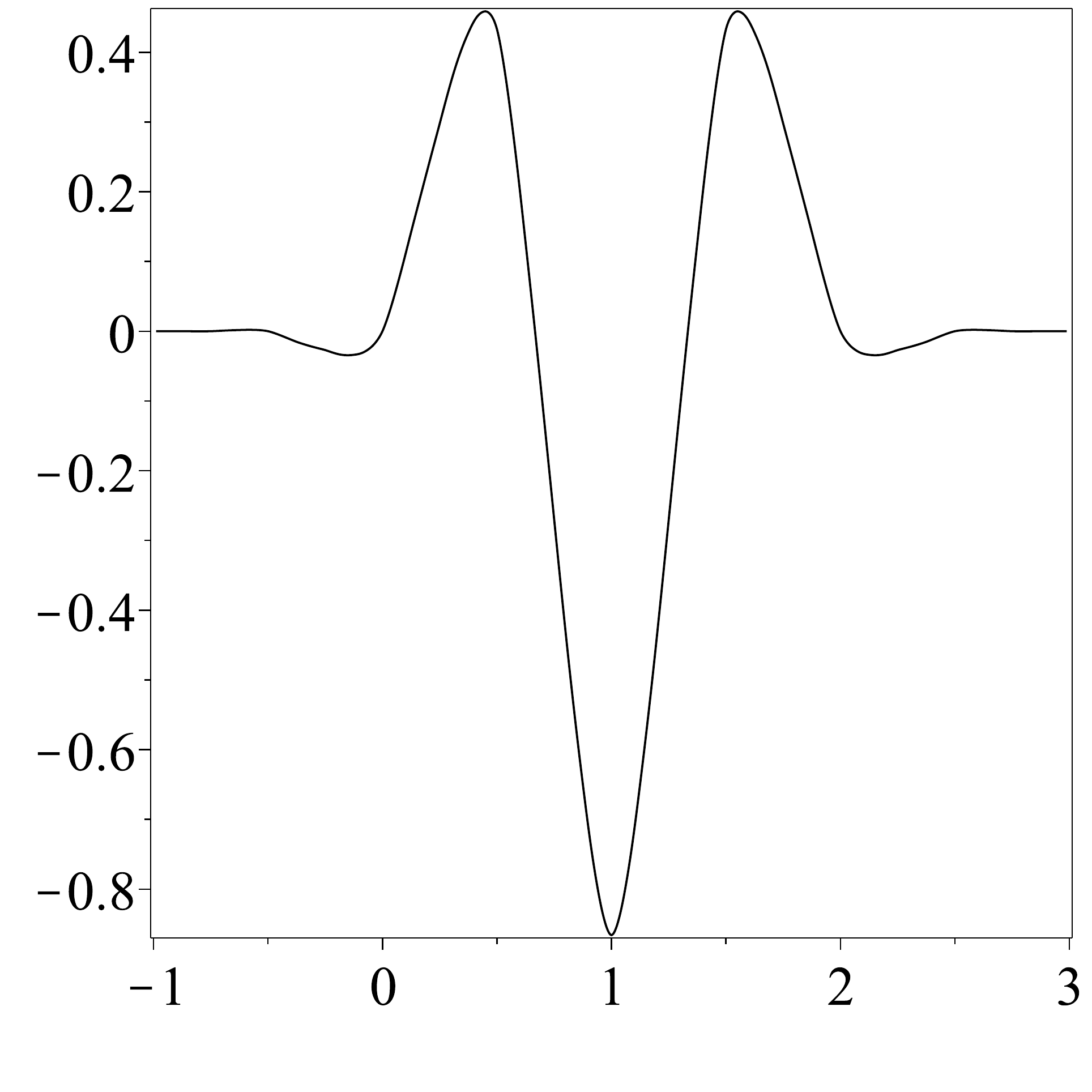}
		\caption{$\psi^6$}
	\end{subfigure}
	\\
	\begin{subfigure}[]{0.25\textwidth}
		\includegraphics[width=\textwidth, height=0.8\textwidth]{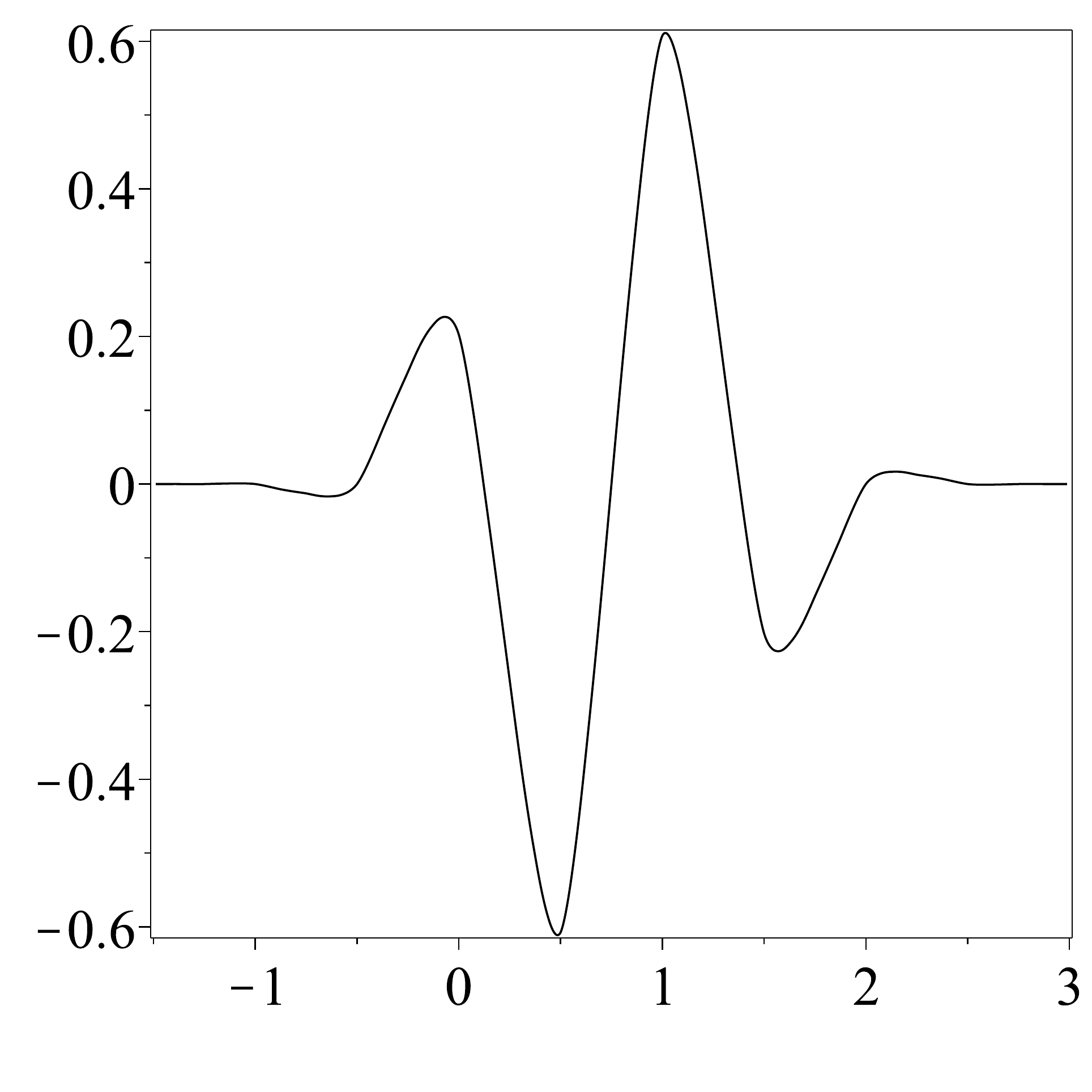}
		\caption{$\psi^7$}
	\end{subfigure}
	\begin{subfigure}[]{0.25\textwidth}
		\includegraphics[width=\textwidth, height=0.8\textwidth]{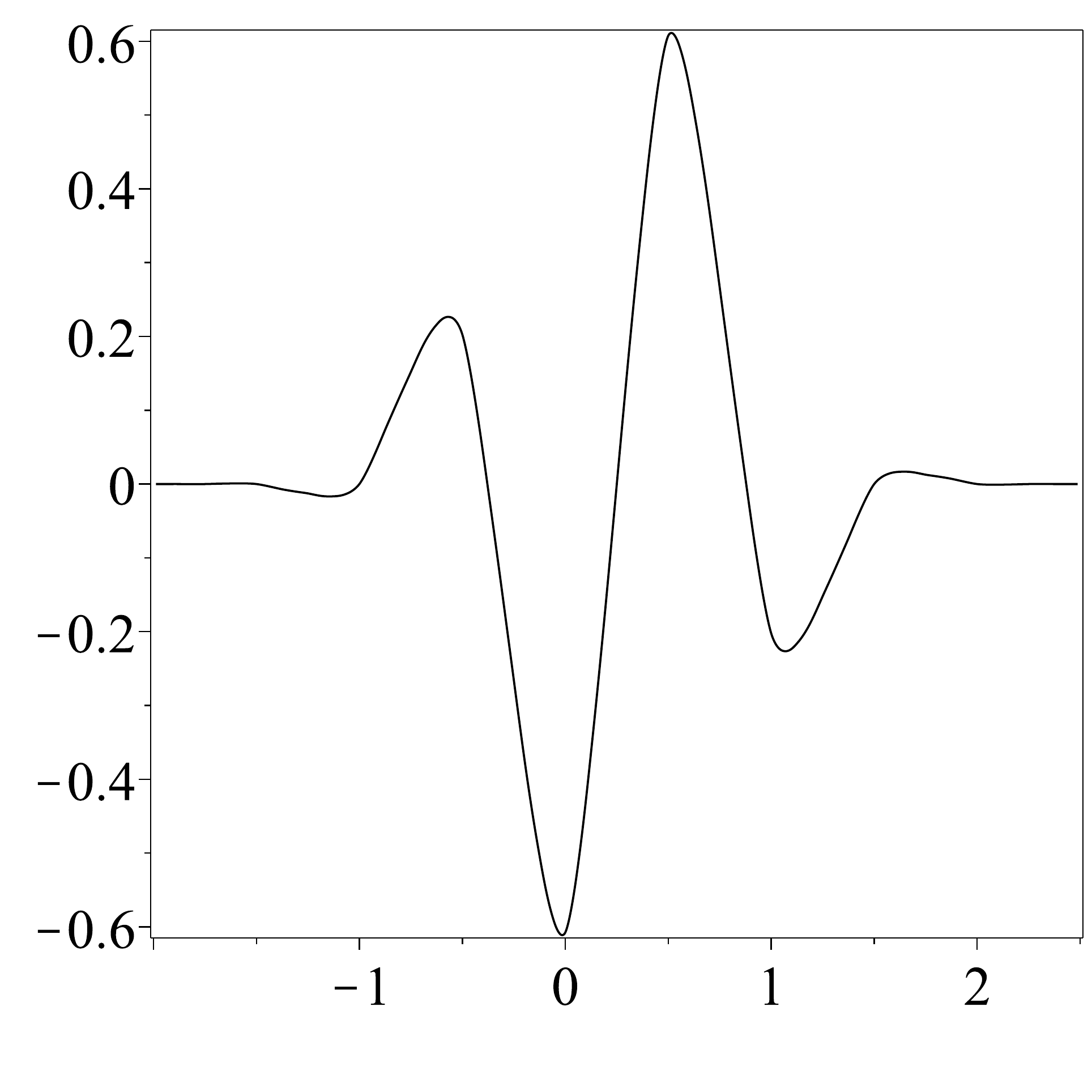}
		\caption{$\psi^8 $ }
	\end{subfigure}
	\begin{subfigure}[]{0.25\textwidth}
		\includegraphics[width=\textwidth, height=0.8\textwidth]{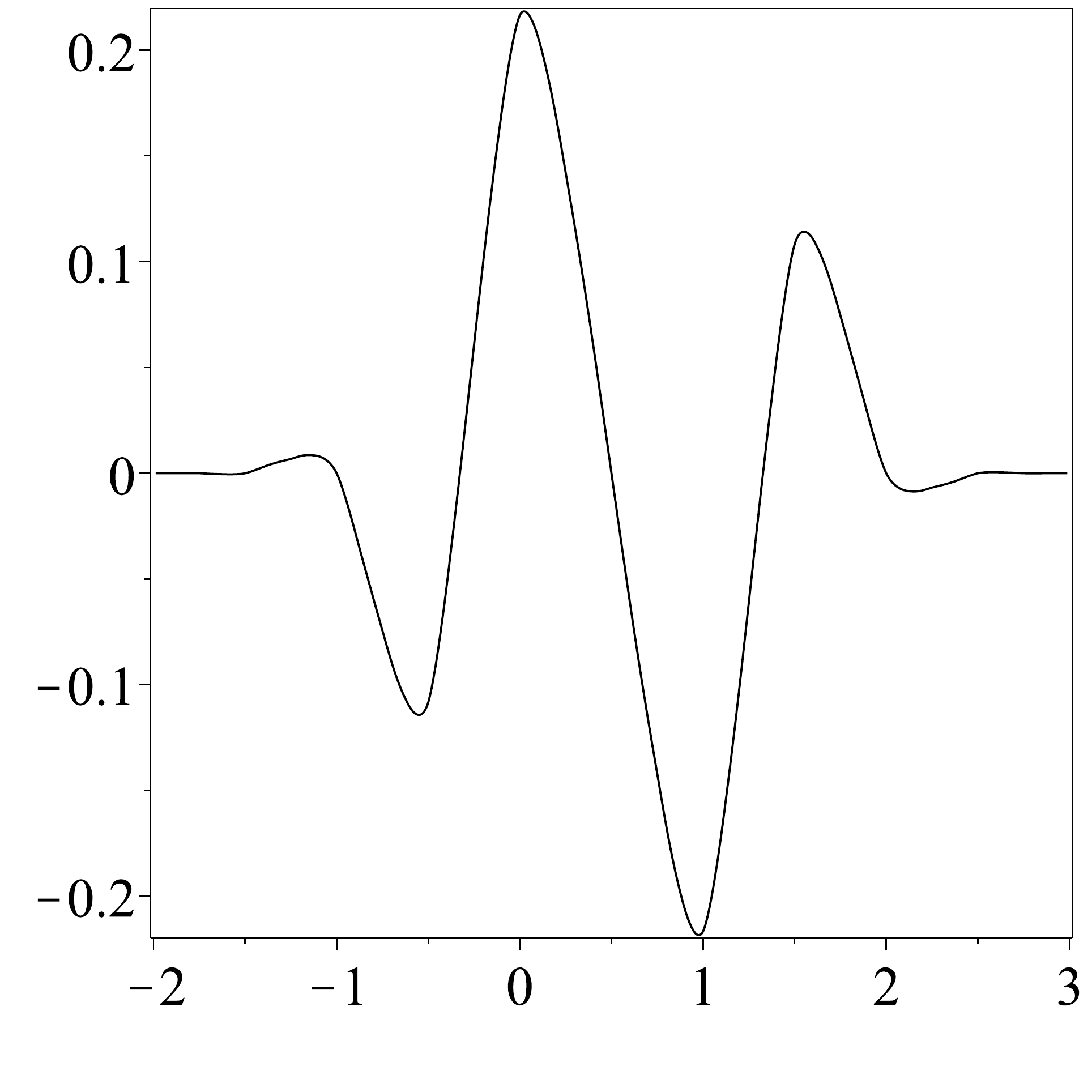}
		\caption{$\psi^9 $ }
	\end{subfigure}
\caption{
Graphs of $\psi^1,\ldots,\psi^9$
in Example~\ref{ex:1dVM2}, where
$\{\phi;\psi^1,\ldots,\psi^9\}_{(\eps_1,\ldots,\eps_9)}$ with $\eps_1=\cdots=\eps_6=1$ and $\eps_7=\eps_8=\eps_9=-1$ is a compactly supported quasi-tight $2$-framelet in $\Lp{2}$ with all their associated high-pass filters being differencing filters.	}\label{fig:1dVM2}
\end{figure}

\begin{example} \label{ex:qcx1VM2} {\rm
Consider the two-dimensional low-pass filter
\[
\ta = \begin{bmatrix}
	-\tfrac{1}{16} & \tfrac{1}{8} & -\tfrac{1}{16} \\[0.5em]
	\tfrac{1}{8} & {\bf\underline{\tfrac{3}{4}}} & \tfrac{1}{8} \\[0.5em]
	-\tfrac{1}{16}  & \tfrac{1}{8} & -\tfrac{1}{16}
	\end{bmatrix}_{[-1, 1]\times [-1, 1]}.
\]
Since $ \sr(a,\dm_{\sqrt{2}})=2$ and $\vmo(u_a)=4$, according to the inequality in \eqref{qtffb:vm}, the highest order of vanishing moments that we can achieve is $2$.
Using Theorem~\ref{thm:qtf:vm:linear} with $m=2$, we obtain a quasi-tight $\dm_{\sqrt{2}}$-framelet filter bank
$\{\ta; \tb_1, \ldots, \tb_7\}_{(\eps_1, \ldots,\eps_7)} $, where
\begin{align*}
&\wh{b_1}(\xi_1,\xi_2) = \tfrac{\sqrt{2}}{2}(1-e^{-i\xi_2})^2
	- \tfrac{\sqrt{2}}{256}(1-e^{-i\xi_1})(1-e^{-i\xi_2})
	(1-e^{-i(\xi_2-\xi_1)}),\\
&\wh{b_2}(\xi_1,\xi_2) = \tfrac{\sqrt{2}}{2}(1-e^{-i\xi_2})^2
	- \tfrac{\sqrt{2}}{512}(1-e^{-i\xi_1})^2
	 (e^{-2i\xi_2}-1-16e^{-i(\xi_2-\xi_1)}),\\
&\wh{b_3}(\xi_1,\xi_2) = \tfrac{129\sqrt{2}}{256}(1-e^{-i\xi_1})^2
	+ \tfrac{\sqrt{2}}{256}(1-e^{-i\xi_1})(1-e^{-i\xi_2})
	(e^{-i\xi_1} - 2 + e^{-i(\xi_1-2\xi_2)}),\\
&\wh{b_4}(\xi_1,\xi_2) = \tfrac{\sqrt{2}}{2}(1-e^{-i\xi_2})^2
	+ \tfrac{\sqrt{2}}{256}(1-e^{-i\xi_1})(1-e^{-i\xi_2})
	(1-e^{-i(\xi_2-\xi_1)}),\\
&\wh{b_5}(\xi_1,\xi_2) = \tfrac{\sqrt{2}}{2}(1-e^{-i\xi_2})^2
	+ \tfrac{\sqrt{2}}{512}(1-e^{-i\xi_1})^2
	 (e^{-2i\xi_2}-1-16e^{-i(\xi_2-\xi_1)}),\\
&\wh{b_6}(\xi_1,\xi_2) = \tfrac{127\sqrt{2}}{256}(1-e^{-i\xi_1})^2
	- \tfrac{\sqrt{2}}{256}(1-e^{-i\xi_1})(1-e^{-i\xi_2})
	(e^{-i\xi_1} - 2 + e^{-i(\xi_1-2\xi_2)}),\\
&\wh{b_7}(\xi_1,\xi_2) = \tfrac{1}{8}(1-e^{-i\xi_1})^2 e^{-i\xi_1},
\end{align*}
with $\eps_1=\eps_2=\eps_3=1 $, and $ \eps_4=\eps_5=\eps_6=\eps_7=-1 $. All the high-pass filters have at least $2$ vanishing moments.
Since $\sm(a,\dm_{\sqrt{2}})\approx 0.235724$, $\phi\in L_2(\R^2)$ and $\{\phi;\psi^1,\ldots,\psi^{7}\}_{(\eps_1,\ldots,\eps_7)}$ is a quasi-tight $\dm_{\sqrt{2}}$-framelet in $L_2(\R^2)$, where $\phi,\psi^1,\ldots,\psi^{7}$ are defined in \eqref{phi:psi} with $\dm=\dm_{\sqrt{2}}$ and $s=7$. Note that all the functions $\psi^1,\ldots,\psi^{7}$ have at least $2$ vanishing moments.
}\end{example}

\begin{example} \label{ex:qcx2VM2} {\rm
For $d = 2$, we consider the following two-dimensional low-pass filter
$$ \ta = \left[
	\begin {array}{ccccc} 0&0&-\frac{1}{16}&0&0\\
	\noalign{\medskip}0&\frac{1}{16}
	&\frac{1}{8}&\frac{1}{16}&0\\
	 \noalign{\medskip}-\frac{1}{16}&\frac{1}{8}&
	 {\bf\underline{\tfrac{1}{2}}}&\frac{1}{8}&-\frac{1}{16} \\
	 \noalign{\medskip}0&\frac{1}{16}&\frac{1}{8}&\frac{1}{16}&0\\
	\noalign{\medskip}0&0&-\frac{1}{16}&0
	&0\end {array} \right]_{[-2, 2]\times [-2, 2]}.
$$
Since $ \sr(a,\dm_{\sqrt{2}}) = 2 $ and $\vmo(u_a)=4$,
according to the inequality in \eqref{qtffb:vm}, the highest order of vanishing moments that we can achieve is $2$.
Using Corollary~\ref{cor:qtf:vm} with $m=2$, we obtain a quasi-tight $\dm_{\sqrt{2}}$-framelet filter bank $\{\ta; \tb_1, \ldots, \tb_{19}\}_{(\eps_1, \ldots,\eps_{19})} $, where all the high-pass filters are differencing filters given by
\begin{align*}
&\wh{b_1}(\xi_1,\xi_2) = \tfrac{1}{16}(1-e^{2i\xi_2})(1-e^{-i\xi_2})^2, \quad
	\wh{b_2}(\xi_1,\xi_2) = \tfrac{\sqrt{6}}{8} (1-e^{-i\xi_2})^2, \quad
	\wh{b_3}(\xi_1,\xi_2) = \tfrac{\sqrt{2}}{8} e^{-i\xi_1} (1-e^{-i\xi_2})^2,\\
&\wh{b_4}(\xi_1,\xi_2) = \tfrac{3}{16}
	 (1-e^{-i(\xi_2-\xi_1)})(1-e^{-i\xi_1})(1-e^{-i\xi_2}), \quad
	\wh{b_{5}}(\xi_1,\xi_2) = \tfrac{3}{16}
	 (e^{-i\xi_1}-e^{i\xi_2})(1-e^{-i\xi_1})(1-e^{-i\xi_2}),\\
&\wh{b_{6}}(\xi_1,\xi_2) = \tfrac{\sqrt{3}}{16}
	(1-e^{-2i\xi_2})(1-e^{-i\xi_1})^2, \quad
\wh{b_{7}}(\xi_1,\xi_2) = \tfrac{1}{16}
	(1-e^{2i\xi_1})(1-e^{-i\xi_1})^2,\\
&\wh{b_{8}}(\xi_1,\xi_2) = \tfrac{\sqrt{6}}{8} (1-e^{-i\xi_1})^2, \quad
	\wh{b_{9}}(\xi_1,\xi_2) = \tfrac{\sqrt{2}}{8}e^{-i\xi_1} (1-e^{-i\xi_1})^2,\\
&\wh{b_{10}}(\xi_1,\xi_2) = \tfrac{\sqrt{2}}{16}(1-e^{-i\xi_1})(1-e^{-i\xi_2})^2, \quad
	\wh{b_{11}}(\xi_1,\xi_2) = \tfrac{\sqrt{2}}{16} (1-e^{i(\xi_1+\xi_2)})(1-e^{-i\xi_2})^2,\\
&\wh{b_{12}}(\xi_1,\xi_2) = \tfrac{\sqrt{2}}{16} (1-e^{-i(\xi_2-\xi_1)})(1-e^{-i\xi_2})^2, \quad
	\wh{b_{13}}(\xi_1,\xi_2) = \tfrac{\sqrt{10}}{32}
	(e^{-i\xi_1} - e^{i\xi_1})(1-e^{-i\xi_2})^2,\\
&\wh{b_{14}}(\xi_1,\xi_2) = \tfrac{\sqrt{2}}{16}(e^{-2i\xi_1}-e^{-i\xi_1})
	(1-e^{-i\xi_2})^2, \quad
\wh{b_{15}}(\xi_1,\xi_2) = \tfrac{\sqrt{2}}{16}
(1-e^{i(\xi_1+\xi_2)})(1-e^{-i\xi_1})^2,\\
&\wh{b_{16}}(\xi_1,\xi_2) = \tfrac{1}{4}(1-e^{-i\xi_1})(1-e^{-i\xi_2}), \quad
\wh{b_{17}}(\xi_1,\xi_2) = \tfrac{1}{4}e^{-i\xi_1}(1-e^{-i\xi_1})(1-e^{-i\xi_2}),\\
&\wh{b_{18}}(\xi_1,\xi_2) = \tfrac{\sqrt{2}}{16}	 (1-e^{-i(\xi_2-\xi_1)})(1-e^{-i\xi_1})^2, \quad
\wh{b_{19}}(\xi_1,\xi_2) = \tfrac{\sqrt{26}}{32}
(e^{-i(\xi_1+2\xi_2)}-e^{-i\xi_1}) (1-e^{-i\xi_1})^2,
\end{align*}
with $\eps_1=\cdots=\eps_9= 1$ and $\eps_{10}=\cdots=\eps_{19}=-1$.
Since $\sm(a,\dm_{\sqrt{2}})\approx 1.801593$, $\phi\in L_2(\R^2)$ and $\{\phi;\psi^1,\ldots,\psi^{19}\}_{(\eps_1,\ldots,\eps_{19})}$ is a quasi-tight $\dm_{\sqrt{2}}$-framelet in $L_2(\R^2)$, where $\phi,\psi^1,\ldots,\psi^{19}$ are defined in \eqref{phi:psi} with $\dm=\dm_{\sqrt{2}}$ and $s=19$. Note that all the functions $\psi^1,\ldots,\psi^{19}$ have at least $2$ vanishing moments.

Without requiring differencing high-pass filters, using Theorem~\ref{thm:qtf:vm:linear}, we can obtain a quasi-tight $\dm_{\sqrt{2}}$-framelet filter bank $\{\ta; \tb_1, \tb_2, \tb_{3}\}_{(1, 1, -1)}$, where
\begin{align*}
&b_1 = \tfrac{\sqrt{2}}{8}\begin{bmatrix}
	0 & -1 & 0 \\
	{\bf\underline{-1}} & 4  & -1 \\
	0 & -1 & 0
	\end{bmatrix}_{[0, 2]\times [-1, 1]}, \qquad
	b_2 = \tfrac{\sqrt{6}}{8} \begin{bmatrix}
	0 & -1 & 0 \\
	{\bf\underline{1}} & 0  & 1 \\
	0 & -1 & 0
	\end{bmatrix}_{[0, 2]\times [-1, 1]},\\
&b_3 = \left[
	\begin {array}{ccccc} 0&0&-\frac{1}{16}&0&0\\
	\noalign{\medskip}0&\frac{1}{16}
	&\frac{1}{8}&\frac{1}{16}&0\\
	 \noalign{\medskip}-\frac{1}{16}&\frac{1}{8}&
	 {\bf\underline{-\tfrac{1}{2}}}&\frac{1}{8}&-\frac{1}{16} \\
	 \noalign{\medskip}0&\frac{1}{16}&\frac{1}{8}&\frac{1}{16}&0\\
	\noalign{\medskip}0&0&-\frac{1}{16}&0
	&0\end {array} \right]
	_{[-2, 2]\times [-2, 2]}
\end{align*}
with $\vmo(b_1) =\vmo(b_2) = 2$, and $\vmo(b_3) = 4$.
Since $\sm(a,\dm_{\sqrt{2}})\approx 1.801593$, $\phi\in L_2(\R^2)$ and $\{\phi;\psi^1,\psi^2,\psi^{3}\}_{(1,1,-1)}$ is a quasi-tight $\dm_{\sqrt{2}}$-framelet in $L_2(\R^2)$, where $\phi,\psi^1,\psi^2,\psi^{3}$ are defined in \eqref{phi:psi} with $\dm=\dm_{\sqrt{2}}$ and $s=3$. Note that all the functions $\psi^1,\psi^2,\psi^{3}$ have at least $2$ vanishing moments. See Figure~\ref{fig:qcx2VM2} for the graphs of $\phi,\psi^1,\psi^2,\psi^3$.
}\end{example}

\begin{figure}[ht!]
	\centering
	\begin{subfigure}[]{0.24\textwidth} 
		 \includegraphics[width=1.1\textwidth, height=0.8\textwidth]{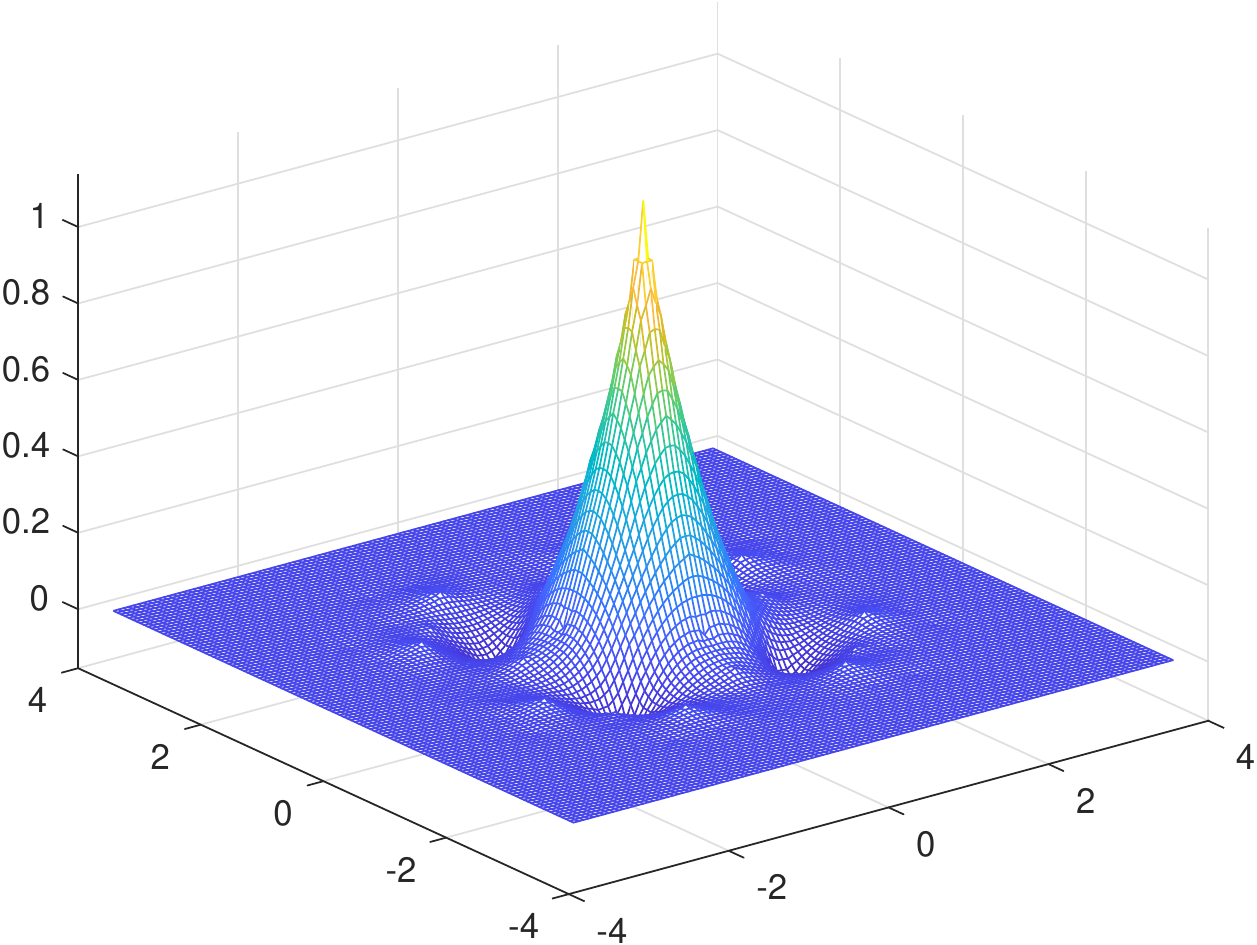}
		\caption{$\phi$}
	\end{subfigure}
	\begin{subfigure}[]{0.24\textwidth}
		 \includegraphics[width=1.1\textwidth, height=0.8\textwidth]{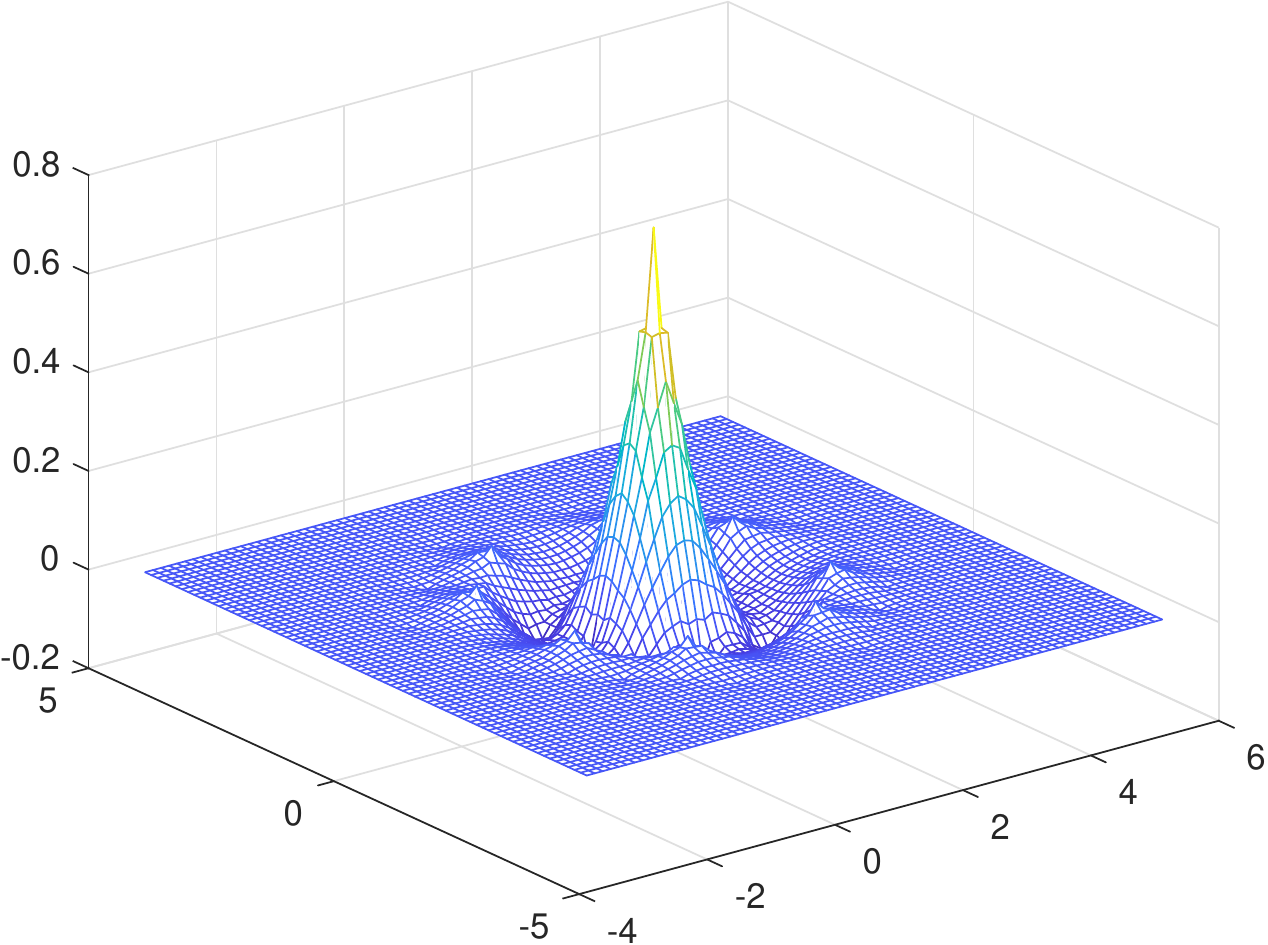}
		\caption{$\psi^1$}
	\end{subfigure}
	\begin{subfigure}[]{0.24\textwidth}
		 \includegraphics[width=1.1\textwidth, height=0.8\textwidth]{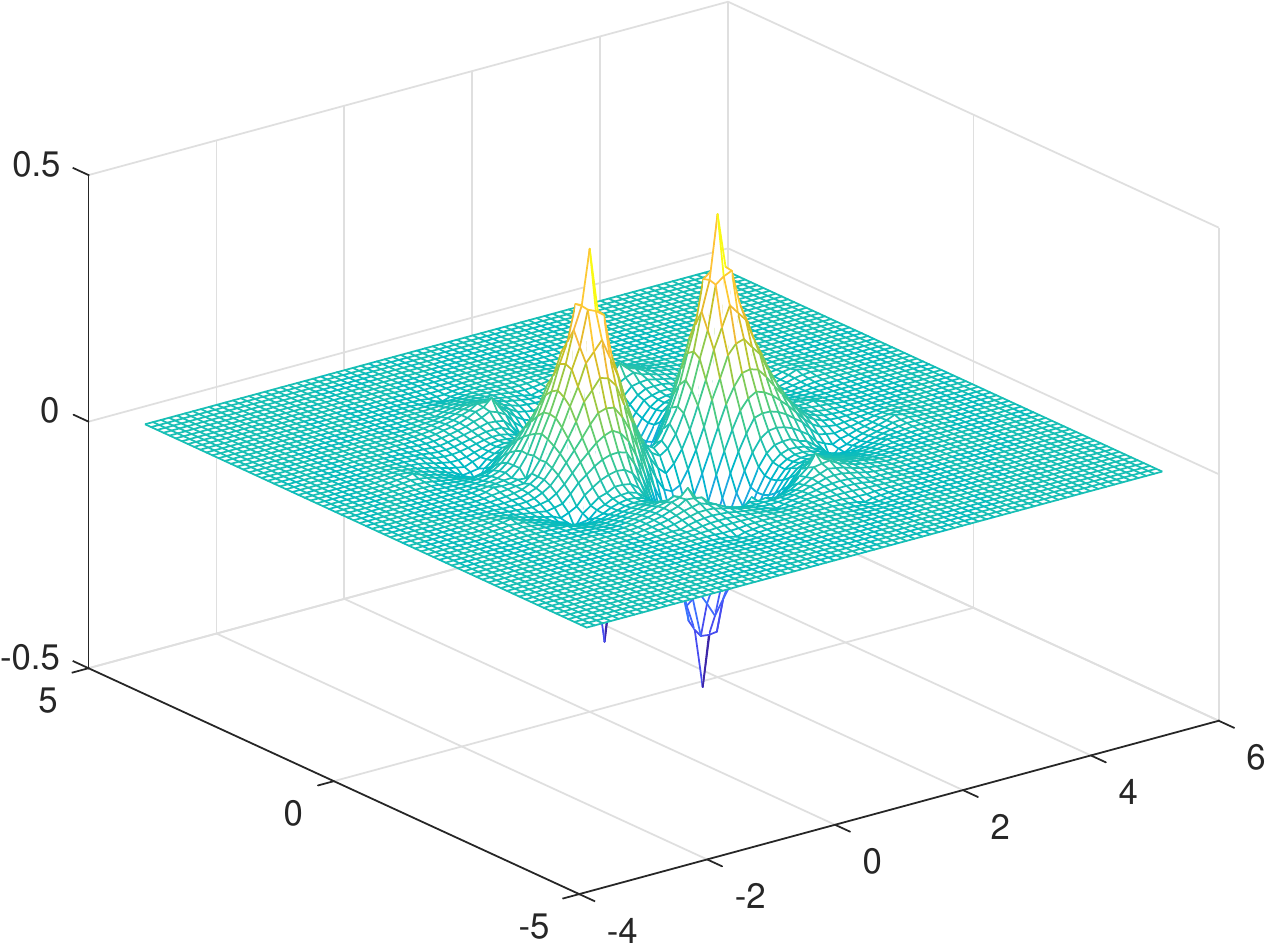}
		\caption{$\psi^2 $ }
	\end{subfigure}
	\begin{subfigure}[]{0.24\textwidth}
		 \includegraphics[width=1.1\textwidth, height=0.8\textwidth]{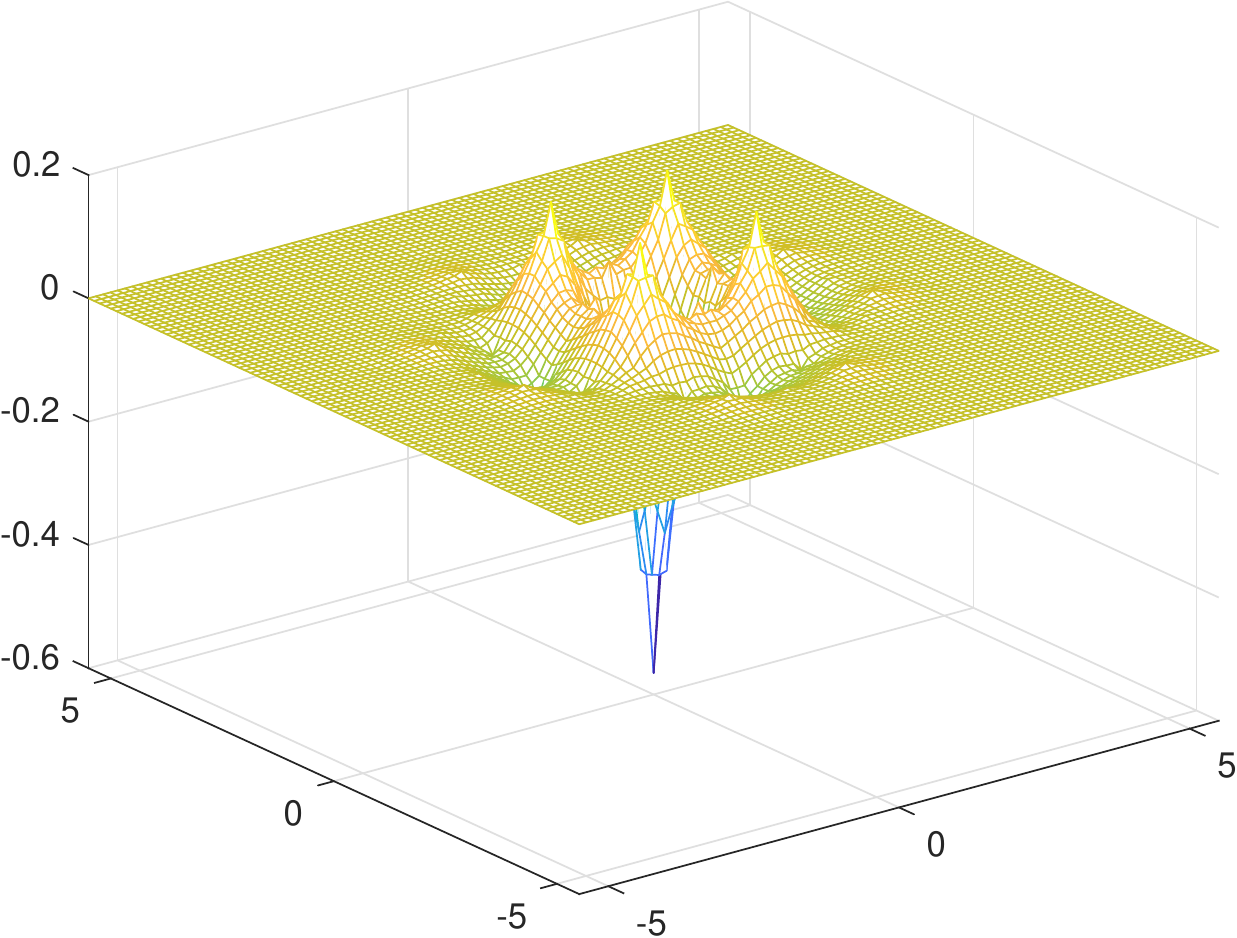}
		\caption{$\psi^3$}
	\end{subfigure}
\caption{
Graphs of $\phi,\psi^1,\psi^2,\psi^3$
in Example~\ref{ex:qcx2VM2}, where
$\{\phi;\psi^1,\psi^2,\psi^3\}_{(1,1,-1)}$ is a compactly supported quasi-tight $\dm_{\sqrt{2}}$-framelet in $L_2(\R^2)$ with at least $2$ vanishing moments.}
\label{fig:qcx2VM2}
\end{figure}

\begin{example}  \label{ex:I2VM2} {\rm
For $d=2$ and $\dm=2I_2$, we consider the following two-dimensional low-pass filter
\[
\ta =  \left[ \begin {array}{ccccc}
	 0&0&-{\tfrac{1}{64}}&-\tfrac{1}{32}&-{\tfrac{1}{64}}\\ \noalign{\medskip}0&-\tfrac{1}{32}&{\frac{5}{32}}&
	{\frac {5}{32}}&-\tfrac{1}{32}
	\\ \noalign{\medskip}-{\frac {1}{64}}&{\frac {5}{32}}& {\bf\underline{\tfrac{11}{32}}}
	&{\frac {5}{32}}&-{\frac {1}{64}}\\
	 \noalign{\medskip}-\tfrac{1}{32}&{\frac {5}
		{32}}&{\frac {5}{32}}&-\tfrac{1}{32}&0\\
	\noalign{\medskip}-{\frac {1}{64}}&
	-\tfrac{1}{32}&-{\frac {1}{64}}&0&0\end {array}
	\right]_{[-2,2]\times [-2, 2]} .
\]
Since $ \sr(a,2I_2) = 2 $ and $\vmo(u_a)=4$,
according to the inequality in \eqref{qtffb:vm}, the highest order of vanishing moments that we can achieve is $2$.
Using Theorem~\ref{thm:qtf:vm:linear} with $m=2$, we obtain a quasi-tight $2I_2$-framelet filter bank $\{\ta; \tb_1, \ldots, \tb_{5}\}_{(\eps_1, \ldots,\eps_{5})} $, where
{\small \begin{align*}
&\tb_1 =
\left[ \begin {array}{ccc} \frac{1}{16}&0&-{\frac{5}{16}}\\
	\noalign{\medskip}0&\frac{1}{2}&0\\
	\noalign{\medskip}
	 {\bf\underline{-\tfrac{5}{16}}}&0&\frac{1}{16}\end {array}
	\right]_{[0, 2]\times [0, 2]}, \quad
	\tb_2 =
	\left[ \begin {array}{ccccc} 0&0&-{\frac {5}{16}}&0&\frac{1}{16}\\
	 \noalign{\medskip}0&0&\frac{1}{2}&0&0\\
	\noalign{\medskip}\frac{1}{16}&0&
	 {\bf\underline{-\tfrac{5}{16}}}&0&0\end{array}
	\right]_{[-2, 2]\times [0, 2]}, \quad
	\tb_3 =
	\left[ \begin {array}{ccc} 0&0&\frac{1}{16}\\
	\noalign{\medskip}0&0&0\\
	\noalign{\medskip}
	 {\bf\underline{-\tfrac{5}{16}}}&\frac{1}{2}&-{\frac {5}{16}}\\
	\noalign{\medskip}0&0&0\\
	\noalign{\medskip}\frac{1}{16}&0&0\end {array}
	\right]_{[0, 2]\times [-2, 2]},\\
&\tb_4 =
	\left[ \begin {array}{ccccc} 0&0&\frac{1}{16}&0&0\\
	\noalign{\medskip}0&0&0&0&0\\
	\noalign{\medskip}0&0&
	 {\bf\underline{-\tfrac{3}{16}}}&0&\frac{1}{16}\\
	\noalign{\medskip}0&0&0&0&0\\
	 \noalign{\medskip}\frac{1}{16}&0&0&0&0\end {array}
	\right]_{[-2, 2]\times [-2, 2]}, \qquad
	\tb_5 = \left[ \begin {array}{ccccc}
	 0&0&-{\tfrac{1}{64}}&-\tfrac{1}{32}&-{\tfrac{1}{64}}\\ \noalign{\medskip}0&-\tfrac{1}{32}&{\frac{5}{32}}&
	{\frac {5}{32}}&-\tfrac{1}{32}\\
	\noalign{\medskip}-{\frac {1}{64}}&{\frac {5}{32}}& {\bf\underline{-\tfrac{21}{32}}}
	&{\frac {5}{32}}&-{\frac {1}{64}}\\
	 \noalign{\medskip}-\tfrac{1}{32}&{\frac {5}
		{32}}&{\frac {5}{32}}&-\tfrac{1}{32}&0\\
	\noalign{\medskip}-{\frac {1}{64}}&
	-\tfrac{1}{32}&-{\frac {1}{64}}&0&0\end {array}
	\right]_{[-2, 2]\times [-2, 2]}
\end{align*}
}
with $\eps_1=\eps_2 = \eps_3= 1$ and $\eps_4 = \eps_5 = -1$. Note that $\vmo(\tb_1)=\cdots=\vmo(\tb_4)=2$ and
$\vmo(\tb_5)=4$.
Since $\sm(a,2I_2)\approx 0.885296$, $\phi\in L_2(\R^2)$ and $\{\phi;\psi^1,\ldots,\psi^{5}\}_{(\eps_1,\ldots,\eps_{5})}$ is a quasi-tight $2I_2$-framelet in $L_2(\R^2)$, where $\phi,\psi^1,\ldots,\psi^{5}$ are defined in \eqref{phi:psi} with $\dm=2I_2$ and $s=5$. Note that all the functions $\psi^1,\ldots,\psi^{5}$ have at least $2$ vanishing moments.
}\end{example}

Many more examples of quasi-tight framelets can be obtained by Theorems~\ref{thm:qtf} and~\ref{thm:qtf:vm:linear}. We finish this paper by discussing quasi-tight framelets for future research.
First of all, it is important to
explore applications of the constructed quasi-tight framelets with directionality or high vanishing moments in this paper.
This in turn will provide us further insights and motivations for constructing quasi-tight framelets with properties other than directionality or vanishing moments.
Secondly, the number of high-pass filters in Theorem~\ref{thm:qtf} is often unavoidably large, due to the simple structure of all high-pass filters. It is interesting to ask how to construct suitable low-pass filters so that the constructed quasi-tight or tight framelets in Theorem~\ref{thm:qtf} have good directionality while the number of high-pass filters is reasonably small for applications.
Thirdly, the directional quasi-tight framelets constructed in Theorem~\ref{thm:qtf} have only one vanishing moment while those obtained in Theorem~\ref{thm:qtf:vm} have high vanishing moments but lack directionality. It is very natural to consider how to combine Theorems~\ref{thm:qtf} and~\ref{thm:qtf:vm} with both directionality and vanishing moments.
Lastly, due to the great flexibility in constructing quasi-tight framelets as demonstrated in this paper, it is highly possible that one may be able to combine known/famous filters (such as filters for edge detection and image processing) into a quasi-tight framelet. Such custom-designed quasi-tight framelets may lead to interesting applications in some areas.
We shall leave these tasks as future research problems while only concentrate in this paper on the theoretical investigation and construction of quasi-tight framelets.

\end{document}